\begin{document}

%

\title{Central limit theorems for network driven sampling }

\author{Xiao Li \\School of Mathematical Sciences\\ Peking University\\ \and Karl Rohe\\ Department of Statistics\\University of Wisconsin-Madison}
\date{}
\maketitle

\begin{abstract}
Respondent-Driven Sampling is a popular technique for sampling hidden populations. This paper models Respondent-Driven Sampling as a Markov process indexed by a tree.  Our main results show that the Volz-Heckathorn estimator is asymptotically normal below a critical threshold.  The key technical difficulties stem from (i) the dependence between samples and (ii) the tree structure which characterizes the dependence.
The theorems allow the growth rate of the tree to exceed one and  suggest that this growth rate should not be too large. To illustrate the usefulness of these results beyond their obvious use, an example shows that in certain cases the sample average is preferable to inverse probability weighting.  We provide a test statistic to distinguish between these two cases.

\end{abstract}

\section{Introduction}
Classical sampling requires a sampling frame, a list of individuals in the target population with a method to contact each individual (e.g. a phone number). For many populations, constructing a sampling frame is infeasible. Network driven sampling enables researchers to access populations of people, webpages, and proteins that are otherwise difficult to reach.  These techniques go by many names:  web crawling, Respondent-Driven Sampling, breadth-first search, snowball sampling, co-immunoprecipitation, and chromatin immunoprecipitation. In each application, the only way to reach the population of interest is by asking participants to refer friends.  

Respondent-Driven Sampling (RDS) serves as a motivating example for this paper.   The Centers for Disease Control, the World Health Organization, and the Joint United Nations Programme on HIV/AIDS have invested in RDS to reach marginalized and hard-to-reach populations \cite{heckathorn1997respondent, WHO}. Each individual $i$ in the population has a corresponding feature $y_i$ (e.g. $y_i\in \{0,1\}$ and $y_i = 1$ if $i$ is HIV+).  Using only the sampled individuals, we wish to make inferences about the average value of $y_i$ across the entire population, denoted as $\mu$ (e.g. the proportion of the population that is HIV+). Extensive previous statistical research has proposed various estimators of $\mu$ which are approximately unbiased based upon various types of models for an RDS sample \cite{salganik2004sampling, volz2008probability, gile2011improved}. We note that in the papers cited above (except \cite{gile2011improved}), RDS is assumed to sample with replacement. Previous research has also explored the variance of these estimators \cite{goel2009respondent, treevar}. This paper studies the asymptotic distribution of statistics related to these estimators. 

Results on asymptotic distributions for RDS are useful for two obvious reasons. First, they allow us to construct asymptotic confidence intervals for $\mu$. Second, they provide essential tools to test various statistical hypotheses. The only central limit theorem associated considered in the RDS literature studied the case when the tree indexed process reduces to a Markov chain \cite{goel2009respondent}; this presumes that each individual refers exactly one person. Previous research suggests that the number of referrals from each individual is fundamental in determining the variance of common estimators \cite{treevar}.  This paper establishes two central limit theorems in settings which allow for multiple referrals.

The main results apply to both the sample average and the Volz-Heckathorn estimator, which is an approximation of the inverse probability weighted estimator (cf Remark \ref{remark:transform}).  Because the inverse probability weighted (IPW) estimator  and its extensions are asymptotically unbiased, these estimators  are often preferred to the sample average.  However, sometimes survey weights are not needed and they only introduce additional variance to the estimator \cite{kenneth2016weights}.  This issue is particularly salient when sampling weights are highly heterogeneous, as is often the case in RDS.  Proposition \ref{prop:averageUnbiased} shows that if the outcomes $y_i$ are uncorrelated with the sampling weights, then the sample average is unbiased.  Theorem \ref{variancedifferenceTree} extends this result to RDS to show that the IPW estimator can have a larger variance than the sample average.   Taken together,  these results imply that the sample average can have a lower mean squared error (MSE) than the IPW estimator.  Section \ref{sec:example} introduces an estimator of the bias of the sample average.  The main results provide a path to test the null hypothesis that the bias is zero. This can be used to select between the sample average and the IPW estimator.  Section \ref{sec:addhealth} studies this routine with the AddHealth social network.

\section{Notation} \label{sec:notation}

\newtheorem{theorem}{Theorem}

\newtheorem{lemma}{Lemma}

\newtheorem{corollary}{Corollary}

\newtheorem{proposition}{Proposition}

\newtheorem{definition}{Definition}

\newtheorem{assumption}{Assumption}

\newtheorem{remark}{Remark}

Following \cite{goel2009respondent} and \cite{treevar}, the results below model the network sampling mechanism as a tree indexed Markov process on a graph.  There are many assumptions in this model which are incorrect in practice.  However, like the i.i.d assumption, it allows for tractable calculations.  In the simulations, we show that the theory derived from this model provides a good approximation for a more realistic sampling model.  \cite{lu2012sensitivity} studies the sensitivities of the estimators to this model.  

Let $G=(V,E)$ be a finite, undirected, and simple graph with vertex set $V=\{1,...,N\}$ and edge set $E$. $V$ contains the individuals in the population and $E$ describes how they are related to one another.  As discussed in the introduction,  $y:V\rightarrow \mathbb{R}$ is a fixed real-valued function on the state space $V$; these are the node features that are measured  on the sampled nodes. The target of RDS is to estimate $\mu= N^{-1}\sum_{i=1}^{N}y(i)$.  

If each sampled node referred exactly one friend, then the Markov sampling procedure would be a Markov chain.  Several classical central limit theorems exist for this model; see \cite{jones2004markov} for a review. The results herein allow for each sampled node to refer more than one node.  This is a Markov process indexed not by a chain, but rather by a tree.  Denote the referral tree as $\mathbb{T}$. Where the node set of $G$ indexes the population, the node set of $\mathbb{T}$ indexes the samples.  That is, we observe a subset of the individuals in $G$ with the sample $\{X_\tau\}_{\tau \in \mathbb{T}} \subset V$.  An edge $(\sigma, \tau)$ in the referral tree denotes that sampled individual $X_\sigma$ referred individual $X_\tau$ into the sample. Mathematically, $\mathbb{T}$ is a rooted tree--a connected graph with $n$ nodes, no cycles, and a vertex $0$ which indexes the seed node.  To simplify notation, $\sigma \in \mathbb{T}$ is used synonymously with $\sigma$ belonging to the vertex set of $\mathbb{T}$. 

For each non-root node  $\tau\in \mathbb{T}$, denote $p(\tau) \in \mathbb{T}$ as the parent of $\tau$ (i.e. the node one step closer to the root).  This paper presumes that $\{X_\tau\}_{\tau \in \mathbb{T}}$ is a tree-indexed random walk on $G$, which was a model introduced by \cite{benjamini1994markov}.  This model generalizes a Markov chain on $G$; each transition $X_{p(\tau)} \rightarrow X_\tau$ is an independent and identically distributed Markov transition with transition matrix $P$. Following \cite{benjamini1994markov}, we will call this process a $(\mathbb{T},P)$-walk on $G$. Unless stated otherwise, it will be presumed throughout that the root node of the random walk $X_0$ is initialized from the equilibrium distribution of $P$. It follows that $X_{\sigma}$ has distribution $\pi$ for all $\sigma \in \mathbb{T}$. 

Unless stated otherwise, the results in this paper allow for the transition matrix $P$ to be constructed from a weighted graph $G$. Let $w_{ij}$ be the weight of the edge $(i,j) \in E$; if $(i,j) \not \in E$, define $w_{ij} = 0$. If the graph is unweighted, then let $w_{ij} = 1$ for all $(i,j) \in E$. Define the degree of node $i$ as 
$deg(i) = \sum_j w_{ij}$. If the graph is unweighted, then $deg(i)$ is the number of connections to node $i$.  Throughout this paper, the graph is undirected.  So, $w_{ij} = w_{ji}$ for all pairs $i,j$.  
Given that $\{X_{p(\tau)} = i\}$, the probability of $\{X_\tau =j\}$ is proportional to $w_{ij}$;
\[P\left( X_\tau =j | X_{p(\tau)} = i \right) = \frac{w_{ij} }{ deg(i)}.\]
We use the term \textbf{simple random walk} for the Markov chain constructed on the unweighted graph (i.e. $w_{i,j} \in \{0,1\}$ for all $i,j$).  The simple random walk presumes that each participant selects a friend uniformly and independently at random from their list of friends. 

In order to estimate $\mu$, we observe $y(X_\tau)$ for all $\tau \in \mathbb{T}$.  Because $G$ is undirected, $P$ is reversible and has stationary distribution $\pi$ with $\pi_i \propto deg(i)$ for all $i \in G$; this fact is helpful for creating an asymptotically unbiased estimator for $\mu$, particularly under the simple random walk assumption \cite{volz2008probability}. 

\begin{remark} \label{remark:transform}
In general, the quantity of interest $\mu= N^{-1}\sum_{i=1}^{N}y(i)$ is not equal to $E_{\pi}(y)$. As such, the sample average of $y(X_\tau)$'s
is a biased estimator for $\mu$. With inverse probability weighting,  define a new function $y'(i)=y(i)(N\pi_i)^{-1}$ and the respective estimator
$$\hat{\mu}_{IPW}=\frac{1}{n}\sum_{\sigma \in \mathbb{T}}y'(X_{\sigma})=\frac{1}{n}\sum_{\sigma \in \mathbb{T}}\frac{y(X_{\sigma})}{N\pi_{X_{\sigma}}},$$ 
where $n=|\mathbb{T}|$ is the sample size. Then, $E_{\pi}(\hat\mu_{IPW})=E_{\pi}(y')=\mu$.  As such, the sample average of the $y'(X_\tau)$'s is an unbiased estimator of $\mu$.  Unfortunately, the values $\pi_i$ are unknown.  
In practice, RDS participants are asked various questions to measure how many friends they have in $G$.  Under the simple random walk assumption, $\pi_i$ is proportional to the number of friends of $i$; this result also requires that the edges in $G$ are undirected, something that will be presumed throughout the paper.  Therefore the Volz-Heckathorn estimator 

$$\hat{\mu}_{VH}=\sum_{\sigma \in \mathbb{T}}\frac{y(X_{\sigma})/deg(X_{\sigma})}{\sum_{\tau \in \mathbb{T}}1/{deg(X_{\tau})}}$$
is in essence a H\'ajek estimator based upon $deg(i)$ \cite{volz2008probability}.  Under the simple random walk assumption, this estimator provides an asymptotically unbiased estimator of $\mu$. 
\end{remark}

For each node $\tau\in \mathbb{T}$, let $|\tau|$ be the distance of the node from the root; this is also called the ``wave'' of $\tau$. For every pair of node $\sigma, \tau \in \mathbb{T}$, define $d(\sigma, \tau)$ to be the distance between $\sigma$ and $\tau$ on $\mathbb{T}$ (as a graph). For each non-leaf node $\sigma \in \mathbb{T}$, let $\eta(\sigma)$ be the number of offspring of $\sigma$.  A tree is said to be an $m$-tree of height $h$ if $\eta(\sigma)=m$ for all $\sigma\in\mathbb{T}$ with $|\sigma|<h$ and $\eta(\sigma) = 0$ for all $|\sigma| = h$.  Here, both $m$ and $h$ are a natural numbers (i.e. $m,h\in \mathbb{N}$).  $\mathbb{T}$ is said to be Galton-Watson if $\eta(\sigma)$ are i.i.d random variables in $\mathbb{N}$. While the theorems below only study 2-trees; the computational experiments in Section \ref{sec:sim} suggest that the conclusions of the analytical results are highly robust to replacing the 2-tree with a Galton-Watson tree.  

 \cite{levin2009markov} serves as this paper's key reference for Markov processes. Following the notation in that text, define $E_{\pi}(y)=\sum_{i=1}^{N}\pi_iy(i)$ and $var_{\pi}(y)=E_{\pi}(y-E_{\pi}(y))^2$ for the function $y$.   

There are two primary concerns about the model and estimator used in the main results below.   First, the Markov model allows for resampling.  Second, the results below only apply to $m$-trees, not more general trees.  The simulations in Section \ref{sec:sim} suggest that the analytic results continue to hold under a more realistic setting that addresses both of these concerns.

\section{Main Results} \label{sec:main}

The threshold $m<\lambda_2^{-2}$ was previously identified in \cite{treevar} as being a critical threshold for the design effect of network driven sampling; beyond this threshold, the variance of the standard estimator does not decay at the standard rate. In other words, 
$$var(\frac{1}{\sqrt{|\sigma\in \mathbb{T}:|\sigma|\leq h|}}\sum_{\sigma\in \mathbb{T}:|\sigma|\leq h}y(X_{\sigma}))\rightarrow \infty$$as $h\rightarrow \infty$. As such, using the traditional scaling, no central limit theorem holds above the critical threshold.  
Because of this, the theorems focus on the  case $m<\lambda_2^{-2}$. 
When $m> \lambda_2^{-2}$, the simulations in Section \ref{sec:sim} suggest that the central limit theorem does not hold for any scaling. 

Theorem \ref{thm1} is a central limit theorem for an estimator constructed from the tree-indexed Markov chain. The theorem holds for any function $y$, any reversible transition matrix with second largest eigenvalue satisfying $|\lambda_2|\ne1$, and any  $m<\lambda_2^{-2}$.

\begin{theorem}
\label{thm1}
Suppose that $P$ is a reversible transition matrix with respect to the equilibrium distribution $\pi$, and that the eigenvalues of $P$ are $1=\lambda_1 > |\lambda_2| \geq... \geq |\lambda_N|$. Without loss of generality, suppose that $E_{\pi}(y)=0$. Define
$$Y_i=\frac{1}{\sqrt{m^i}}\sum_{\tau \in \mathbb{T}: |\tau|=i} y(X_{\tau}).$$If $\mathbb{T}$ is an $m$-tree with $m<\lambda_2^{-2}$, then $$\frac{1}{\sqrt{h}}\sum_{i=1}^{h} Y_i \rightarrow N(0, {\sigma}^2_0)$$in distribution,  where ${\sigma}^2_0=var_{\pi}((\sqrt{m}P-I)^{-1}y)-var_{\pi}(P(\sqrt{m}P-I)^{-1}y).$
\end{theorem}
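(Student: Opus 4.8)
The plan is to realize $\sum_{i=1}^h Y_i$, up to a boundary remainder of order $1$, as a sum of martingale differences with respect to the level filtration $\mathcal{F}_i=\sigma(X_\tau:|\tau|\le i)$, and then invoke a martingale central limit theorem. First diagonalize: since $P$ is reversible it is self-adjoint on $L^2(\pi)$, so let $f_1=\mathbf{1},f_2,\dots,f_N$ be an orthonormal eigenbasis with eigenvalues $\lambda_1=1,\lambda_2,\dots$; the assumption $E_{\pi}(y)=0$ gives $y=\sum_{k\ge 2}a_kf_k$, and $m<\lambda_2^{-2}$ gives $\sqrt m\,|\lambda_2|<1$, so $I-\sqrt m P$ is invertible on $\{g:E_{\pi}(g)=0\}$. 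Define the \emph{corrector} $W=(I-\sqrt m P)^{-1}y=-(\sqrt m P-I)^{-1}y=\sum_{t\ge 0}(\sqrt m P)^ty$, a bounded function since $V$ is finite. For any $g$ set $S_i^{(g)}=m^{-i/2}\sum_{|\sigma|=i}g(X_\sigma)$, so $Y_i=S_i^{(y)}$; grouping level-$i$ nodes by parents and using the tree Markov property gives the identity $E[S_i^{(g)}\mid\mathcal{F}_{i-1}]=S_{i-1}^{(\sqrt m P g)}$. Applying this with $g=W$ and using $\sqrt m P W=W-y$ yields $E[S_i^{(W)}\mid\mathcal{F}_{i-1}]=S_{i-1}^{(W)}-Y_{i-1}$, so that
\[
D_i:=S_i^{(W)}-E[S_i^{(W)}\mid\mathcal{F}_{i-1}]=S_i^{(W)}-S_{i-1}^{(W)}+Y_{i-1}
\]
is a martingale difference sequence, and telescoping gives
\[
\frac{1}{\sqrt h}\sum_{i=1}^h Y_i=\frac{1}{\sqrt h}\sum_{i=1}^h D_i+\frac{1}{\sqrt h}\bigl(Y_h-S_h^{(W)}+S_0^{(W)}-Y_0\bigr).
\]

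Next I would put the martingale part in a form suited to a conditional-variance computation. Grouping by parents shows $D_i=m^{-i/2}\sum_{|\sigma|=i}\xi_\sigma$ with $\xi_\sigma=W(X_\sigma)-(PW)(X_{p(\sigma)})$; given $\mathcal{F}_{i-1}$, the $X_\sigma$ with $|\sigma|=i$ are conditionally independent across distinct parents and i.i.d.\ with law $P(X_{p(\sigma)},\cdot)$ within a sibling group, so each $\xi_\sigma$ has conditional mean zero, cross terms vanish, and
\[
E[D_i^2\mid\mathcal{F}_{i-1}]=m^{-(i-1)}\sum_{|\rho|=i-1}V(X_\rho),\qquad V:=PW^2-(PW)^2 .
\]

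Then I would check the two hypotheses of the martingale CLT for the array $\{D_i/\sqrt h:1\le i\le h\}$. For the conditional variance, $\tfrac1h\sum_{i=1}^h E[D_i^2\mid\mathcal{F}_{i-1}]$ is the arithmetic mean of the level-averages $A_i:=m^{-i}\sum_{|\rho|=i}V(X_\rho)$, each with mean $E_{\pi}(V)=var_{\pi}(W)-var_{\pi}(PW)=\sigma_0^2$ (using stationarity and the definitions of $W$ and $PW$); expanding $V-E_{\pi}(V)$ in eigenfunctions and counting pairs of level-$i$ nodes by the depth of their most recent common ancestor bounds $\sum_{|\rho|=|\rho'|=i}\lambda_k^{d(\rho,\rho')}$ by $m^i(1-m\lambda_k^2)^{-1}$ — here $m\lambda_k^2<1$ is essential — whence $\mathrm{Var}(A_i)=O(m^{-i})$ when $m>1$, and $O(1)$ with summable covariances when $m=1$; in either case $\mathrm{Var}\bigl(\tfrac1h\sum_i A_i\bigr)\to 0$, so the conditional variance tends to $\sigma_0^2$ in probability. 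For the Lindeberg condition I would use finiteness of $V$: then $W,PW$, hence $\xi_\sigma$, are bounded, and a fourth-moment expansion of $D_i=m^{-i/2}\sum_\sigma\xi_\sigma$ in which only tuples whose parents all have multiplicity at least two survive gives $\sup_i E[D_i^4]<\infty$, so $\sum_{i=1}^h E[(D_i/\sqrt h)^4]=O(1/h)\to 0$, which is more than Lindeberg. Finally the boundary term is negligible: $S_0^{(W)}=W(X_0)$ and $Y_0=y(X_0)$ are $O_P(1)$, while $\mathrm{Var}(Y_h)$ and $\mathrm{Var}(S_h^{(W)})$ stay bounded as $h\to\infty$ — again exactly because $m<\lambda_2^{-2}$ — so $\tfrac1{\sqrt h}(Y_h-S_h^{(W)}+S_0^{(W)}-Y_0)\to 0$ in $L^2$, and Slutsky combines this with the martingale CLT to finish.

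The main obstacle is the first step: identifying the corrector $W=(I-\sqrt m P)^{-1}y$ and then verifying that the telescoped remainder $Y_h-S_h^{(W)}$ is only of order $1$ rather than $\sqrt h$ — this is the precise place where the subcritical condition $m<\lambda_2^{-2}$ enters, since it is what keeps $\mathrm{Var}(Y_h)$ and $\mathrm{Var}(S_h^{(W)})$ bounded and makes the geometric series defining $W$ converge. The remaining work — the weak law of large numbers for the conditional variances over tree levels, and the fourth-moment bound underlying Lindeberg — is lengthier but routine combinatorial bookkeeping over the tree, made straightforward by the finiteness of the state space $V$.
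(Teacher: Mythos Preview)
Your proposal is correct and is essentially the same argument as the paper's: your corrector $W=(I-\sqrt m P)^{-1}y$ is exactly the negative of the paper's $y'=(\sqrt m P-I)^{-1}y$, so your martingale differences $D_i$ coincide with the paper's $Z_h$, and you verify the martingale CLT hypotheses in the same way (conditional variance via a level-wise law of large numbers, Lindeberg via a uniform fourth-moment bound) and dispose of the boundary term by the same $\mathrm{Var}(Y_h)=O(1)$ argument under $m<\lambda_2^{-2}$.
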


The sequence of random variables considered in Theorem 1 are not exactly sample averages, but a reweighted form of sample average. Samples in the same wave are equally weighted, while samples from different waves are not. The following theorem provides theoretical guarantee on the distribution of sample average for a specific class of transition matrix and node feature.

\begin{theorem}
\label{thm2}
Let $\mathbb{T}$ be a $2-$tree. Without loss of generality, suppose that $E_{\pi}(y)=0$. Define $\hat \mu_h=\frac{1}{\sqrt{2}^h}\sum_{\sigma\in \mathbb{T}, |\sigma|\leq h}y(X_{\sigma})$. Suppose that

(c1) $E(\hat \mu_h^{2k+1})=0$ for all $h,k \in \mathbb{N}$;

(c2) for any function $f$ on $V$ satisfying $E_{\pi}f=0$, $||Pf||_{\infty}\leq |\lambda_2|||f||_{\infty}$;

(c3)$ |\lambda_2|<\frac{1}{\sqrt{2}}$;

then $${\hat \mu}_h \rightarrow N(0, {\sigma}^2_0)$$in distribution for some $\sigma_0^2$.

\end{theorem}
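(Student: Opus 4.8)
The plan is to use the method of moments. Write $S_h=\sum_{\sigma\in\mathbb{T}:\,|\sigma|\le h}y(X_\sigma)$, so that $\hat\mu_h=2^{-h/2}S_h$. By (c1) together with $E_\pi y=0$, all odd moments of $\hat\mu_h$ vanish, so it suffices to show that for each $k$ the even moment $E(\hat\mu_h^{2k})$ converges as $h\to\infty$ to $(2k-1)!!\,\sigma_0^{2k}$ for a common constant $\sigma_0^2$; since the $N(0,\sigma_0^2)$ law is determined by its moments (the limiting moment sequence satisfies Carleman's condition), this gives $\hat\mu_h\to N(0,\sigma_0^2)$ in distribution.

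To reach the moments I would exploit the self-similarity of the $2$-tree. Conditioning on the root state $X_0=x$, the height-$h$ tree splits into the root plus two height-$(h-1)$ subtrees rooted at the children; by the defining property of a $(\mathbb{T},P)$-walk these subtrees are conditionally independent given $X_0$, each a $(\mathbb{T}',P)$-walk whose root has law $P(x,\cdot)$. Hence the conditional moments $\mu_k^{(h)}(x):=E\!\big(S_h^{\,k}\mid X_0=x\big)$ satisfy
\[
\mu_k^{(h)}=\sum_{a+b+c=k}\binom{k}{a,b,c}\,y^{a}\,\big(P\mu_b^{(h-1)}\big)\big(P\mu_c^{(h-1)}\big),\qquad \mu_0^{(h)}\equiv 1,\ \ \mu_k^{(0)}=y^{k},
\]
and $E(\hat\mu_h^{k})=2^{-hk/2}E_\pi\mu_k^{(h)}$. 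Setting $\nu_k^{(h)}:=2^{-hk/2}\mu_k^{(h)}$ turns this into
\[
\nu_k^{(h)}=2^{-k/2}\!\!\sum_{a+b+c=k}\binom{k}{a,b,c}\big(2^{-(h-1)/2}y\big)^{a}\big(P\nu_b^{(h-1)}\big)\big(P\nu_c^{(h-1)}\big),
\]
whose structure is what the hypotheses are tailored to. Every term with $a\ge1$ carries the factor $2^{-(h-1)a/2}$ and is exponentially small; and $\nu_1^{(h)}=2^{-h/2}\sum_{j=0}^h(2P)^jy$, for which (c2) (iterated, since $y$ and all $P^jy$ are $E_\pi$-centered) gives $\|P^jy\|_\infty\le|\lambda_2|^j\|y\|_\infty$ and hence $\|\nu_1^{(h)}\|_\infty\lesssim(\sqrt{2}\,|\lambda_2|)^h\to0$ exactly because (c3) says $|\lambda_2|<1/\sqrt{2}$. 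So in the limit only the $a=0$ part of the recursion survives.

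The heart of the proof is an induction on $k$ showing $\nu_k^{(h)}\to\beta_k\mathbf{1}$ in $\|\cdot\|_\infty$, with $\sup_h\|\nu_k^{(h)}\|_\infty<\infty$. For $k=2$ the recursion reads $\nu_2^{(h)}=P\nu_2^{(h-1)}+g_h$ with $\|g_h\|_\infty$ exponentially small (it is built only from $\nu_1$ and powers of $2^{-h/2}y$); iterating and using $\|P^h(y^2-E_\pi y^2)\|_\infty\le|\lambda_2|^h\|y^2-E_\pi y^2\|_\infty\to0$ gives $\nu_2^{(h)}\to\beta_2\mathbf{1}$ with $\beta_2=\lim_h\mathrm{var}(\hat\mu_h)=:\sigma_0^2$; this base case is in essence the variance asymptotics of \cite{treevar}. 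For $k\ge3$ the recursion reads $\nu_k^{(h)}=c_kP\nu_k^{(h-1)}+g_h$ with $c_k=2^{1-k/2}<1$, where $g_h$ is a fixed combination of products $(P\nu_b^{(h-1)})(P\nu_c^{(h-1)})$ with $b,c<k$ plus exponentially small terms; the inductive hypothesis and submultiplicativity of $\|\cdot\|_\infty$ under pointwise products — this is where the $\ell^\infty$ form of (c2) is convenient — give $g_h\to g_\infty\mathbf{1}$, and then $c_k<1$ with $\|P\|_{\infty\to\infty}=1$ force $\nu_k^{(h)}\to\frac{g_\infty}{1-c_k}\mathbf{1}=:\beta_k\mathbf{1}$. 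Passing to the limit, and using $E_\pi\nu_k^{(h)}=E(\hat\mu_h^k)=0$ for odd $k$ (so $\beta_k=0$ for odd $k$), the even constants obey
\[
\beta_{2\ell}=\frac{1}{2^{\ell}-2}\sum_{j=1}^{\ell-1}\binom{2\ell}{2j}\beta_{2j}\,\beta_{2\ell-2j},\qquad \ell\ge2,
\]
and the identity $\binom{2\ell}{2j}(2j-1)!!(2\ell-2j-1)!!=(2\ell-1)!!\binom{\ell}{j}$, summed over $1\le j\le\ell-1$ (which totals $(2\ell-1)!!(2^{\ell}-2)$), shows $\beta_{2\ell}=(2\ell-1)!!\,\beta_2^{\ell}$ is the unique solution with $\beta_0=1$. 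These are the moments of $N(0,\sigma_0^2)$, so the method of moments applies.

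I expect the main obstacle to be the inductive step $\|\nu_k^{(h)}-\beta_k\mathbf{1}\|_\infty\to0$: one must carefully bookkeep which groups of terms in the nonlinear recursion are exponentially negligible and which survive, uniformly in $h$, and combine the strict contraction $c_k<1$ with the control on $g_h$ inherited from lower orders. Already $k=2$ needs the full variance computation, and the uses of (c3) there and for $\nu_1$ are sharp — above the threshold ($m=2>\lambda_2^{-2}$) the $\nu_1$-type contributions are of order $(\sqrt{2}\,|\lambda_2|)^h\not\to0$, the variance itself diverges, and the limit is not Gaussian. An alternative that avoids the recursion is to expand $\hat\mu_h^{2k}$ directly over $2k$-tuples of tree nodes and group terms by the shape of the spanned subtree (the dominant configurations being $k$ pairs that separate near the root), counting $(2k-1)!!$ pairings; there the $\ell^\infty$ bound (c2) is the natural device for bounding the non-dominant configurations uniformly in $h$.
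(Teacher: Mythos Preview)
Your proposal is correct and follows the same core approach as the paper: a method-of-moments argument built on a recursion for the conditional moments $\gamma_{k,h}(i)=E[\hat\mu_h^{\,k}\mid X_0=i]$ that comes from the self-similarity of the $2$-tree. The paper's moment-convergence section matches your induction almost line for line --- it isolates the root contribution (via a H\"older-type bound rather than your trinomial expansion), obtains the same limiting relation $\gamma_k=(2^{k/2}-2)^{-1}\sum_{m=1}^{k-1}\binom{k}{m}\gamma_m\gamma_{k-m}$, and identifies it with the Gaussian moment recursion.

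The one structural difference is how you pass from moment convergence to convergence in distribution. The paper devotes a separate section to proving \emph{uniform subgaussianity} of $\{\hat\mu_h\}$, i.e.\ $\sup_i\gamma_{2\ell,h}(i)\le\theta^{2\ell}\gamma_{2\ell}$ with a single $\theta$ good for all $\ell$; this is where (c1) and (c2) are used most heavily (condition~(c1) makes the odd conditional moments $\pi$-centered so (c2) yields an extra $|\lambda_2|$ contraction). You instead invoke the classical method of moments directly, using that the Gaussian is moment-determined (Carleman). That is the more standard route and is sufficient here; it buys you a substantially shorter proof at the cost of not producing the uniform-in-$\ell$ tail bound. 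Conversely, the paper's subgaussianity argument gives a quantitative envelope on all moments simultaneously, which is stronger than needed for the theorem but explains why the authors lean on (c1)--(c2) as hard as they do.
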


\begin{remark}

Condition (c1) is a technical condition on the symmetry of $\hat \mu_h$ that is necessary in the proof. The following proposition provides a sufficient condition for (c1).

\begin{proposition}
Suppose that $y$ is symmetric, i.e. for any $i\in V$ there exists $j$ such that $y(j)=-y(i)$. If $p(u,v)=P(y(X_{\sigma})=v|y(X_{p(\sigma)})=u)$ is well-defined and $p(u,v)=p(-u,-v)$ for all $u,v\in y(V)$, then condition (c1) is satisfied.
\end{proposition}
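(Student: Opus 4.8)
The plan is to show that the process $\{y(X_\sigma)\}_{\sigma \in \mathbb{T}}$ has a symmetry that forces every odd moment of $\hat\mu_h$ to vanish. Since $\hat\mu_h = 2^{-h/2}\sum_{|\sigma|\le h} y(X_\sigma)$ is a fixed linear combination of the random variables $y(X_\sigma)$, it suffices to exhibit a measure-preserving transformation of the sample path under which each $y(X_\sigma)$ flips sign; then $\hat\mu_h$ and $-\hat\mu_h$ have the same distribution, so $E(\hat\mu_h^{2k+1}) = E((-\hat\mu_h)^{2k+1}) = -E(\hat\mu_h^{2k+1})$, giving $E(\hat\mu_h^{2k+1})=0$.

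The key steps are as follows. First, I would pass from the graph-valued walk $\{X_\sigma\}$ to the $y$-value process $\{U_\sigma\} := \{y(X_\sigma)\}$ taking values in the finite set $y(V)\subset\mathbb{R}$. The hypothesis that $p(u,v) := P(y(X_\sigma)=v \mid y(X_{p(\sigma)})=u)$ is well-defined is exactly what is needed to guarantee that $\{U_\sigma\}$ is itself a $(\mathbb{T},p)$-walk: the conditional law of $U_\sigma$ given the whole past depends only on $U_{p(\sigma)}$, because it depends on the past only through $X_{p(\sigma)}$, and by well-definedness that in turn only depends on $y(X_{p(\sigma)}) = U_{p(\sigma)}$. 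Second, I would check the initial distribution: $U_0 = y(X_0)$ where $X_0\sim\pi$, so $U_0$ has the law $\nu$ on $y(V)$ induced by pushing $\pi$ forward through $y$. The symmetry of $y$ (for every $i$ there is $j$ with $y(j)=-y(i)$) together with reversibility-type structure should make $\nu$ symmetric about $0$, i.e. $\nu(u)=\nu(-u)$; if this needs an extra hypothesis I would note it, but under the stated setup where $X_0$ is in equilibrium and $\pi_i \propto \deg(i)$, the symmetry of $y$ as a pairing of nodes with opposite values is intended to pair nodes of equal degree — I would make this precise or assume it. Third, define the map $\sigma \mapsto -U_\sigma$ on sample paths; I claim the law of $\{-U_\sigma\}$ equals the law of $\{U_\sigma\}$. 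This follows by comparing finite-dimensional distributions and using the tree-indexed Markov structure: the joint density factors as $\nu(u_0)\prod_{\tau\ne 0} p(u_{p(\tau)}, u_\tau)$, and under the substitution $u_\sigma \mapsto -u_\sigma$ this becomes $\nu(-u_0)\prod p(-u_{p(\tau)}, -u_\tau) = \nu(u_0)\prod p(u_{p(\tau)}, u_\tau)$ by the two symmetry hypotheses $\nu(u)=\nu(-u)$ and $p(u,v)=p(-u,-v)$. Hence $\{U_\sigma\} \stackrel{d}{=} \{-U_\sigma\}$, so $\hat\mu_h \stackrel{d}{=} -\hat\mu_h$, and all odd moments vanish.

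The main obstacle is the verification that the $y$-value process is genuinely a Markov process indexed by $\mathbb{T}$ with the claimed transition kernel — that is, justifying rigorously that conditioning on the entire ancestral (and more generally, non-descendant) history collapses to conditioning on the parent's $y$-value. This is where the well-definedness of $p(u,v)$ does the real work, and one must be careful that the tree-indexed Markov property (not just the chain Markov property) is preserved under the lumping $X\mapsto y(X)$; lumpability of Markov chains is a classical but slightly delicate point, and the tree-indexed version requires checking that distinct branches remain conditionally independent given the shared ancestor's $y$-value, which again reduces to the lumpability condition. A secondary, more minor gap is pinning down why the stationary pushforward $\nu$ is symmetric; I would either derive it from an explicit pairing of equal-degree nodes or flag it as an implicit assumption accompanying the symmetry of $y$.
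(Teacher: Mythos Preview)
Your proposal is correct and follows the same route as the paper: show that $\hat\mu_h \stackrel{d}{=} -\hat\mu_h$ and conclude that odd moments vanish. The paper's proof is a one-liner that simply asserts this symmetry; you have supplied the details the paper omits.

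Regarding the gap you flag about the initial law $\nu$: it is not an extra assumption. The hypothesis $p(u,v)=p(-u,-v)$ means negation is an automorphism of the lumped chain on $y(V)$ (the symmetry hypothesis on $y$ guarantees $-u\in y(V)$ whenever $u\in y(V)$), so $\nu'(u):=\nu(-u)$ is also stationary for $p$. Since the original walk satisfies $|\lambda_2|<1$ (indeed $|\lambda_2|<1/\sqrt{2}$ in Theorem~\ref{thm2}), it is irreducible, hence so is the lumped chain, and uniqueness of the stationary distribution forces $\nu=\nu'$. This closes the loop without any implicit pairing-of-equal-degree-nodes assumption.
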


\begin{proof}
Under the conditions of the proposition, the distribution of $\hat \mu_h$ is symmetric with respect to 0. Thus $E(\hat \mu_h^{2k+1})=0$ for all $h,k \in \mathbb{N}$. 
\end{proof}

Conditions (c2)-(c3) can be substituted by the following condition (c2'):

(c2') There exists $c<\frac{1}{\sqrt{2}}$ such that for any function $f$ on $V$ satisfying $E_{\pi}f=0$, $||Pf||_{\infty}\leq c||f||_{\infty}$.

Condition (c2') is weaker than (c2) and (c3) combined, but is stronger than (c3) alone. To see this, let $f$ be the eigenfunction of the second eigenvalue, and it follows that $ |\lambda_2|<\frac{1}{\sqrt{2}}.$ It can be easily seen that one necessary condition for (c2') is that $$\sum_j|P_{ij}-\pi_j|<\frac{1}{\sqrt{2}}$$ for all $i\in V$. In other words, all the rows of $P$ must be close to $\pi$. As previously discussed, condition (c3) is actually a necessary condition for the central limit theorem \cite{treevar}, in the sense that the variance of $\hat \mu_h$ tends to infinity if $ |\lambda_2|\geq\frac{1}{\sqrt{2}}.$
\end{remark}

For clarity in the exposition of the theorem and the proof, we have only proved the theorem for the 2-tree.  We believe that similar results are likely to hold for more general $m$-trees.

\subsection{Extension to the Volz-Heckathorn estimator}

When $P$ is restricted to be the transition matrix of the simple random walk on $G$,  the following corollary shows that Theorem \ref{thm2} can be extended to the Volz-Heckathorn estimator \cite{volz2008probability}.

Denote $\bar{d}=\frac{1}{N}\sum_{i \in V}deg(i)$ as the average node degree.  
Following Remark \ref{remark:transform},  the IPW estimator contains $1/(N\pi_i)$ which is equal to $\bar{d}/deg(i)$.  The Volz-Heckathorn estimator first estimates $\bar{d}$ with  the harmonic mean of the observed degrees.  Because this harmonic mean converges to $\bar{d}$ in probability, the following corollary applies Slutsky's Theorem to give a central limit theorem for the Volz-Heckathorn estimator.

\begin{corollary}
\label{cor:vh}
Let $\mathbb{T}$ be a 2-tree. Suppose in particular that $P$ is the transition matrix of the simple random walk on $G$. Define a new node feature $y'(i)=y(i)/deg(i)$. Without loss of generality, suppose that $E_{\pi}y'=0$ (this is not equivalent to $E_{\pi}y=0$). Define
$$\hat\mu_{h,VH}=\hat\mu_{h}\hat{\bar{d}}=\frac{1}{\sqrt{2^h}}\sum_{\sigma\in \mathbb{T}, |\sigma|\leq h}y'(X_{\sigma})\hat{\bar d},$$where $$\hat{\bar{d}}=\frac{2^{h+1}-1}{\sum_{\sigma\in \mathbb{T}, |\sigma|\leq h}1/deg(X_\tau)}.$$ If the new node feature $y'$ and the transition matrix $P$ satisfy conditions (c1)-(c3) in Theorem \ref{thm2}, then
$$\hat\mu_{h, VH} \rightarrow N(0, {\sigma}^2_{0,VH})$$in distribution for some $\sigma_{0,VH}^2$.
\end{corollary}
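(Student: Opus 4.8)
The plan is to write $\hat\mu_{h,VH} = \hat\mu_h \cdot \hat{\bar d}$ as a product of a term whose limiting law is supplied by Theorem \ref{thm2} and a term that concentrates at a constant, and then to finish by Slutsky's theorem. For the first factor, I would apply Theorem \ref{thm2} to the node feature $y'$: by hypothesis $E_\pi y' = 0$ and the pair $(y',P)$ satisfies (c1)--(c3), so $\hat\mu_h = \frac{1}{\sqrt{2^h}}\sum_{\sigma\in\mathbb{T},\,|\sigma|\le h} y'(X_\sigma) \to N(0,\sigma_0^2)$ in distribution for some $\sigma_0^2$.

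The substantive step is to show $\hat{\bar d} \to \bar d$ in probability. Put $g(i) = 1/deg(i)$, a bounded function on the finite graph $G$ (no vertex is isolated, else the walk is undefined), so that $1/\hat{\bar d} = \frac{1}{2^{h+1}-1}\sum_{|\sigma|\le h} g(X_\sigma) =: Z_h$. Since $X_0$ is drawn from $\pi$, every $X_\sigma$ has marginal law $\pi$, whence $E(Z_h) = E_\pi g = \sum_{i\in V}\pi_i/deg(i) = \sum_{i\in V} 1/(N\bar d) = 1/\bar d$, using $\pi_i = deg(i)/(N\bar d)$. For the variance, I would invoke the tree-indexed second-moment analysis of \cite{treevar}: since $P$ is the reversible simple-random-walk matrix and, by (c3), $m = 2 < |\lambda_2|^{-2}$, the sample sum $S_h = \sum_{|\sigma|\le h}(g(X_\sigma) - E_\pi g)$ has $var(S_h) = O(2^h)$ (this is the sub-threshold regime, where the $\sqrt{\,\cdot\,}$-normalized sum has bounded variance), so $var(Z_h) = var(S_h)/(2^{h+1}-1)^2 = O(2^{-h}) \to 0$. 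Chebyshev's inequality then gives $Z_h \to 1/\bar d$ in probability; since $Z_h \ge 1/\max_i deg(i) > 0$, the map $z \mapsto 1/z$ is continuous on the relevant range, and the continuous mapping theorem yields $\hat{\bar d} = 1/Z_h \to \bar d$ in probability.

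Finally, I would combine the two facts with Slutsky's theorem: $\hat\mu_h$ converges in distribution to $N(0,\sigma_0^2)$ and $\hat{\bar d}$ converges in probability to the constant $\bar d$, so $\hat\mu_{h,VH} = \hat\mu_h\,\hat{\bar d} \to N(0,\bar{d}^{2}\sigma_0^2)$ in distribution; the claim holds with $\sigma_{0,VH}^2 = \bar{d}^{2}\sigma_0^2$. I expect the only real obstacle to be the variance bound in the second step, i.e. controlling $\sum_{\sigma,\tau}\mathrm{Cov}(g(X_\sigma),g(X_\tau))$, where the covariance decays geometrically in the tree distance $d(\sigma,\tau)$ at a rate governed by $|\lambda_2|$; summing this over the exponentially growing $2$-tree is precisely where the threshold $2 < |\lambda_2|^{-2}$ enters (and where the arithmetic mirrors the variance discussion preceding Theorem \ref{thm1}), and it is this computation, already carried out in \cite{treevar}, that does the work. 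The mean identity, the reciprocal step, and the Slutsky argument are routine; note also that condition (c1) is used only through Theorem \ref{thm2} applied to $y'$ and plays no role in the law of large numbers for $g$, for which only two moments are needed.
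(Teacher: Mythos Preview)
Your proposal is correct and follows essentially the same architecture as the paper's proof: apply Theorem~\ref{thm2} to $y'$, show $\hat{\bar d}\to\bar d$ in probability, and conclude by Slutsky. The only difference is in the LLN step for $\hat{\bar d}$: the paper (somewhat loosely) appeals to Theorem~\ref{thm2} to get boundedness of the $\sqrt{2^h}$-scaled sum and hence convergence of the average, whereas you use a direct second-moment bound from \cite{treevar} plus Chebyshev; your route is arguably cleaner, since it relies only on (c3) and avoids any need to check (c1) for the auxiliary function $g(i)=1/deg(i)$.
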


\subsection{Illustrating the conditions with a blockmodel}\label{sec:block}

Consider $G$ as coming from a blockmodel with two blocks \cite{lorrain1971structural}.  In this blockmodel, each node $i$ is given a label $z(i) \in \{1,2\}$ and every edge weight $w_{i,j} = B_{z(i), z(j)}$ for some symmetric $2\times 2$ matrix $B$.  Suppose that $y_i = y_j$ if $z(i) = z(j)$.  This  model was previously studied in \cite{goel2009respondent} and it serves as an approximation to the Stochastic Blockmodel. 

Given the structural equivalence of nodes within the same block, it is sufficient to study the transition matrix between blocks, $\mathscr{P} \in \mathbb{R}^{2 \times 2}$. If $\mathscr{P}$ is a symmetric matrix with entries $p_{11}=p_{22}=p$ and $p_{12}=p_{21}=1-p$ for some value $p$, then it can be easily verified that condition (c1) is satisfied.  
Moreover, if $2p-1<\frac{1}{\sqrt{2}}$, then conditions (c2) and (c3) are also satisfied. Our theorem asserts that the Volz-Heckathorn estimator converges to the normal distribution in this model. 

More generally, suppose that the nodes in the blockmodel for $G$ are equally balanced between $2K$ blocks with node features $\{y_1,-y_1,\dots,y_K,-y_K\}$ and that the transition matrix $p(u,v)=pI(u=v)+\frac{1-p}{2K-1}I(u\neq v)$. We can verify that all the conditions are satisfied as long as $\frac{1}{2K}<p<\frac{1}{2K}+\frac{1}{2\sqrt{2}}.$

\section{Comparing the variance of inverse probability weighting to the bias of the sample average} \label{sec:example}

An estimator with a small mean square error (i.e. $E(\hat \mu - \mu)^2$) has a small bias and a small variance. It is generally known that inverse probability weighting provides an unbiased estimator of $\mu$.  However, survey weights can also drastically inflate the variance of the estimator. This matter has been heavily studied by survey statisticians and substantial literature have devoted to the methodologies and issues regarding the use of sampling weights; see \cite{pfeffermann1996weights}, \cite{biemer2007weights}, \cite{valliant2013weights}, and \cite{kenneth2016weights} for a review.  To determine whether one should use sampling weights in RDS, this section gives a test statistic for the null hypothesis that the sample average is unbiased.  The results in the previous section suggest a confidence region for this test statistic. 

 Denote $n = |\mathbb{T}|$ as the number of samples. The next results compare $\hat \mu_{IPW}$ to the sample average 
$$\hat{\mu}=\frac{1}{n}\sum_{\tau \in \mathbb{T}}y(X_{\tau}).$$
Proposition \ref{variancedifference} and Theorem \ref{variancedifferenceTree} highlight the dangers of inverse probability weighting by showing that it can increase the variance.  
Proposition \ref{variancedifference} studies the simplified case where the samples are i.i.d from the stationary distribution.  Following the proposition, Theorem \ref{variancedifferenceTree}  studies the more relevant setting of the $(\mathbb{T},P)$-walk on $G$.  To simplify the statements of Proposition \ref{variancedifference} and Theorem \ref{variancedifferenceTree} and their proofs, the node features $y(i)$ are presumed to be random variables. This condition could be removed with further technical conditions on the moments of $y: V\rightarrow\mathbb{R}$ and its relationship to $\pi$. 

\begin{proposition}
\label{variancedifference}
Suppose that $X_1,...,X_n$ are sampled independently from the stationary distribution $\pi$ and that $y(1),..., y(N)$ are $N$ uncorrelated and identically distributed random variables with finite second moment $\mu_2$. Let $C_1=max_{1\leq i\leq N}N\pi_i$ and $var(\pi)=\frac{1}{N}\sum_{i=1}^{N}\pi_i^2-\frac{1}{N^2}$, then 
\begin{equation}
\label{differenceEst}
var(\hat{\mu}_{IPW})-var(\hat{\mu})\geq\mu_2N(\frac{N}{nC_1^2}-1)var(\pi).
\end{equation}
Thus, as long as $N>C_1^2n$, which can be easily satisfied in practice,
$$var(\hat{\mu}_{IPW})>var(\hat{\mu}).$$
\end{proposition}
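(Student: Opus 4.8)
The plan is to evaluate $\mathrm{var}(\hat\mu)$ and $\mathrm{var}(\hat\mu_{IPW})$ in closed form, subtract them, and reduce the claimed bound to a single deterministic inequality about $\pi$ that can be checked coordinate by coordinate.

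First I would write $\hat\mu = n^{-1}\sum_{t=1}^{n} y(X_t)$ and $\hat\mu_{IPW} = n^{-1}\sum_{t=1}^{n} z(X_t)$ with $z(i):=y(i)/(N\pi_i)$, and expand $\mathrm{var}(\hat\mu)=n^{-2}\sum_{s,t}\mathrm{Cov}(y(X_s),y(X_t))$ and likewise for $\hat\mu_{IPW}$ with $z$ in place of $y$. Since the $X_t$ are i.i.d.\ from $\pi$ and independent of $\{y(i)\}$, each covariance splits into a diagonal ($s=t$) part and an off-diagonal ($s\neq t$) part, and the identical distribution, pairwise uncorrelatedness, and common second moment $\mu_2$ of the $y(i)$ let me evaluate all of them. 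Writing $m_1:=E[y(i)]$ and $v:=\mu_2-m_1^2$, this gives
\[
\mathrm{var}(\hat\mu)=\frac{v}{n}\Big(1+(n-1)\sum_i \pi_i^2\Big),\qquad
\mathrm{var}(\hat\mu_{IPW})=\frac{\mu_2}{nN^2}\sum_i\frac{1}{\pi_i}+\frac{(n-1)v}{nN}-\frac{m_1^2}{n}.
\]
The point in the second formula is that for $s\neq t$ the weight $1/(N^2\pi_i\pi_j)$ cancels exactly against $\pi_i\pi_j$, so the off-diagonal term only sees $\mathrm{var}\big(N^{-1}\sum_i y(i)\big)=v/N$.

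Subtracting, and using $\sum_i\pi_i^2=N^{-1}+N\,\mathrm{var}(\pi)$, the difference equals $\frac{\mu_2}{n}\big(\frac{1}{N^2}\sum_i\frac{1}{\pi_i}-1\big)-\frac{(n-1)v}{n}N\,\mathrm{var}(\pi)$. Since $0\le v\le\mu_2$, replacing $v$ by $\mu_2$ in the (nonpositive) second term only weakens the inequality, so it suffices to prove it with that replacement; dividing by $\mu_2$ and clearing the factor $1/n$, the terms proportional to $n$ cancel and the whole statement collapses to the $n$-free inequality
\[
\frac{1}{N^2}\sum_{i}\frac{1}{\pi_i}-1\;\ge\;N\Big(\frac{N}{C_1^2}-1\Big)\mathrm{var}(\pi).
\]
To prove this I would substitute $a_i:=N\pi_i$, so that $\sum_i a_i=N$, $0<a_i\le C_1$, and $C_1\ge1$ (the $a_i$ average to $1$). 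Using $\sum_i(a_i-1)=0$ one checks the two identities $\frac{1}{N^2}\sum_i\frac{1}{\pi_i}-1=\frac{1}{N}\sum_i\frac{(a_i-1)^2}{a_i}$ and $N^3\,\mathrm{var}(\pi)=\sum_i(a_i-1)^2$, which turn the inequality into $\sum_i\frac{(a_i-1)^2}{a_i}\ge\big(\frac{1}{C_1^2}-\frac{1}{N}\big)\sum_i(a_i-1)^2$. If $\frac{1}{C_1^2}\le\frac{1}{N}$ the right-hand side is $\le0$ and there is nothing to prove; otherwise it holds term by term, since $\frac{1}{a_i}\ge\frac{1}{C_1}\ge\frac{1}{C_1^2}\ge\frac{1}{C_1^2}-\frac{1}{N}$.

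Finally, the strict conclusion when $N>C_1^2 n$ is immediate: the right-hand side of \eqref{differenceEst} is then strictly positive as soon as $\mathrm{var}(\pi)>0$, whereas if $\mathrm{var}(\pi)=0$---i.e.\ $\pi$ is uniform---then $\hat\mu_{IPW}=\hat\mu$ identically and there is nothing to claim. I expect the only delicate step to be the second-moment bookkeeping: separating diagonal from off-diagonal contributions and tracking how the randomness in $y$ interacts with the inverse-probability weights. Once the difference of variances is in hand, the reduction to the coordinatewise inequality above is the crux, and it is short.
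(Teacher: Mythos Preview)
Your proof is correct and follows the same overall plan as the paper: compute $\mathrm{var}(\hat\mu)$ and $\mathrm{var}(\hat\mu_{IPW})$ explicitly, subtract, replace $v=\mathrm{var}(y)$ by $\mu_2$ in the negative term, and reduce everything to a deterministic inequality relating $\sum_i\frac{1}{N^2\pi_i}-1$ to $\mathrm{var}(\pi)$. The only genuine difference is how that last inequality is established. The paper uses the pairwise symmetrization
\[
\sum_{i=1}^N\frac{1}{N^2\pi_i}-1=\frac{1}{2}\sum_{i,j}\frac{(\pi_i-\pi_j)^2}{N^2\pi_i\pi_j}
\]
and then bounds the denominator by $N^2\pi_i\pi_j\le C_1^2$, obtaining $\sum_i\frac{1}{N^2\pi_i}-1\ge \frac{N^2}{C_1^2}\,\mathrm{var}(\pi)$ directly. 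You instead substitute $a_i=N\pi_i$, use the single-sum identity $\frac{1}{N}\sum_i\frac{1}{a_i}-1=\frac{1}{N}\sum_i\frac{(a_i-1)^2}{a_i}$, and bound term by term via $1/a_i\ge 1/C_1\ge 1/C_1^2$. Your route avoids the double sum and actually gives the slightly sharper intermediate bound $\frac{N^2}{C_1}\,\mathrm{var}(\pi)$ before you weaken it to match the stated proposition; the paper's symmetrization, on the other hand, makes the connection to $\mathrm{var}(\pi)$ more transparent at a glance. Either argument is fine.
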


\begin{theorem}
\label{variancedifferenceTree}
Suppose that $\{X_\tau:\tau \in \mathbb{T}\}$ is a sample from the $(\mathbb{T},P)$-walk on $G$, and that $y(1),..., y(N)$ are $N$ uncorrelated and identically distributed random variables with finite second moment $\mu_2$. Assume that there exist constants $C_1, C_2$ and $C_3$ (not the same constants as in Proposition \ref{variancedifference}) such that $C_1N\leq d_i\leq C_2N$ for all $i$ and $N^2var(\pi)>C_3$. Then
\begin{equation}
\label{differenceEstTree}
var(\hat{\mu}_{IPW})-var(\hat{\mu})\geq \mu_2(\frac{N^2C_1^2}{nC_2^2}var(\pi)-\frac{C_2}{NC_1^2}),
\end{equation}
and there exists $C$ independent of $n$ such that $var(\hat{\mu}_{IPW})>var(\hat{\mu})$ as long as $N>Cn$.

\end{theorem}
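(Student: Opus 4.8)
The plan is to reduce the $(\mathbb{T},P)$-walk computation to two ingredients: a variance formula for the IPW estimator, and an upper bound on the variance of the sample average. First I would compute $\mathrm{var}(\hat\mu_{IPW})$. Since each $X_\tau$ is marginally distributed as $\pi$, writing $y'(i) = y(i)/(N\pi_i)$ we have $\mathrm{var}(\hat\mu_{IPW}) = n^{-2}\sum_{\sigma,\tau}\mathrm{Cov}(y'(X_\sigma), y'(X_\tau))$. Because $y(1),\dots,y(N)$ are uncorrelated and identically distributed with second moment $\mu_2$, conditioning on the walk $\{X_\tau\}$ and then averaging over the $y$'s kills every cross term $y(i)y(j)$ with $i\neq j$; only the diagonal contributions survive. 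This gives an exact expression of the form $\mathrm{var}(\hat\mu_{IPW}) = \mu_2 \cdot n^{-2}\sum_{\sigma,\tau} E\big[(N\pi_{X_\sigma})^{-1}(N\pi_{X_\tau})^{-1}\mathbf{1}\{X_\sigma = X_\tau\}\big] - \mu^2_{\text{something}}$; the leading (diagonal $\sigma=\tau$) term alone is $\mu_2 n^{-1} E_\pi[(N\pi_i)^{-2}] = \mu_2 n^{-1}\sum_i \pi_i (N\pi_i)^{-2}$, and under $C_1 N \le d_i \le C_2 N$ (so $C_1' \le N\pi_i \le C_2'$ for rescaled constants) this term is bounded below by a constant multiple of $\mu_2 N^2 \mathrm{var}(\pi)/(n C_2^2)$ after relating $\sum_i \pi_i^{-1}$ to $N^2\mathrm{var}(\pi) + 1$. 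The off-diagonal terms are nonnegative (they are expectations of nonnegative quantities), so they only help the lower bound.

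Second I would bound $\mathrm{var}(\hat\mu)$ from above. Again conditioning on the walk and using that the $y_i$ are uncorrelated, $\mathrm{var}(\hat\mu) = \mu_2 n^{-2}\sum_{\sigma,\tau} P(X_\sigma = X_\tau) + (\text{lower order})$. The key point is that $\sum_{\sigma,\tau} P(X_\sigma = X_\tau) \le n + \sum_{\sigma\neq\tau}P(X_\sigma = X_\tau)$, and each collision probability $P(X_\sigma = X_\tau)$ for the tree-indexed walk is controlled: the unique path between $\sigma$ and $\tau$ in $\mathbb{T}$ has some length $\ell \ge 1$, and $P(X_\sigma = X_\tau) = \sum_i \pi_i (P^\ell)_{ii} = \sum_i \pi_i^2/\pi_i \cdot$(reversibility bookkeeping), which is at most $\max_i (N\pi_i) \cdot \frac{1}{N}\sum_i$ ... in any case is $O(1/N)$ uniformly because $\pi_i \asymp 1/N$. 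Summing over the at most $n^2$ pairs gives $\mathrm{var}(\hat\mu) \le \mu_2(n^{-1} + O(1/N)) \le \mu_2 \cdot \frac{C_2}{N C_1^2}$ up to constants (using $n \le N/C$ at the end). Subtracting, $\mathrm{var}(\hat\mu_{IPW}) - \mathrm{var}(\hat\mu) \ge \mu_2\big(\frac{N^2 C_1^2}{n C_2^2}\mathrm{var}(\pi) - \frac{C_2}{N C_1^2}\big)$, which is \eqref{differenceEstTree}. Finally, since $N^2\mathrm{var}(\pi) > C_3$, the first term is at least a constant times $\mu_2/n$, dominating the $O(\mu_2/N)$ second term once $N > Cn$ for a suitable constant $C$ depending only on $C_1, C_2, C_3$; this yields $\mathrm{var}(\hat\mu_{IPW}) > \mathrm{var}(\hat\mu)$.

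The main obstacle I anticipate is controlling the collision probabilities $P(X_\sigma = X_\tau)$ cleanly for the tree-indexed walk, and in particular making sure the sum over all $\binom{n}{2}$ pairs of these $O(1/N)$ quantities really is $O(n^2/N)$ and not larger — this is where the tree structure (as opposed to an i.i.d.\ sample) could in principle introduce correlations that inflate the bound, and one must check that marginal pairwise collisions, which is all that enters a variance, are genuinely of order $1/N$ uniformly in the tree-distance $\ell$ (using $\min_i \pi_i \asymp 1/N$ and $0 \le (P^\ell)_{ii} \le$ something summable-in-$i$ against $\pi$). A secondary bookkeeping nuisance is that $\hat\mu_{IPW}$ and $\hat\mu$ have different means (only $\hat\mu_{IPW}$ is unbiased), so one should be careful to subtract the correct centering in each variance and confirm the lower-order mean-squared terms do not spoil the inequality; since these are $O(1/N^2)$-type corrections relative to the $\mu_2/n$ main term, they are absorbed into the constant $C$.
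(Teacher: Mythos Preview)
Your overall strategy is close to the paper's, and the ingredients you identify (conditioning on the walk to kill cross terms in $y$, the identity relating $\sum_i (N^2\pi_i)^{-1}-1$ to $N^2\mathrm{var}(\pi)$, and the degree bounds to control pairwise collision probabilities) are exactly the right ones. But there is a genuine gap in how you organize the subtraction.

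The error is in your upper bound on $\mathrm{var}(\hat\mu)$. You correctly get $\mathrm{var}(\hat\mu)\le \mu_2\bigl(n^{-1}+O(1/N)\bigr)$, but then you write that this is at most $\mu_2\,C_2/(NC_1^2)$ ``using $n\le N/C$ at the end.'' The inequality goes the wrong way: if $n\le N/C$ then $n^{-1}\ge C/N$, which is \emph{large} compared to $1/N$, not small. The diagonal contribution $\mathrm{var}(y)/n$ to $\mathrm{var}(\hat\mu)$ is of exact order $1/n$ and cannot be bounded by anything of order $1/N$ in the regime $N\gg n$. So bounding $\mathrm{var}(\hat\mu_{IPW})$ and $\mathrm{var}(\hat\mu)$ separately and then subtracting loses the inequality: both estimators have a diagonal variance of order $1/n$, and it is only their \emph{difference} that is of the desired size.

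The paper avoids this by subtracting first. Writing out both variances with the same decomposition (a diagonal $\sigma=\tau$ part and an off-diagonal $\sigma\ne\tau$ part), the diagonal difference is exactly
\[
\frac{\mu_2}{n}\Bigl(\sum_{i=1}^N\frac{1}{N^2\pi_i}-1\Bigr),
\]
and it is this quantity (not the individual diagonals) that is bounded below by $\frac{\mu_2}{n}\cdot\frac{C_1^2}{C_2^2}N^2\mathrm{var}(\pi)$ via the identity $\sum_i (N^2\pi_i)^{-1}-1=\tfrac12\sum_{i,j}(\pi_i-\pi_j)^2/(N^2\pi_i\pi_j)$. The off-diagonal difference is then bounded below by $-C_2/(NC_1^2)$ using $\pi_i\le C_2/(NC_1)$ and $p_{ij}^{(t)}\le 1/(C_1 N)$. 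Your plan can be repaired by retaining the ``$+1$'' in $\sum_i(N^2\pi_i)^{-1}\ge 1+cN^2\mathrm{var}(\pi)$ and letting that $\mu_2/n$ cancel the diagonal of $\mathrm{var}(\hat\mu)$ in the subtraction, which is essentially the paper's computation.

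A smaller issue: your claim that the off-diagonal IPW covariances are ``expectations of nonnegative quantities'' is not quite right as stated, since a covariance subtracts the product of means; you need the explicit form $\mathrm{Cov}\bigl(y'(X_\sigma),y'(X_\tau)\bigr)=\mathrm{var}(y)\sum_i (P^{d(\sigma,\tau)})_{ii}/(N^2\pi_i)+\text{(a }\mu_1^2\text{ term)}$ to see the sign, and the paper sidesteps this by working with the difference of cross terms rather than bounding each separately.
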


Proposition \ref{variancedifference} shows that as $var(\pi)$ increases, the difference between the variance of $\hat \mu_{IPW}$ and $\hat \mu$ becomes larger.  Similarly, Theorem \ref{variancedifferenceTree} also shows that the difference between the variance of $\hat \mu_{IPW}$ and $\hat \mu$ increases as  $var(\pi)$ increases, given that the relative upper and lower bound of $d$ (i.e. $C_1$ and $C_2$) remain fixed. 
Recall that when $P$ is the simple random walk, the probabilities $\pi_i$ are proportional to node degree. An extensive literature (e.g. \cite{strogatz2001exploring, clauset2009power}) has found  that  empirical networks have highly heterogeneous node degrees. 
As such, Equation \ref{differenceEst} shows that the variance of $\hat{\mu}_{IPW}$ can be dramatically greater than the variance of $\hat{\mu}$. Moreover, both Proposition \ref{variancedifference} and Theorem \ref{variancedifferenceTree} suggest that the variance difference $var(\hat{\mu}_{IPW})-var(\hat{\mu})$ can be considerable if we sample only a small proportion of the whole population. This problem is particularly salient when the population is large.

The bias-variance tradeoff presents a dilemma between inverse probability weighting and the sample average.  This bias can be estimated.
For every $i\in V$, define a new node feature,
\begin{equation}
\label{newf}
y'(i)=y(i)(1-\frac{1}{N\pi_{i}})=y(i)(1-\frac{\bar{d}}{deg(i)}).
\end{equation}

\begin{proposition} \label{prop:averageUnbiased}
The mean of the new node feature satisfies
$$E_{\pi}(y')=E_{\pi}(y)-\mu,$$
which is the true bias of the sample average. Therefore, under the null hypothesis,
$$H_0: E_{\pi}(y')=0,$$
the sample average is an unbiased estimator. If $\pi$ and $y$ are uncorrelated (i.e.$\sum_{i\in V}\pi_iy(i)=\frac{1}{N}\sum_{i\in V}y(i)$), then $H_0$ is satisfied and the sample average is unbiased. 
\end{proposition}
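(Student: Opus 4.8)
The plan is to establish all three assertions by a single direct computation, expanding the definition of $y'$ against the stationary distribution. First I would write
$$E_\pi(y') = \sum_{i\in V}\pi_i\, y'(i) = \sum_{i\in V}\pi_i\, y(i)\Big(1 - \frac{1}{N\pi_i}\Big) = \sum_{i\in V}\pi_i\, y(i) \;-\; \frac{1}{N}\sum_{i\in V} y(i) \;=\; E_\pi(y) - \mu,$$
where the second sum collapses because the factor $\pi_i$ cancels with $1/(N\pi_i)$. This uses only that $\pi_i>0$ for all $i$, which is harmless since $\pi_i\propto deg(i)$ and isolated vertices can be discarded from $G$. This already proves the displayed identity in the statement.

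Next I would identify $E_\pi(y')$ with the actual bias of the sample average $\hat\mu$. By the conventions set in Section \ref{sec:notation}, the root $X_0$ is initialized from $\pi$ and each transition of the $(\mathbb{T},P)$-walk preserves $\pi$, so $X_\sigma$ is marginally distributed according to $\pi$ for every $\sigma\in\mathbb{T}$, regardless of the shape of $\mathbb{T}$. Consequently $E(\hat\mu) = \tfrac{1}{n}\sum_{\tau\in\mathbb{T}} E\,y(X_\tau) = E_\pi(y)$, and hence the bias is $E(\hat\mu)-\mu = E_\pi(y)-\mu = E_\pi(y')$ by the previous display. In particular, under $H_0 : E_\pi(y')=0$ we obtain $E(\hat\mu)=\mu$, i.e.\ the sample average is unbiased.

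Finally, for the last assertion I would simply substitute the uncorrelatedness hypothesis $\sum_{i\in V}\pi_i\, y(i) = \tfrac{1}{N}\sum_{i\in V} y(i)$, i.e.\ $E_\pi(y)=\mu$, into the identity $E_\pi(y') = E_\pi(y)-\mu$ to get $E_\pi(y')=0$, which is exactly $H_0$. There is no genuine obstacle here: the statement is essentially a bookkeeping identity. The only points requiring any care are the appeal to stationarity of the one-dimensional marginals of the tree-indexed walk (so that $E(\hat\mu)=E_\pi(y)$ for an arbitrary referral tree) and the innocuous positivity $\pi_i>0$ needed for the weights $1/(N\pi_i)$ to be well-defined.
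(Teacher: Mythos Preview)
Your proof is correct. The paper does not actually supply a proof of this proposition; it is stated and treated as self-evident, with no argument in either the main text or the appendix. Your direct computation---expanding $E_\pi(y')$ from the definition of $y'$, cancelling the $\pi_i$ against $1/(N\pi_i)$, and then invoking stationarity of the one-dimensional marginals to identify $E(\hat\mu)=E_\pi(y)$---is exactly the natural verification, and there is nothing to compare against.
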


Under the conditions of Proposition \ref{variancedifference}, $\pi$ and $y$ satisfy the condition of Proposition \ref{prop:averageUnbiased} in expectation (i.e. they are uncorrelated in expectation).  Both of these conditions imply that the outcome is unrelated to the sampling weight (in some way).  Under such conditions, both estimators are unbiased.  If it is also true that $var(\pi)$ is large, then $\hat \mu$ has a smaller variance.  In this scenario,  $\hat \mu$ is preferable to $\hat \mu_{IPW}$ in terms of MSE.   However, if the sample average is biased, then we must compare the bias and variance of the two estimators.  Asymptotically, the variance of both estimators will vanish, while the bias stays constant.  So, for sufficiently large sample size, one should use $\hat \mu_{IPW}$.  For smaller sample sizes, the bias of the sample average could be small (relative to the difference in the variances).  In such settings, there will be a crossover point, a sample size at which $\hat \mu_{IPW}$ becomes preferable to $\hat \mu$.  To distinguish between these two cases, we want to test the null hypothesis that the sample average is unbiased, i.e. $E_{\pi}(y')=0$. Or, more generally, we want to provide a confidence region for the bias of the sample average.  


If $\bar{d}$ is unknown, as is generally the case, we can estimate $\bar{d}$ by the harmonic mean \cite{salganik2004sampling}
$$\hat{\bar{d}}=\frac{n}{\sum_{\tau \in \mathbb{T}}1/deg(X_\tau)}.$$
Substituting $\bar{d}$ for $\hat{\bar{d}}$ in Equation \ref{newf} yields the new node feature based on the Volz-Heckathorn estimator \cite{volz2008probability}
$$y'_{VH}(i)=y(i)(1-\frac{\hat{\bar{d}}}{deg(i)}).$$
Similarly, define 
\begin{equation}
\label{newf2}
\widehat{bias}=\frac{1}{n} \sum_{\sigma \in \mathbb{T}}y'_{VH}(X_{\sigma})=\frac{1}{n} \sum_{\sigma \in \mathbb{T}} y(X_\sigma)\left(1-\frac{\hat{\bar{d}}}{deg(X_\tau)}\right) = \hat \mu-\hat \mu_{VH},
\end{equation}
then $\widehat{bias} = \hat \mu-\hat \mu_{VH}$ is an asymptotically unbiased estimator for the bias of $\hat \mu$.  It serves as a test statistic for the null hypothesis $H_0: E_{\pi}(y')=0$.


The theorems above suggest that $\widehat{bias}$ converges to the normal distribution. The rejection region is then 
$$
W=\{|\widehat{bias}|>1.96\frac{\hat{\sigma_0}}{\sqrt{n}}\}.
$$
where $\hat{\sigma_0^2}$ is an estimate of the variance.

\section{Estimating the variance} \label{sec:varhat}
For some node feature $\tilde y$ (e.g. HIV status $y$ or the $y'$ in Equation \eqref{newf} that motivate $\widehat{bias}$), let $\tilde \mu$ denote the sample average.  Denote $\sigma_{\tilde \mu}^2$ as $Var_{\mathbb{T}, P}(\tilde \mu)$, where the subscript $\mathbb{T}, P$ denotes that the data is collected via a $(\mathbb{T},P)$-walk on $G$. This subsection studies a simple plug-in estimator for $\sigma_{\tilde \mu}^2$.  

The following function is essential to expressing $\sigma$ \cite{treevar}.

\begin{definition} \label{def:probgen}
Select two nodes $I, J$ uniformly at random from the tree $\mathbb{T}$.  Define the random variable $D = d(I,J)$ to be the graph distance in $\mathbb{T}$ between $I$ and $J$.  Define $\mathbb{G}$ as the probability generating function for $D$,
\[\mathbb{G}(z) = E( z^{D}).\]
\end{definition}
In practice, $\mathbb{T}$ is observed.  So, the function $\mathbb{G}$ can be computed.  In many studies there are multiple seed nodes.  In these cases, we suggest computing $d(I,J)$ on a tree which has an artificial root node that connects to all of the seeds; this root node could be imagined as an individual that is responsible for finding the seed nodes.  In this tree, two different seed nodes would be distance 2 apart.  

Denote the autocorrelation at lag 1 of $\tilde y(X_\tau)$ as
\[R = \frac{Cov(\tilde y(X_{p(\tau)}), \tilde y(X_\tau))}{var_\pi(\tilde y)}.\]
Both $Cov(\tilde y(X_{p(\tau)}), \tilde y(X_\tau))$ and $var_\pi$ can be estimated with plug-in quantities.  Because the data has been sampled proportional to $\pi$,  the plug-in quantity for $var_\pi$ should not explicitly adjust for $\pi$.  
\[\widehat{var_\pi(\tilde y)} = \frac{1}{n} \sum_{\tau \in \mathbb{T}} (\tilde y(X_\tau) - \hat \mu_{VH})^2.\]
Similarly, 
\[\widehat{Cov(\tilde y(X_{p(\tau)}), \tilde y(X_\tau))} = \frac{1}{n-1} \sum_{\tau \in \mathbb{T}  \setminus 0} (\tilde y(X_{p(\tau)}) - \hat \mu_{VH})(\tilde y(X_\tau) - \hat \mu_{VH}),\]
where $\{\mathbb{T}  \setminus 0\}$ contains all nodes except the root node $0$ (because $p(0)$ does not exist).  
Using these plug-in quantities, define $\hat R$.  Then, the estimator is 
\[\hat \sigma_{\tilde \mu}^2 = \mathbb{G}(\hat R) \widehat{var_\pi(\tilde y)}.\]

A popular bootstrap technique for estimating $\hat \sigma_{\tilde \mu}^2$ resamples $y(X_\tau)$ as a Markov process (i.e. in addition to $X_\tau$ being a Markov process, the bootstrap procedure also assumes that $y(X_\tau)$ is Markov) \cite{salg}.  This model is akin to the blockmodel with two blocks in Section \ref{sec:block}.  The following assumption is weaker than this assumption. 

\textbf{Assumption 1:} $\tilde y(i) = \mu + \sigma f(i)$, where $\mu, \sigma \in \mathbb{R}$ and  $f: V \rightarrow \mathbb{R}$ is an eigenfunction of $P$ with $var_\pi(f) = 1$.

\begin{proposition} \label{prop:equality}
Under Assumption 1,
\[\sigma_{\tilde \mu}^2 = \mathbb{G}(R) var_\pi(\tilde y).\]
\end{proposition}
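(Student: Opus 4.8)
The plan is to compute $\sigma_{\tilde \mu}^2 = \mathrm{Var}_{\mathbb{T},P}(\tilde \mu)$ directly by expanding the variance over all pairs of tree nodes and using the eigenfunction structure of $f$ to collapse the covariance of any pair into a function of their tree distance. Write $\tilde \mu = \tfrac{1}{n}\sum_{\sigma \in \mathbb{T}} \tilde y(X_\sigma) = \mu + \tfrac{\sigma}{n}\sum_{\sigma\in\mathbb{T}} f(X_\sigma)$, so that $\mathrm{Var}(\tilde\mu) = \tfrac{\sigma^2}{n^2}\sum_{\sigma,\tau\in\mathbb{T}} \mathrm{Cov}(f(X_\sigma), f(X_\tau))$ since the additive constant $\mu$ drops out. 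Note also that $E_\pi f = 0$ because $f$ is an eigenfunction of $P$ orthogonal to the constant eigenfunction (or, if $f$ is itself constant, the claim is trivial with both sides zero), and $\mathrm{var}_\pi(f) = 1$ by hypothesis, so $\mathrm{var}_\pi(\tilde y) = \sigma^2$.

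The key step is the covariance computation. For $\sigma, \tau \in \mathbb{T}$, let $\rho$ be their most recent common ancestor in $\mathbb{T}$; then $X_\rho$ has distribution $\pi$, and conditionally on $X_\rho$ the two paths $X_\rho \to \cdots \to X_\sigma$ and $X_\rho \to \cdots \to X_\tau$ evolve as independent Markov chains of lengths $|\sigma| - |\rho|$ and $|\tau| - |\rho|$ respectively. Because $f$ is an eigenfunction, say $Pf = \lambda f$, we have $E[f(X_\sigma) \mid X_\rho] = \lambda^{|\sigma|-|\rho|} f(X_\rho)$ and likewise for $\tau$, so by conditional independence $\mathrm{Cov}(f(X_\sigma), f(X_\tau)) = \lambda^{|\sigma|-|\rho|} \lambda^{|\tau|-|\rho|} E_\pi(f^2) = \lambda^{(|\sigma|-|\rho|)+(|\tau|-|\rho|)} = \lambda^{d(\sigma,\tau)}$, using $d(\sigma,\tau) = (|\sigma|-|\rho|) + (|\tau|-|\rho|)$ and $E_\pi(f^2) = \mathrm{var}_\pi(f) = 1$. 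Summing, $\mathrm{Var}(\tilde\mu) = \tfrac{\sigma^2}{n^2} \sum_{\sigma,\tau} \lambda^{d(\sigma,\tau)} = \sigma^2 \, E(\lambda^D) = \sigma^2\, \mathbb{G}(\lambda)$, where $I, J$ are uniform on $\mathbb{T}$ and $D = d(I,J)$ as in Definition \ref{def:probgen}.

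It remains to identify $\lambda$ with $R$. By the same one-step computation, $\mathrm{Cov}(\tilde y(X_{p(\tau)}), \tilde y(X_\tau)) = \sigma^2 \mathrm{Cov}(f(X_{p(\tau)}), f(X_\tau)) = \sigma^2 \lambda E_\pi(f^2) = \sigma^2 \lambda$, while $\mathrm{var}_\pi(\tilde y) = \sigma^2$, so $R = \lambda$. Substituting gives $\sigma_{\tilde\mu}^2 = \mathbb{G}(R)\,\mathrm{var}_\pi(\tilde y)$, as claimed. The main obstacle — really the only point requiring care — is the covariance identity for a general pair $\sigma,\tau$: one must set up the common-ancestor conditioning cleanly, invoke conditional independence of the two descending branches (a consequence of the $(\mathbb{T},P)$-walk definition, where each parent-to-child transition is independent given the parent), and verify that $d(\sigma,\tau)$ is exactly the sum of the two branch lengths. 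The edge cases ($\sigma = \tau$, or one an ancestor of the other, or $f$ constant, or $\lambda$ negative so that $\lambda^{d}$ alternates in sign) are all handled uniformly by the same formula and need only a remark.
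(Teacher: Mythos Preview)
Your proof is correct but takes a different route from the paper's. The paper argues by invoking two cited spectral formulas: the variance decomposition $\sigma_{\tilde\mu}^2 = \sum_{\ell\ge 2} \langle \tilde y, f_\ell\rangle_\pi^2\, \mathbb{G}(\lambda_\ell)$ from \cite{treevar} and the autocovariance decomposition $\mathrm{Cov}(\tilde y(X_{p(\tau)}),\tilde y(X_\tau)) = \sum_{\ell\ge 2} \langle \tilde y, f_\ell\rangle_\pi^2\, \lambda_\ell$; under Assumption~1 the weights $\langle \tilde y,f_\ell\rangle_\pi^2/\mathrm{var}_\pi(\tilde y)$ concentrate on a single index, so both sums collapse and the Jensen step in the proof of Proposition~\ref{prop:auto} holds with equality. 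Your argument instead computes $\mathrm{Cov}(f(X_\sigma),f(X_\tau)) = \lambda^{d(\sigma,\tau)}$ directly from the eigenfunction relation via common-ancestor conditioning, sums over $\mathbb{T}\times\mathbb{T}$ to obtain $\mathbb{G}(\lambda)$, and identifies $\lambda = R$ by the same one-step calculation. Your approach is more self-contained (it does not need the external spectral formulas, only the Markov/eigenfunction structure and the definition of $\mathbb{G}$) and in effect re-derives the single-term case of those formulas; the paper's approach is shorter given those citations and makes explicit the connection to Proposition~\ref{prop:auto} as its equality case. One small remark: the edge case you flag (``$f$ constant'') cannot occur, since $\mathrm{var}_\pi(f)=1$ forces $f$ nonconstant and hence $\lambda\neq 1$, which already gives $E_\pi f = 0$.
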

While Assumption 1 is weaker than the previous assumption in \cite{salg}, the next proposition highlights the danger of this assumption.  It uses a different assumption which is a rather weak assumption. 

\textbf{Assumption 2:} $\mathbb{G}$ is convex on $[\lambda_{\min},1]$, where  $\lambda_{\min}$ is the smallest eigenvalue of $P$. 

Because $\mathbb{G}$ is a probability generating function, it is always convex on $[0,1]$.  As such, we on need to be worried about negative values.  Recall, that the central limit theorems above only hold when $|\lambda_{\min}| < 1/\sqrt{2} \approx .7$ (the smallest possible value for $\lambda_{\min}$ is $-1$).  Some simulated trees given in the appendix suggest that if $\mathbb{G}$ is not convex, it often fails in the neighborhood of $-1$.  As such, the assumption that $|\lambda_{\min}| < 1/\sqrt{2} \approx .7$ is likely to imply Assumption 2.  In practice, one observes the referral tree $\mathbb{T}$.  Thus, one can compute the second derivative of $\mathbb{G}$.  
Eigenvalues of $P$ close to negative one arise in antithetic sampling, where adjacent samples are dissimilar.  For example, if the population in $G$ was heterosexuals and edges in $G$ represent sexual contact, then men would only refer women and vice versa.  In this case, $\lambda_{\min}$ would be exactly $-1$.  While easily imagined, such settings are not current practice for RDS.  As such, large an negative values are uncommon; $\lambda_{\min}$ is likely close to zero.  

The following proposition follows from an application of Jensen's inequality. 
\begin{proposition} \label{prop:auto}
Under Assumption 2,
\[\sigma_{\tilde \mu}^2 \ge \mathbb{G}(R) var_\pi(\tilde y).\]
\end{proposition}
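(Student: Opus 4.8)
The plan is to diagonalize $P$ and reduce the statement to Jensen's inequality, using the same spectral bookkeeping that underlies Proposition \ref{prop:equality} but without assuming $\tilde y$ lies in a single eigenspace. Let $1 = \lambda_1 > |\lambda_2| \ge \dots \ge |\lambda_N|$ be the eigenvalues of $P$, and let $f_1 \equiv 1, f_2, \dots, f_N$ be a corresponding family of eigenfunctions that is orthonormal for the inner product $\langle f, g \rangle_\pi = E_\pi(fg)$. Expand the centered feature as $\tilde y - E_\pi(\tilde y) = \sum_{k=2}^N a_k f_k$ with $a_k = \langle \tilde y, f_k \rangle_\pi$, so that $var_\pi(\tilde y) = \sum_{k=2}^N a_k^2$.

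First I would record the covariance identity behind Definition \ref{def:probgen}. For $\sigma, \tau \in \mathbb{T}$ with most recent common ancestor $\rho$, conditioning on $X_\rho$ makes the sub-walks from $\rho$ to $\sigma$ and from $\rho$ to $\tau$ independent, so reversibility together with $P^t f_k = \lambda_k^t f_k$ gives
\[ Cov(\tilde y(X_\sigma), \tilde y(X_\tau)) = \langle P^{d(\rho,\sigma)}(\tilde y - E_\pi\tilde y),\, P^{d(\rho,\tau)}(\tilde y - E_\pi\tilde y) \rangle_\pi = \sum_{k=2}^N a_k^2 \lambda_k^{\,d(\sigma,\tau)}, \]
using $d(\rho,\sigma) + d(\rho,\tau) = d(\sigma,\tau)$. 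Averaging over $\sigma,\tau$ drawn uniformly from $\mathbb{T}$ and invoking $\mathbb{G}(z) = E(z^D)$ then yields the spectral formula
\[ \sigma_{\tilde\mu}^2 = \frac{1}{n^2} \sum_{\sigma,\tau \in \mathbb{T}} Cov(\tilde y(X_\sigma), \tilde y(X_\tau)) = \sum_{k=2}^N a_k^2\, \mathbb{G}(\lambda_k), \]
which is essentially the identity of \cite{treevar}, and of which the proof of Proposition \ref{prop:equality} treats the special case where $\tilde y$ lies in a single eigenspace. Taking $\sigma = p(\tau)$ in the same computation gives $Cov(\tilde y(X_{p(\tau)}), \tilde y(X_\tau)) = \sum_{k=2}^N a_k^2 \lambda_k$, so that
\[ R = \sum_{k=2}^N w_k \lambda_k, \qquad w_k = \frac{a_k^2}{var_\pi(\tilde y)} \ge 0, \qquad \sum_{k=2}^N w_k = 1; \]
that is, $R$ is a convex combination of $\lambda_2,\dots,\lambda_N$, and in particular $\lambda_{\min} \le R \le \lambda_2 \le 1$.

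With these two facts the conclusion is immediate: $R$ and every $\lambda_k$ with $k \ge 2$ lie in $[\lambda_{\min}, 1]$, the interval on which Assumption 2 grants convexity of $\mathbb{G}$, so Jensen's inequality applied with the weights $w_k$ gives $\sum_{k=2}^N w_k \mathbb{G}(\lambda_k) \ge \mathbb{G}\left(\sum_{k=2}^N w_k \lambda_k\right) = \mathbb{G}(R)$, and multiplying by $var_\pi(\tilde y) = \sum_{k=2}^N a_k^2 \ge 0$ together with the spectral formula for $\sigma_{\tilde\mu}^2$ yields $\sigma_{\tilde\mu}^2 \ge \mathbb{G}(R)\, var_\pi(\tilde y)$. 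The one point I would be careful about — and the only real obstacle here — is the domain of Jensen's inequality: $\mathbb{G}$ is automatically convex on $[0,1]$, but $\lambda_N$ and hence some of the $\lambda_k$ can be negative, so convexity is genuinely needed on all of $[\lambda_{\min}, 1]$, which is precisely the content and purpose of Assumption 2. Because $R$ always sits inside the convex hull of $\{\lambda_2,\dots,\lambda_N\} \subseteq [\lambda_{\min},1]$, no value outside this interval is ever evaluated, so Assumption 2 is exactly what is required; the equality in Proposition \ref{prop:equality} is the degenerate case in which the weight vector $w$ is concentrated on a single index.
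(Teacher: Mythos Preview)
Your argument is correct and follows the same route as the paper: express $\sigma_{\tilde\mu}^2 = \sum_{k\ge 2} a_k^2\,\mathbb{G}(\lambda_k)$ and $R = \sum_{k\ge 2} w_k \lambda_k$ with $w_k = a_k^2/var_\pi(\tilde y)$, then apply Jensen's inequality on $[\lambda_{\min},1]$ using Assumption~2. The only cosmetic difference is that you rederive the two spectral identities from the covariance computation (essentially the paper's Lemma~\ref{cov}), whereas the paper's proof imports them by citing Proposition~1 of \cite{khabbazian2016novel} and Theorem~1 of \cite{treevar}; the substance is identical.
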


Because Assumption 2 is not very restrictive, the inequality in Proposition \ref{prop:auto} highlights the danger in breaking Assumption 1 (and thus the Markov model in \cite{salg});  breaking Assumption 1 will lead to $\hat \sigma_{\tilde \mu}^2$ underestimating the variance.



\subsection{A bias adjusted estimator for $\mu$}
The test above also allows us to derive a more robust estimator of $\mu$. Define 
$$\widehat{bias}=\frac{1}{n}\sum_{i=1}^{n}y'(X_{i}).$$
Then $\widehat{bias} = \hat \mu-\hat \mu_{IPW}$. Using the hypothesis test to choose between the sample average and inverse probability weighting is akin to hard thresholding the bias adjustment.  Define
$$\widehat{bias}_t = \left\{
\begin{array}{cc}
\widehat{bias} & \mbox{if reject $H_0$}  \\
0 & \mbox{otherwise}.\end{array}\right.$$
The final estimator of $\mu$ is then 
$\hat \mu_{BA}=\hat \mu - \widehat{bias}_t$ (BA for bias adjustment).  This estimator is explored in Section \ref{sec:addhealth}.

\section{Numerical results}

\subsection{Simulation} \label{sec:sim}
In this section we illustrate the theoretical results on simulated data. 
The simulations are performed on networks simulated from the Stochastic Blockmodel \cite{holland1983stochastic}. The four colors in Figure 1 correspond to four different networks that are simulated from four different Stochastic Blockmodels. Each of the four networks has N =5,000 nodes, equally balanced between group zero and group one. The probability of a connection between two nodes in different blocks is $r$ and the probability of connection between two nodes in the same block is $p$. To control the eigenvalues of the $5000\times5000$ transition matrix, consider the transition matrix between classes given by $\mathscr{P}=E(D)^{-1}E(A)$. The second eigenvalue of $\mathscr{P}$ is \cite{rohe2012sp}
$$ \lambda_2(\mathscr{P})=\frac{p-r}{p+r}$$ where expectations are under the Stochastic Blockmodel.  In our simulation, the second eigenvalue of the actual transition matrix is typically very close to $\lambda_2(\mathscr{P})$. We take $p+r=0.01$ in all four Stochastic Blockmodels so that the average degree is about 25. As such, $\lambda_2(\mathscr{P})$ is actually controlled by $p-r$.

For each of the four networks we carry out four different sampling designs. Let $\mathbb{T}$ be either a $2-$tree or a Galton-Watson tree with $E(\eta(\sigma))=2$. For the Galton-Watson tree, the distribution of $\eta(\sigma)$ is uniform on $\{1,2,3\}$. For each $\mathbb{T}$, we consider both with replacement sampling (i.e. the $(\mathbb{T},P)$-walk on $G$) and without replacement sampling (i.e. referrals are sampled uniformly from the friends that have not yet been sampled). Note that the conditions of Theorem \ref{thm2} may be violated when either the Galton-Watson tree or without-replacement sampling is used. We take the first 8 waves of $\mathbb{T}$ as our sample. As such, the sample size is roughly $N/10$.  For each social network and sampling design, we repeat the sampling process 2000 times and compute $\hat\mu=\frac{1}{n}\sum_{i=1}^{n}y(X_i)$ for each sample. The Quantile-Quantile (Q-Q) plot of $ \hat\mu$ is shown in the left panel of Figure \ref{fig:simulation}; note that the QQ plot centers and scales each distribution to have mean zero and standard deviation one. In addition, we repeat the above simulation for the Volz-Heckathorn estimator, and the QQ plot of $ \hat\mu_{VH}$ is shown in the right panel of Figure \ref{fig:simulation}.

\begin{figure}[h]
\centering
\begin{subfigure}[b]{0.45\textwidth}
\centering
\includegraphics[width=\textwidth]{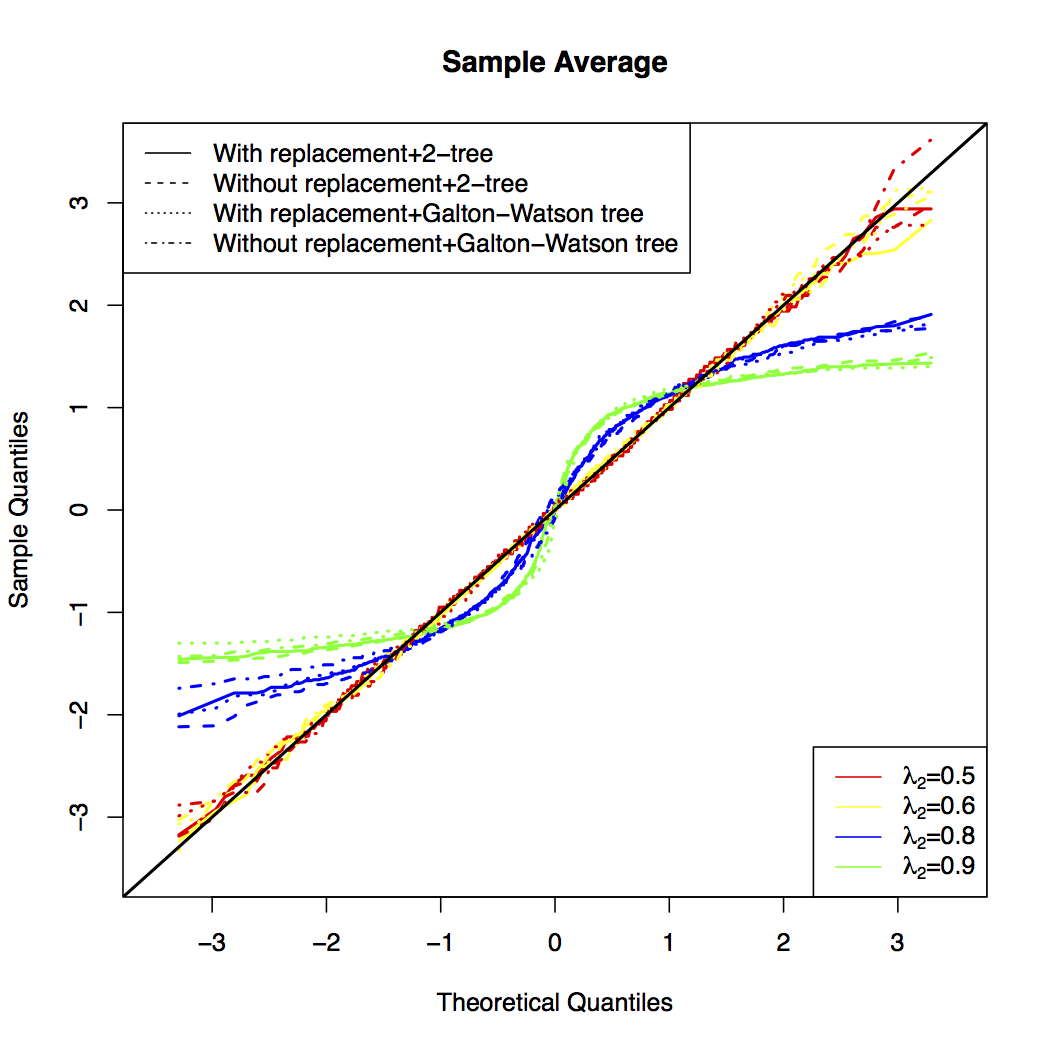}
\end{subfigure}
\begin{subfigure}[b]{0.45\textwidth}
\centering
\includegraphics[width=\textwidth]{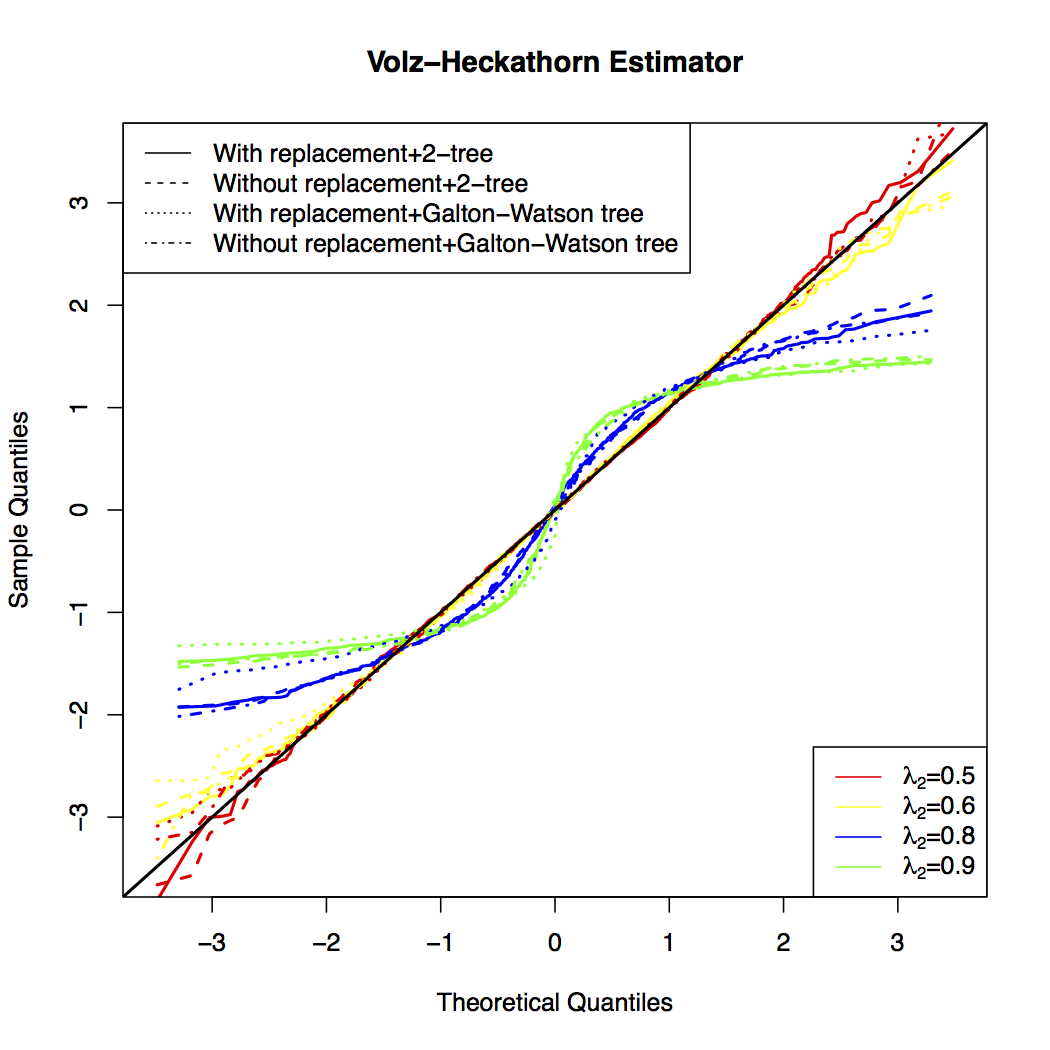}
\end{subfigure}
\caption{Q-Q plots of the sample average (left panel) and the Volz-Heckathorn estimator (right panel) for different social network and sampling designs. For each scenario we draw 2000 network driven samples of size $\approx$ 500 from a network containing 5,000 nodes. Here the threshold for $\lambda_2$ is $1/\sqrt{2} \approx 0.707$. For the two settings with $\lambda_2 < 1/\sqrt{2}$, the distributions appear normal.  However, for the two settings with $\lambda_2>1/\sqrt{2}$, the distributions do not appear normal.  Across all values of $\lambda_2$, there is no apparent difference between the four different designs (i.e. replacement sampling vs without replacement sampling and 2-tree vs Galton-Watson tree).}
\label{fig:simulation}
\end{figure}
\par

It is clear from Figure \ref{fig:simulation} that there are two patterns of distribution: when $\lambda_2<1/\sqrt{m}\approx 0.7$, i.e. $\lambda_2$=0.5 or 0.6, the Q-Q plots of $ \hat\mu$ and $ \hat\mu_{VH}$ approximately lie on the line $y=x$ for all sampling design;  when $\lambda_2>1/\sqrt{m}\approx 0.7$, i.e. $\lambda_2$=0.8 or 0.9, the Q-Q plot of $ \hat\mu$ and $ \hat\mu_{VH}$ departs from the line $y=x$. Taken together, Figure \ref{fig:simulation} suggests that the distribution of $ \hat\mu$ and $ \hat\mu_{VH}$ converges to Gaussian distribution if and only if $m<\lambda_2^{-2}$. Actually, the right panel of Figure \ref{fig:simulation} implies that there are two modes in the asymptotic distribution of $ \hat\mu$ and $ \hat\mu_{VH}$ when $m>\lambda_2^{-2}$. The relationship between the expectation of the offspring distribution and the second eigenvalue of the social network determines the asymptotic distribution of RDS estimators, regardless of the node feature, the particular structure of the tree or the way we handle replacement.

\subsection{Analysis of Adolescent Health data} \label{sec:addhealth}
To illustrate Theorem \ref{thm1} with the test statistic in Section 3, this section presents simulation results that use the Comm15 friendship network from the National Longitudinal Survey of Adolescent Health.  This simulation compares the MSE of $\hat \mu$  and $\hat \mu_{IPW}$ for two different node features $y$.  When $y$ is correlated with $\pi$, then $\hat \mu_{IPW}$ has a smaller MSE.  When $y$ is weakly correlated with $\pi$, then $\hat \mu$ has a smaller MSE.  For settings in which $\hat \mu_{IPW}$ clearly outperforms $\hat \mu$, the test statistic from Section 3 rejects the null hypothesis that the sample average is unbiased (i.e. $H_0: E_\pi \hat \mu = \mu$).

In the Comm15 network,  $N=1089$
students from two sister schools were asked to list up to 10 friends; these friends can be inside or outside of the school.  The students also supplied information including their gender, grade and race. 
The analysis below studies two node covariates: gender (0/1) and total nominations of friends (integer between 0 and 10).  Before the simulation, the network was symmetrized (i.e. consider the new adjacency matrix $\tilde A=A+A^T$), yielding a network with average node degree $\bar{d}=8.06$.  Because the students were only allowed to list up to 10 friends, the standard deviation of the degrees is $4.7$.  This is drastically smaller than typical social networks.  However, even in this setting, the variance of $\pi$ is sufficiently large to illustrate the advantages of the sample average.   

For both gender and the number of nominations, Table 1 displays (i) the correlation between $\pi$ and $y$, (ii) the bias of the sample average, and (iii) the crossover point.  Recall that the crossover point is the sample size at which the $\hat \mu_{IPW}$ has a smaller MSE than $\hat \mu$; this calculation is based upon the simulations described below.  The table shows that gender is weakly correlated with $\pi$.  As such, the sample average has a small bias and the crossover point is large.  Contrast this with the number of total nominations, which is highly correlated with $\pi$.  This makes the sample average clearly biased.  Because of this, it has a small crossover point.  These two examples illustrate a range of possibilities in terms of $cor(\pi, y)$.  
\par
\begin{table}
\centering
\label{truebias}
\begin{tabular}{l|lll}\\
&$cor(\pi,y)$&Bias of sample average &Crossover point (sample size)\\
\hline
Gender&0.11&0.037&$>500$\\
Total nominations&0.54&1.1&20\\
\end{tabular}
\caption{Gender has a weak correlation with the sample weights.  As such, $\hat \mu$ has a small bias.  The crossover point shows that if the samples were drawn independently from the distribution $\pi$, then $\hat \mu$ has a smaller MSE when $n<500$.  Total nominations has a larger correlation and thus a larger bias.  When the sample size exceeds twenty, $\hat \mu_{IPW}$ outperforms the sample average.}
\end{table}


%
%
%
%
%
%

%
Before estimating the crossover points shown in Table 1, we first study the hypothesis test $H_0: E \hat \mu = \mu$ for both gender and total nominations.  To provide a benchmark, this simulation compares RDS to independent sampling.  Let $P$ be the transition matrix of the simple random walk on the network. The second largest eigenvalue of $P$ is $\lambda_2=0.93$. Let $\mathbb{T}$ be a sample from the Galton-Watson process with $E\eta(\sigma)=1.1<\lambda_2^{-2}$. For a node covariate $y$ (gender or nominations), let $y'$ be the node feature defined in Eq.\ref{newf2} in Section 3. Recall that $E_{\pi}(y')$ is the bias of the sample average.  
 We consider the following methods of generating samples $Y_1, \dots, Y_n$ and computing or estimating the variance $\sigma^2$.  
\begin{enumerate}
\item $Y_1,..., Y_n$ is an independent and identically distributed sample from the normal distribution $N(E_{\pi}(y'), var_{\pi}(y'))$
and $\hat{\sigma}^2=var_{\pi}(y').$  Here the variance is known.  

\item $Y_i=y'(X_i)$ for all $i$, where $X_1,..., X_n$ is an independent and identically distributed sample from the equilibrium distribution of $P$, and $\hat{\sigma}^2=var_{\pi}(y').$ Here the variance is known. 

\item $Y_i=y'(X_i)$ for all $i$, where $X_1,..., X_n$ is a sample from the $(\mathbb{T},P)$-walk on $G$,
and $\hat{\sigma}^2$  is the true variance that is only computable in a simulation,  $var(\frac{1}{\sqrt{n}}\sum_{i=1}^{n}Y_i).$  

\item $Y_i=y'(X_i)$ for all $i$, where $X_1,..., X_n$ is a sample from the $(\mathbb{T},P)$-walk on $G$, and $\hat{\sigma}^2=\widehat{var_{\pi}(y')}\mathbb{G}(\hat R)$, the estimator discussed in Section \ref{sec:varhat}. 
\end{enumerate}

For a sample of size $n$, let the rejection region be$$W=\{|\frac{1}{\sqrt{n}\hat{\sigma}}\sum_{i=1}^{n}Y_i|>1.96\},$$with $Y_i$ and $\hat{\sigma}$ to be defined above.  Here,  the null hypothesis is rejected at $\alpha=0.05$.   For each scenario, Figure \ref{fig:power} plots the power of our test $pr(W)$ as a function of sample size. 
The power under scenario (1) can be calculated exactly and serves as a benchmark (black line). The power under scenario (2)-(4) is calculated by taking 1000 independent samples and counting the number of samples that fall in $W$ (red, blue, yellow lines respectively). 

Because scenario (4) underestimates the true variance, this technique is conservative in rejecting $H_0$ and adopting $\hat \mu_{IPW}$.  For gender, none of the scenarios quickly reject the null hypothesis. Compare this to the number of total nominations.  Here, $H_0$ is rejected even for small sample sizes. 

\begin{figure}
\centering
\begin{subfigure}[b]{0.45\textwidth}
\centering
\includegraphics[width=\textwidth]{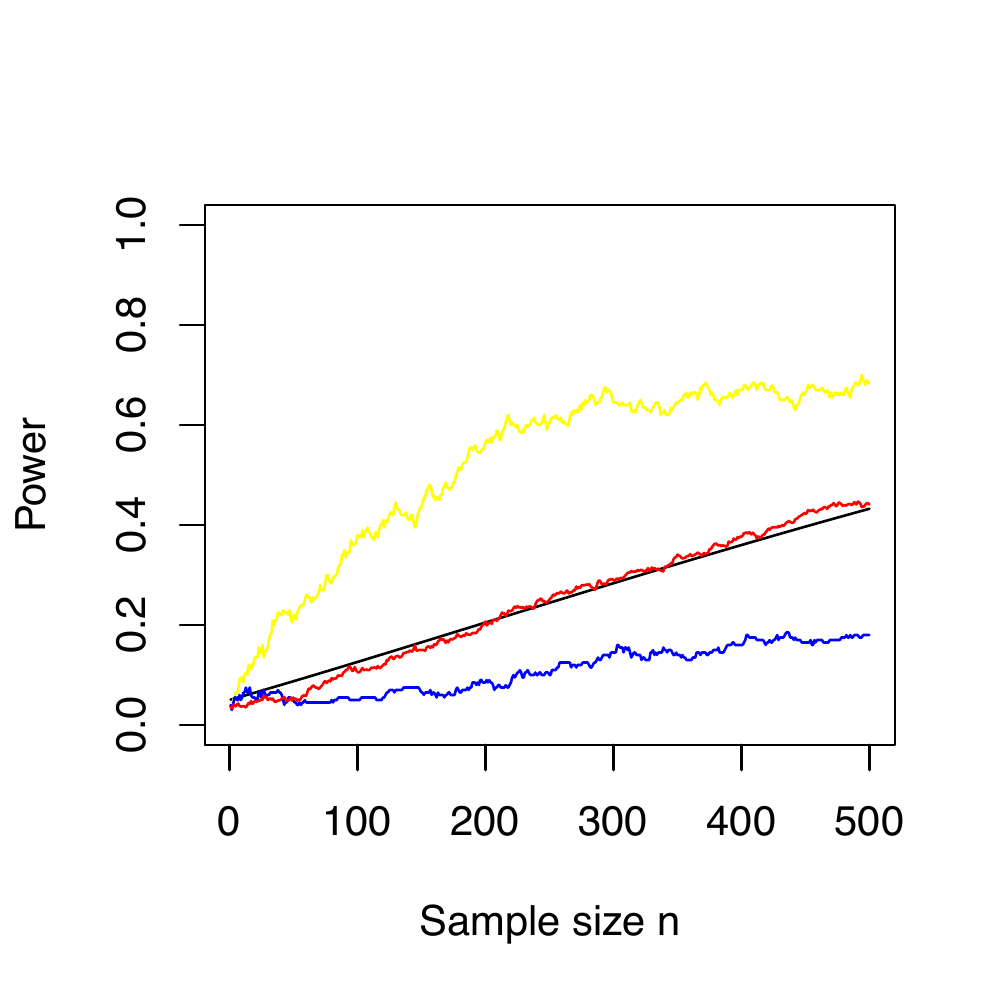}
\subcaption{Gender}
\end{subfigure}
\begin{subfigure}[b]{0.45\textwidth}
\centering
\includegraphics[width=\textwidth]{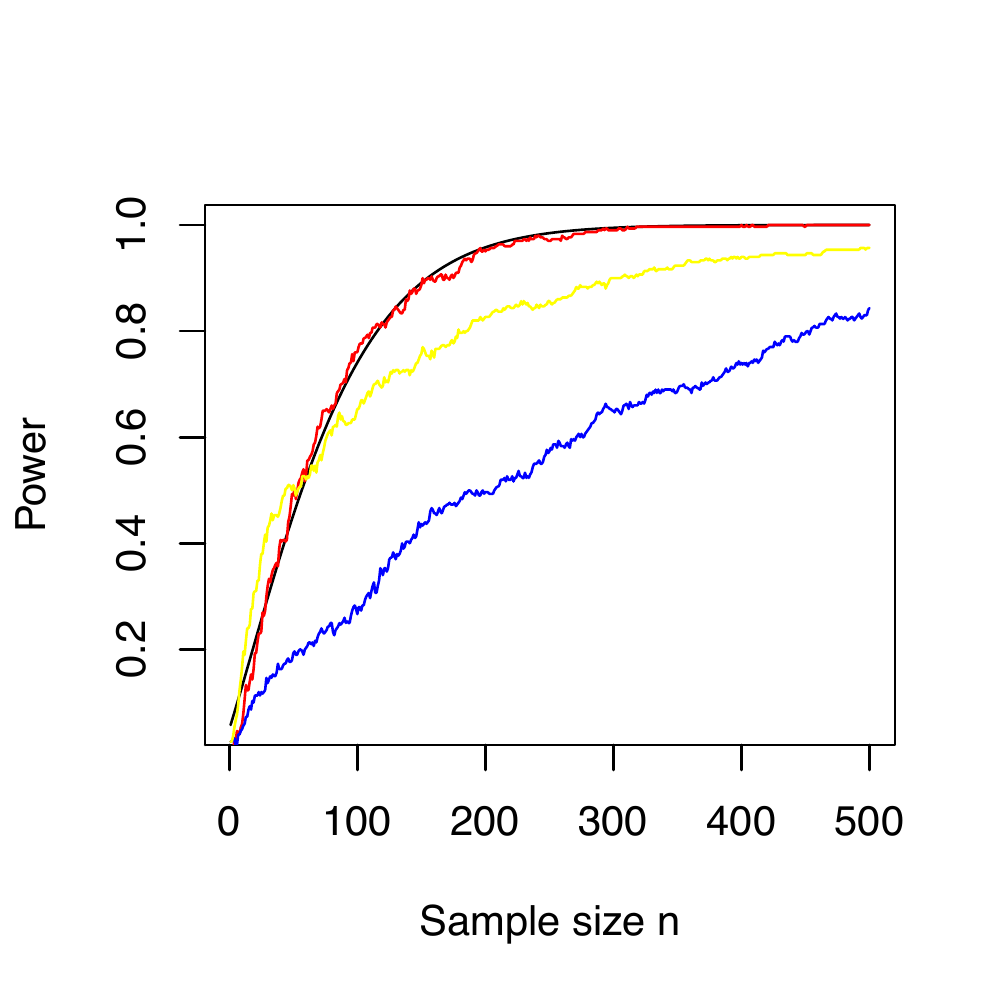}
\subcaption{Total nominations}
\end{subfigure}
\caption{Power of test as a function of sample size for two node features (a) gender and (b) total nominations of friends. The black, red, blue and yellow lines are the power under scenarios 1-4 respectively. }
\label{fig:power}
\end{figure}

The final figure plots the mean square error of $\hat \mu, \hat \mu_{IPW},$ and $\hat \mu_{BA}$; this last estimator is the bias adjusted estimator from Section \ref{sec:example}.  This simulation uses scenario  (4), the most realistic of the previous scenarios.  After drawing the sample, compute the following (for both gender and total nominations) 
\begin{enumerate}
\item The inverse probability weighted estimator $$\hat{\mu}_{IPW}=\frac{1}{n}\sum_{i=1}^{n}\frac{y(X_{i})}{N\pi_{X_{i}}}=\frac{1}{n}\sum_{i=1}^{n}\frac{y(X_{i})\bar{d}}{deg(X_i)}.$$
\item The sample average $\hat{\mu}=\frac{1}{n}\sum_{i=1}^{n}y(X_{i}).$
\item The bias adjusted estimator $$\hat{\mu}_{BA}= \left\{
\begin{array}{cc}
\hat{\mu}_{IPW} & \mbox{if $\{X_i\}_{1\leq i \leq n} \in W,$}  \\
\hat{\mu} & \mbox{if $\{X_i\}_{1\leq i \leq n} \notin W,$}
\end{array}
\right.$$
introduced in Section 3.
\end{enumerate}

Figure \ref{fig:mse} shows that for gender, the true bias of the sample average is small.  As such, the MSE of $\hat{\mu}$ (solid) is always smaller than that of $\hat{\mu}_{IPW}$ (dotted) (for sample sizes  $n<500$). For total nominations, the bias is much larger.  So, when $n>20$, the MSE of  $\hat{\mu}$ is larger than the MSE of $\hat \mu_{IPW}$. The MSE of the bias adjusted estimator $\hat{\mu}_{BA}$ (longdash) lies between that of $\hat{\mu}$ and $\hat{\mu}_{IPW}$.  In particular, when $\hat \mu_{IPW}$ drastically outperforms $\hat \mu$ (i.e. on the right of panel b), the null hypothesis is typically rejected and $\hat \mu_{BA}$ performs similarly to $\hat{\mu}_{IPW}$.

\begin{figure}
\centering
\begin{subfigure}[b]{0.45\textwidth}
\centering
\includegraphics[width=\textwidth]{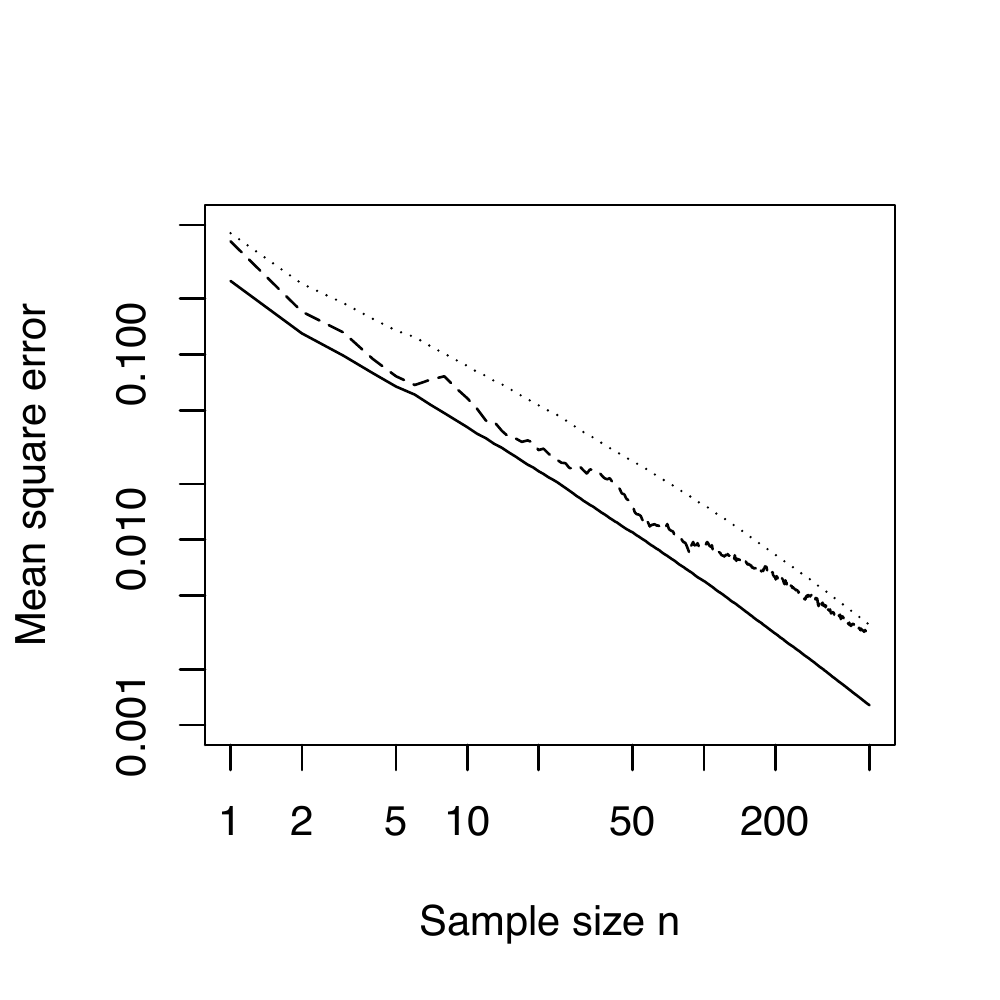}
\subcaption{Gender}
\end{subfigure}
\begin{subfigure}[b]{0.45\textwidth}
\centering
\includegraphics[width=\textwidth]{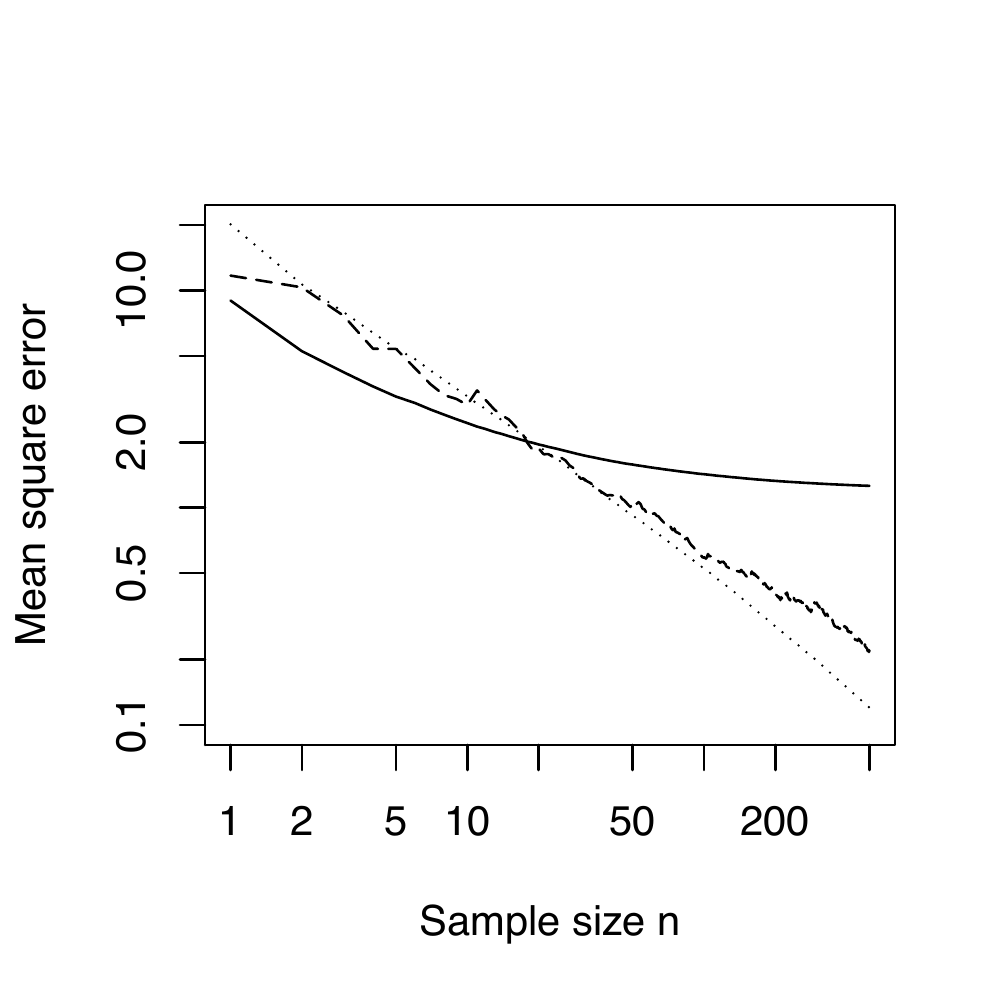}
\subcaption{Total nominations}
\end{subfigure}
\caption{The mean square error of $\hat{\mu}$ (solid), $\hat{\mu}_{IPW}$ (dotted), and $\hat{\mu}_{BA}$ (longdash) for gender and total nominations.}
\label{fig:mse}
\end{figure}


\section{Discussion}


A recent review of the RDS literature counted over 460 studies which used RDS \cite{white2015strengthening}.  Many of these studies seek to estimate the prevalence of HIV or other infectious diseases; for these studies, a point estimate of the prevalence is insufficient.  These studies have used confidence intervals constructed from bootstrap procedures and from estimates of the standard error \cite{RDSpackage}.  These standard error intervals implicitly rely on a central limit theorem and this paper provides a partial justification for such techniques, so long as $m \le 1/\lambda_2^2$.  This paper makes a first step at studying the distributional properties of two simple estimators in this regime.

Figure \ref{fig:simulation} suggests that if $m$ is larger than $1/\lambda_2^2$, then the simple estimators ($\hat \mu$ and $\hat \mu_{VH}$) are no longer normally distributed. Interestingly, under the simulation setting where the estimators are no longer normally distributed, the Q-Q plots are flatter than the line $x=y$.  This indicates that a confidence interval constructed from the standard errors would be conservative; a nominally  90\% confidence interval would cover $\mu$ more than 90\% of the time.  As such, a properly constructed interval should be narrower than the interval constructed from the standard error. If one pursues this path, then care must be taken in estimating the standard errors.  For example, a bootstrap procedure proposed in \cite{salg} has become very popular.  However, for reasons beyond the inequality in Proposition \ref{prop:auto}, this bootstrap procedure drastically underestimates the actual standard errors \cite{goel2010assessing, treevar}.

There are many reasons to suspect the construction of the sampling weights in RDS studies.  At the most basic level, the justification for 
\[\{\mbox{selection probability for node $i$}\} \propto deg(i)
\] 
comes from a Markov model which has several problematic pieces (e.g. replacement sampling, uniform selection of friends, referral process has reached equilibrium, and all network relationships are reciprocated). While these assumptions are all troublesome, they are merely sufficient conditions.  It is conceivable that $deg(i)$ is still an adequate approximation of the selection probabilities (up to scaling) even when the assumptions do not hold.  Perhaps the most difficult problem is that $deg(X_i)$ is estimated via self-report.  Taken together, there are many reasons to doubt the sampling weights.   In a related context, the third section of the paper discusses the bias-variance tradeoff between $\hat \mu$ and $\hat \mu_{IPW}$.  
The results in Proposition \ref{variancedifference} and Theorem \ref{variancedifferenceTree} presume that the sampling weights are known exactly.  However, given that the presumed model has several deficiencies, the stationary distribution of the (presumed) Markov process does not necessarily reveal the actual sampling probabilities.  As such,  $\hat \mu_{IPW}$ (and by extension $\hat \mu_{VH}$) are constructed with incorrect and noisy measurements of the sampling probabilities. This will likely make the estimator biased and more variable.  Because of this, in practice we should be less inclined towards the weighted estimator  (i.e. $\hat \mu_{VH}$) than the proposed estimator $\hat \mu_{BA}$ suggests.  Section \ref{sec:example} and the data analysis with the AddHealth network suggest that the sample average is perhaps less biased than was previously considered.  While there are certainly situations where bias corrected estimators should be used, it also seems sensible to first estimate the bias; when the bias is large, this is a relatively easy task.

The theorems in this paper do not apply to general trees, only to m-trees.  
If $\mathbb{T}$ is a Galton-Watson tree with $E(\eta(\sigma))<\lambda_2^{-2}$, 
then the simulations support the following conjecture:
$$\frac{1}{\sqrt{n}}\sum_{\sigma \in \mathbb{T}}(y(X_{\sigma})-E_{\pi}(y))\rightarrow N(0,\sigma^2),$$
where $\sigma^2$ could be computed from the results in \cite{treevar}. 
To prove this result requires a more careful study of the structure of $\{X_{\sigma}\}_{\sigma \in \mathbb{T}}$. We leave this problem to future investigation.

\section*{Acknowledgement}

This material is based upon work supported in part by the U. S. Army Research Office under grant number W911NF1510423 and the National Science Foundation under grant number DMS-1309998. 
We thank Zoe Russek for helpful comments. 
\appendix

\section{Proof of Theorem \ref{thm1}}\label{thm1proof}

In the appendix we give a proof of the theorems and propositions in the paper. First we give an outline of the proof of our main theorem. Consider the martingale $$\sum_{h}E(Y_h|\mathscr{F}_{h-1})-Y_h,$$ where $\{\mathscr{F}_h\}$ is a filtration to be defined later. Using the Markov property and the estimation of $var(Y_h)$, we show that the martingale difference sequence satisfies the condition of the martingale central limit theorem. 
In this section, $P$ will be a reversible transition matrix with eigenvalues  $1=\lambda_1 \geq |\lambda_2| \geq... \geq |\lambda_N|$ and corresponding eigenfunctions $f_1,..., f_N$ satisfying $ \sum_{k} f_i(k)f_j(k)\pi_k=\delta_{ij}$ for any $i,j$. We refer to \cite{levin2009markov} for the existence of such eigen-decomposition. Unless stated otherwise, expectations are calculated with respect to the tree indexed random walk on the graph. 

We begin with some lemma.

\begin{lemma} (Lemma 12.2 in \cite{levin2009markov}) \label{lem:spec}
Let $P$ be a reversible Markov transition matrix on the nodes in $G$ with respect to the stationary distribution $\pi$.  The eigenvectors of $P$, denoted as $f_1, \dots, f_{N}$, are real valued functions of the nodes $i \in G$ and orthonormal with respect to the inner product 
\begin{equation} \label{def:inner}
\langle f_a, f_b \rangle_\pi = \sum_{i \in G} f_a(i) f_b(i) \pi_i.
\end{equation}
If $\lambda$ is an eigenvalue of $P$, then $|\lambda|\le 1$.  The eigenfunction $f_1$ corresponding to the eigenvalue $1$ is taken to be taken to be the constant vector $\textbf{1}$.  If $X(0), \dots, X(t)$ represent $t$ steps of a Markov chain with transition matrix $P$, then  the probability of a transition from $i\in G$ to $j \in G$ in $t$ steps can be written as
\begin{equation}\label{eq:tsteps}
P(X(t) = j|X(0) = i) = P_{ij}^t =  \pi_j + \pi_j \sum_{\ell =2}^{N} \lambda_\ell^t f_\ell(i) f_\ell(j).
\end{equation}
\end{lemma}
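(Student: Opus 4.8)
The plan is to reduce the reversible chain to a genuinely symmetric matrix, apply the classical spectral theorem there, and transport the conclusions back through a similarity transform. Let $D_\pi = \mathrm{diag}(\pi_1, \dots, \pi_N)$ and define the symmetrized matrix $S = D_\pi^{1/2} P D_\pi^{-1/2}$, so that $S_{ij} = \sqrt{\pi_i/\pi_j}\, P_{ij}$. Reversibility is precisely the detailed-balance identity $\pi_i P_{ij} = \pi_j P_{ji}$, and substituting it gives $S_{ij} = \pi_i P_{ij}/\sqrt{\pi_i \pi_j} = \pi_j P_{ji}/\sqrt{\pi_i \pi_j} = S_{ji}$, so $S$ is a real symmetric matrix. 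This symmetrization is the one genuinely structural step; everything after it is routine linear algebra.

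By the spectral theorem for real symmetric matrices, $S$ admits an orthonormal basis of real eigenvectors $\phi_1, \dots, \phi_N$ (in the usual Euclidean inner product) with real eigenvalues $\lambda_1, \dots, \lambda_N$. Since $P = D_\pi^{-1/2} S D_\pi^{1/2}$ is similar to $S$, the two matrices share eigenvalues, and the eigenfunctions of $P$ are recovered as $f_\ell = D_\pi^{-1/2} \phi_\ell$, i.e.\ $f_\ell(i) = \phi_\ell(i)/\sqrt{\pi_i}$. To verify the claimed orthonormality in the $\pi$-weighted inner product I would compute directly
\[
\langle f_a, f_b \rangle_\pi = \sum_i f_a(i) f_b(i)\, \pi_i = \sum_i \frac{\phi_a(i)}{\sqrt{\pi_i}} \frac{\phi_b(i)}{\sqrt{\pi_i}}\, \pi_i = \sum_i \phi_a(i)\phi_b(i) = \delta_{ab},
\]
so the change of variables exactly converts Euclidean orthonormality into $\pi$-orthonormality.

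Two identifications then remain. Because $P$ is stochastic, $P\mathbf{1} = \mathbf{1}$, so $\lambda_1 = 1$ with eigenfunction $f_1 = \mathbf{1}$, already normalized since $\langle \mathbf{1}, \mathbf{1}\rangle_\pi = \sum_i \pi_i = 1$. The bound $|\lambda| \le 1$ for every eigenvalue follows again from stochasticity: by Gershgorin's theorem each eigenvalue lies in a disc centered at $P_{ii}$ of radius $\sum_{j\ne i} P_{ij} = 1 - P_{ii}$, and every such disc satisfies $|z| \le P_{ii} + (1-P_{ii}) = 1$.

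Finally, for the $t$-step formula I would raise $S$ to the power $t$ through its spectral resolution $S^t = \sum_\ell \lambda_\ell^t \phi_\ell \phi_\ell^\top$ and conjugate back. Writing $P^t = D_\pi^{-1/2} S^t D_\pi^{1/2}$ entrywise and substituting $\phi_\ell(i) = \sqrt{\pi_i}\, f_\ell(i)$ yields
\[
P_{ij}^t = \sum_{\ell=1}^N \lambda_\ell^t \frac{\phi_\ell(i)\phi_\ell(j)\sqrt{\pi_j}}{\sqrt{\pi_i}} = \pi_j \sum_{\ell=1}^N \lambda_\ell^t f_\ell(i) f_\ell(j),
\]
and isolating the $\ell=1$ term (where $\lambda_1 = 1$ and $f_1 \equiv 1$) produces exactly \eqref{eq:tsteps}. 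Since this is a classical result, there is no deep obstacle; the only point demanding real care is the asymmetric placement of the $\pi_j$ factor, which is where the conjugation by $D_\pi^{\pm 1/2}$ fails to cancel between the $i$ and $j$ coordinates. Getting that bookkeeping right — rather than any conceptual difficulty — is the crux of the proof.
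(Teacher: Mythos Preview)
Your proof is correct, and it is the standard symmetrization argument for reversible chains. The paper itself does not supply a proof of this lemma at all: it is quoted verbatim as Lemma~12.2 of \cite{levin2009markov} and used as a black box throughout the appendix. Your argument---conjugating by $D_\pi^{1/2}$ to obtain a symmetric matrix, invoking the real spectral theorem, and transporting the orthonormal basis back---is exactly the proof given in that reference, so there is nothing to compare beyond noting that you have reproduced the textbook derivation the authors cite.
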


\begin{lemma}
\label{cov}
For any nodes $\sigma,\tau$ in $\mathbb{T}$,
$$
cov(y(X_{\sigma}),y(X_{\tau}))=\sum_{l=2}^{N} \lambda_\ell^{d(\sigma, \tau)} <y,f_\ell>_{\pi}^2,
$$
where $<y,f_\ell>_{\pi}=\sum_{i=1}^{N} y(i)f_\ell(i)\pi_i$.
\end{lemma}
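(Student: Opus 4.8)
The plan is to reduce the covariance to an inner product over the node set of $G$ by conditioning on the most recent common ancestor of $\sigma$ and $\tau$ in $\mathbb{T}$, and then to expand $y$ in the eigenbasis $f_1,\dots,f_N$ supplied by Lemma \ref{lem:spec}.

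First I would fix notation: let $\rho\in\mathbb{T}$ be the most recent common ancestor of $\sigma$ and $\tau$, and set $a=d(\sigma,\rho)=|\sigma|-|\rho|$ and $b=d(\tau,\rho)=|\tau|-|\rho|$. Since $\mathbb{T}$ is a tree, the unique path from $\sigma$ to $\tau$ runs up to $\rho$ and back down, so $d(\sigma,\tau)=a+b$. Because $X_0\sim\pi$ and $\pi$ is stationary for $P$, every $X_\nu$ is marginally $\pi$-distributed; in particular $E[y(X_\sigma)]=E[y(X_\tau)]=E_\pi(y)=\langle y,f_1\rangle_\pi$, using $f_1\equiv 1$.

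The structural step is to condition on $X_\rho$. Because each edge transition of the $(\mathbb{T},P)$-walk is an independent copy of $P$ and the path $\rho\to\cdots\to\sigma$ and the path $\rho\to\cdots\to\tau$ use disjoint edges of $\mathbb{T}$, the random variables $X_\sigma$ and $X_\tau$ are conditionally independent given $X_\rho$ (the Markov splitting property for tree-indexed walks). Hence, with $E[y(X_\sigma)\mid X_\rho=i]=(P^a y)(i)$ and $E[y(X_\tau)\mid X_\rho=i]=(P^b y)(i)$,
\[
E[y(X_\sigma)\,y(X_\tau)]=\sum_{i}\pi_i\,(P^a y)(i)\,(P^b y)(i)=\langle P^a y,\,P^b y\rangle_\pi .
\]
Now expand $y=\sum_{\ell=1}^N \langle y,f_\ell\rangle_\pi f_\ell$, so that $P^a y=\sum_\ell \lambda_\ell^a\langle y,f_\ell\rangle_\pi f_\ell$ and likewise for $P^b y$; orthonormality of the $f_\ell$ under $\langle\cdot,\cdot\rangle_\pi$ gives $\langle P^a y,P^b y\rangle_\pi=\sum_{\ell=1}^N \lambda_\ell^{a+b}\langle y,f_\ell\rangle_\pi^2$. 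Subtracting $E[y(X_\sigma)]E[y(X_\tau)]=(E_\pi y)^2=\lambda_1^{a+b}\langle y,f_1\rangle_\pi^2$ and recalling $a+b=d(\sigma,\tau)$ yields the claimed identity. The degenerate cases $\sigma=\tau$ and, more generally, $\sigma$ an ancestor of $\tau$, are covered by the conventions $P^0=I$ and $\lambda_\ell^0=1$.

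The only genuinely non-routine point is the conditional independence used in the structural step; it is the tree analogue of the one-step Markov property and can be made rigorous by observing that, conditionally on $X_\rho$, the family $\{X_\nu\}$ indexed by the subtree of $\mathbb{T}$ on the $\sigma$-side of $\rho$ is a deterministic function of $X_\rho$ together with edge-transition randomness that is disjoint from (hence independent of) the randomness driving the $\tau$-side. Everything after that reduction is bookkeeping with the spectral decomposition of Lemma \ref{lem:spec}.
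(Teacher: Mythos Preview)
Your proof is correct. It differs from the paper's only in how the joint law of $(X_\sigma,X_\tau)$ is accessed: the paper skips the common ancestor and asserts directly, from reversibility and stationarity, that $P(X_\sigma=j\mid X_\tau=i)=P_{ij}^{d(\sigma,\tau)}$, then plugs in the spectral formula for $P^t_{ij}$ from Lemma~\ref{lem:spec}. You instead condition on the most recent common ancestor $\rho$, invoke the conditional-independence split of the two legs, and arrive at $\langle P^a y,P^b y\rangle_\pi$; the eigen-expansion then does the same job. The paper's route is marginally shorter---reversibility lets the $a$ backward steps and $b$ forward steps fuse into a single $P^{a+b}$---while your version makes the tree structure and the branching property explicit, which some readers may find clearer.
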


\begin{proof}
From the reversibility of the Markov chain and Lemma \ref{lem:spec}, we have
$$P(X_{\sigma}=j|X_{\tau}=i)=P_{ij}^{d(\sigma, \tau)}=\pi_j+\pi_j\sum_{l=2}^{N} \lambda_\ell^{d(\sigma, \tau)}f_\ell(i)f_\ell(j).$$
Therefore,
\begin{align*}
cov(y(X_{\sigma}),y(X_{\tau}))&=\sum_{i,j}y(i)y(j)\pi_iP(X_{\sigma}=j|X_{\tau}=i)-(\sum_{i}\pi_iy_i)^2\\
&= \sum_{i,j}y(i)y(j)\pi_i\pi_j \sum_{l=2}^{N} \lambda_\ell^{d(\sigma, \tau)}f_\ell(i)f_\ell(j) \\
&=\sum_{l=2}^{N} \lambda_\ell^{d(\sigma, \tau)} <y,f_\ell>_{\pi}^2,
\end{align*}
and the lemma is proved.\par
The next lemma gives the expression of $var(Y_h)$.
\end{proof}

\begin{lemma}[Variance of $Y_h$]
\label{var}
Suppose that $|\lambda_2|>0$. Then as $h\rightarrow \infty$,
\begin{align*}
var(Y_h)=\left \{
 \begin{array}{lcc}
{O(1)} &\text{if} &m<\lambda_2^{-2} \\
{O(h)} &\text{if} &m=\lambda_2^{-2} \\
{O((m\lambda_2^2)^h)} &\text{if} &m>\lambda_2^{-2}
\end{array}
\right. .
\end{align*}
\end{lemma}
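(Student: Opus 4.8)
The plan is to obtain a closed-form expression for $\mathrm{var}(Y_h)$ by expanding it as a double sum over all ordered pairs of level-$h$ nodes, applying Lemma~\ref{cov} to each covariance, and then grouping the pairs according to the depth of their most recent common ancestor. Since $|\{\tau\in\mathbb{T}:|\tau|=h\}|=m^h$,
\[
\mathrm{var}(Y_h)=\frac{1}{m^h}\sum_{|\sigma|=|\tau|=h}\mathrm{cov}\bigl(y(X_\sigma),y(X_\tau)\bigr)=\frac{1}{m^h}\sum_{|\sigma|=|\tau|=h}\ \sum_{\ell=2}^{N}\lambda_\ell^{d(\sigma,\tau)}\langle y,f_\ell\rangle_\pi^2 .
\]
For two level-$h$ nodes whose most recent common ancestor lies at level $j$ one has $d(\sigma,\tau)=2(h-j)$, so the combinatorial content of the proof is merely counting such pairs.

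Next I would carry out that count. A node at level $j$ has $m^{h-j}$ descendants at level $h$, so the number of ordered pairs $(\sigma,\tau)$ sharing a common ancestor at level $j$ is $m^{j}\cdot m^{h-j}\cdot m^{h-j}=m^{2h-j}$; subtracting off those that also share a common ancestor at level $j+1$ leaves $m^{2h-j}(1-1/m)$ ordered pairs whose common ancestor is exactly at level $j$, valid for $0\le j\le h-1$, while the $j=h$ term is just the diagonal ($\sigma=\tau$), contributing $m^{h}\,\mathrm{var}_\pi(y)$ with $d=0$. Substituting, dividing by $m^{h}$, and reindexing by $k=h-j$ gives
\[
\mathrm{var}(Y_h)=\mathrm{var}_\pi(y)+\Bigl(1-\tfrac1m\Bigr)\sum_{\ell=2}^{N}\langle y,f_\ell\rangle_\pi^2\sum_{k=1}^{h}\bigl(m\lambda_\ell^2\bigr)^{k}.
\]
Every summand here is nonnegative because the bases $m\lambda_\ell^2$ are $\ge 0$, so there is no cancellation and the asymptotics are controlled entirely by the largest base $m\lambda_2^2$.

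Finally I would read off the three regimes from the finite geometric sums $\sum_{k=1}^{h}(m\lambda_\ell^2)^k$. If $m<\lambda_2^{-2}$, then $m\lambda_\ell^2\le m\lambda_2^2<1$ for every $\ell\ge 2$, each inner sum is bounded by $\tfrac{m\lambda_2^2}{1-m\lambda_2^2}$, and $\mathrm{var}(Y_h)=O(1)$ (in fact it converges). If $m=\lambda_2^{-2}$, the eigenvalues with $|\lambda_\ell|=|\lambda_2|$ each contribute $\sum_{k=1}^{h}1=h$ while the remaining finitely many terms contribute $O(1)$, giving $O(h)$. If $m>\lambda_2^{-2}$, the $\ell=2$ base exceeds $1$ and $\sum_{k=1}^{h}(m\lambda_2^2)^k=\Theta\bigl((m\lambda_2^2)^h\bigr)$ dominates the finite sum over $\ell$, giving $O\bigl((m\lambda_2^2)^h\bigr)$. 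The only step needing care is the common-ancestor count — keeping the $j=h$ diagonal separate and getting the factor $(1-1/m)$ right — but this is elementary, and no genuine obstacle arises since Lemma~\ref{cov} has already absorbed all the probabilistic work.
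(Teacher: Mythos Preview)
Your argument is correct and essentially identical to the paper's: both apply Lemma~\ref{cov}, count ordered pairs of level-$h$ nodes by distance (the paper indexes directly by $k$ with $d(\sigma,\tau)=2k$, obtaining $s_{hk}=m^{h+k}-m^{h+k-1}$, which is exactly your $m^{2h-j}(1-1/m)$ with $k=h-j$), and then read off the three regimes from the geometric sum $\sum_{k}(m\lambda_2^2)^k$. Your write-up is slightly more explicit about the common-ancestor count and the nonnegativity/no-cancellation point, but there is no substantive difference.
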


\begin{proof}
For $k=0,1,...,h$, denote by $s_{hk}$ the number of ordered pairs $(\sigma, \tau)$ such that $|\sigma|=|\tau|=h$ and $d(\sigma,\tau)=2k$. Then $s_{h0}=m^h$, and $$s_{hk}=m^{h-k}m(m-1)(m^{k-1})^2=m^{h+k}-m^{h+k-1}$$for $k\geq 1$. By Lemma \ref{cov}, 
\begin{align*}
var(Y_h)=\frac{1}{m^h}\sum_{k=0}^{h} (s_{hk}\sum_{l=2}^{N} \lambda_\ell^{2k} <y,f_\ell>_{\pi}^2)&=\sum_{l=2}^{N} (<y,f_\ell>_{\pi}^2(1+\sum_{k=0}^{h}m^{k-1}(m-1)\lambda_\ell^{2k}))\\&\leq \sum_{l=2}^{N}  (<y,f_\ell>_{\pi}^2 \sum_{k=0}^{h}(m\lambda_2^2)^{k}).
\end{align*}
Thus
\begin{align*}
var(Y_h)=\left \{
 \begin{array}{lcc}
{O(1)} &\text{if} &m<\lambda_2^{-2} \\
{O(h)} &\text{if} &m=\lambda_2^{-2} \\
{O((m\lambda_2^2)^h)} &\text{if} &m>\lambda_2^{-2}
\end{array}
\right. .
\end{align*}

\end{proof}

\begin{corollary} 
\label{converge}
For any function $y$ on the state space, $\frac{1}{\sqrt{m^h}}Y_h \rightarrow 0$ in probability.\par
\end{corollary}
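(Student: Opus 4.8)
The statement is an easy corollary of Lemma \ref{var}, so the plan is short: I would establish $L^2$-convergence (which implies convergence in probability) by showing that both the mean and the variance of $m^{-h/2}Y_h$ tend to $0$. First note that $m^{-h/2}Y_h = m^{-h}\sum_{\tau\in\mathbb{T}:|\tau|=h} y(X_\tau)$. Since the root is initialized from $\pi$, each $X_\tau$ has marginal law $\pi$, so with the standing normalization $E_\pi(y)=0$ from Theorem \ref{thm1} and linearity of expectation we get $E(m^{-h/2}Y_h)=0$. Hence it suffices to bound $\mathrm{var}(m^{-h/2}Y_h)=m^{-h}\,\mathrm{var}(Y_h)$.

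Next I would plug in Lemma \ref{var}. In the three regimes $m<\lambda_2^{-2}$, $m=\lambda_2^{-2}$, $m>\lambda_2^{-2}$ it gives $\mathrm{var}(Y_h)$ of order $O(1)$, $O(h)$, and $O((m\lambda_2^2)^h)$ respectively (and when $|\lambda_2|=0$ the series in the proof of Lemma \ref{var} truncates to give $\mathrm{var}(Y_h)=O(1)$ directly). Dividing by $m^h$: for $m\ge 2$ a constant and any polynomial in $h$ are $o(m^h)$, while in the last regime $m^{-h}(m\lambda_2^2)^h=\lambda_2^{2h}\to 0$ because $|\lambda_2|<1$ (this holds since $|\lambda_2|\le 1$ always and $|\lambda_2|\ne 1$ is assumed). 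Thus $\mathrm{var}(m^{-h/2}Y_h)\to 0$ in every case, and Chebyshev's inequality finishes the proof: for any $\epsilon>0$, $P(|m^{-h/2}Y_h|>\epsilon)\le \epsilon^{-2}m^{-h}\mathrm{var}(Y_h)\to 0$ as $h\to\infty$.

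There is no real obstacle here; the only points requiring a moment's care are (i) that the claim implicitly presumes $m\ge 2$ (for a chain, $m=1$, we have $m^{-h/2}Y_h=Y_h$, a single $\pi$-distributed variable, which does not vanish), and (ii) that in the super-critical regime one must check that the decay of $m^h$ genuinely dominates the growth $(m\lambda_2^2)^h$ of $\mathrm{var}(Y_h)$, which is exactly where $|\lambda_2|<1$ enters. The corollary is really just the remark, used later as a ``the last wave is negligible'' input, that the top-wave average becomes asymptotically degenerate once it is scaled by the full wave size $m^h$ instead of by $\sqrt{m^h}$.
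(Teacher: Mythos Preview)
Your proposal is correct and follows the same route as the paper: bound $\mathrm{var}(m^{-h/2}Y_h)=m^{-h}\mathrm{var}(Y_h)$ via Lemma \ref{var} and conclude by Chebyshev. The paper's proof is the single line ``$\mathrm{var}(\tfrac{1}{\sqrt{m^h}}Y_h)=O(\lambda_2^{2h})\to 0$''; your case split over the three regimes of Lemma \ref{var} is a more careful version of the same idea (and in fact the uniform $O(\lambda_2^{2h})$ rate the paper asserts is slightly imprecise in the sub-critical regime, where the bound is $O(m^{-h})$, though of course that still tends to $0$). Your side remarks that $m\ge 2$ and $E_\pi y=0$ are implicitly needed are apt.
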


\begin{proof} It follows from Lemma \ref{var} that $$var(\frac{1}{\sqrt{m^h}}Y_h)=O(\lambda_2^{2h})\rightarrow 0.$$
\end{proof}

The next lemma is a convergence argument which we will use in the proof of Theorem \ref{thm1}.\par
\begin{lemma}[Slutsky's lemma]
\label{slutsky}
If $X_h\rightarrow X$ in distribution and $Y_h \rightarrow 0$ in probability, then
$X_h+Y_h \rightarrow X$ in distribution.
\end{lemma}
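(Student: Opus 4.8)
The plan is to prove this via the cumulative distribution function characterization of convergence in distribution, exploiting that the convergence-in-probability hypothesis lets us sandwich $X_h + Y_h$ between $X_h - \varepsilon$ and $X_h + \varepsilon$ on an event of asymptotically full probability. Let $F$ denote the distribution function of $X$ and fix a continuity point $x$ of $F$. For any $\varepsilon > 0$, splitting according to whether $|Y_h| \le \varepsilon$ gives $\{X_h + Y_h \le x\} \subseteq \{X_h \le x + \varepsilon\} \cup \{|Y_h| > \varepsilon\}$, hence
$$P(X_h + Y_h \le x) \le P(X_h \le x + \varepsilon) + P(|Y_h| > \varepsilon),$$
and symmetrically, starting from $\{X_h \le x - \varepsilon\} \subseteq \{X_h + Y_h \le x\} \cup \{|Y_h| > \varepsilon\}$,
$$P(X_h + Y_h \le x) \ge P(X_h \le x - \varepsilon) - P(|Y_h| > \varepsilon).$$

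Next I would choose $\varepsilon$ so that both $x + \varepsilon$ and $x - \varepsilon$ are continuity points of $F$; this is possible for arbitrarily small $\varepsilon$ since the set of discontinuities of a distribution function is at most countable. Letting $h \to \infty$, the hypothesis $X_h \to X$ in distribution yields $P(X_h \le x \pm \varepsilon) \to F(x \pm \varepsilon)$, and the hypothesis $Y_h \to 0$ in probability yields $P(|Y_h| > \varepsilon) \to 0$. Combining the two displayed bounds,
$$F(x - \varepsilon) \le \liminf_{h} P(X_h + Y_h \le x) \le \limsup_{h} P(X_h + Y_h \le x) \le F(x + \varepsilon).$$
Finally, sending $\varepsilon \downarrow 0$ along continuity points and invoking continuity of $F$ at $x$ squeezes both outer terms to $F(x)$, so $P(X_h + Y_h \le x) \to F(x)$ at every continuity point $x$, which is precisely $X_h + Y_h \to X$ in distribution.

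This lemma is entirely standard and presents no genuine obstacle; the only point that deserves a sentence of care is the choice of the perturbation size $\varepsilon$ so that the shifted points $x \pm \varepsilon$ remain continuity points of $F$, which the countability of the discontinuity set handles. If a cleaner exposition is preferred, one can instead run the same short argument with bounded uniformly continuous test functions: for such a $g$ with modulus of continuity $\omega_g$ one bounds $|Eg(X_h + Y_h) - Eg(X_h)| \le \omega_g(\varepsilon) + 2\|g\|_\infty P(|Y_h| > \varepsilon)$, then lets $h \to \infty$ and $\varepsilon \downarrow 0$, and combines with $Eg(X_h) \to Eg(X)$; this version avoids the continuity-point bookkeeping altogether.
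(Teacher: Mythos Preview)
Your proof is correct and is the standard textbook argument. The paper itself does not provide a proof of this lemma at all; it simply states Slutsky's lemma as a known convergence result and invokes it in the proof of Theorem~\ref{thm1}. So there is nothing to compare: you have supplied a valid proof where the paper relies on the reader's background.
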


The following theorem from \cite{durrett} is essential to the proof of our main theorem.\par
\begin{theorem}[Martingale central limit theorem]
\label{MCLT}
Suppose that $\{Z_h\}_{h\geq 1}$ is adapted to the filtration $\{\mathscr{F}_h\}_{h\geq 1}$ and that $E(Z_{h+1}|\mathscr{F}_h)=0$ for all $h\geq 1$. Let $S_h=\sum_{i=1}^{h}Z_i$ and $V_h=\sum_{i=1}^{h}E(Z_i^2|\mathscr{F}_{i-1})$. If\par
(1) $V_h/h \rightarrow \sigma^2>0$ in probability and\par
(2) $h^{-1}\sum_{i=1}^{h}E(Z_i^2 1_{\{|Z_i|>\epsilon\sqrt{h}\}})\rightarrow 0$ for every $\epsilon>0$,\par
then $S_h/\sqrt{h}\rightarrow N(0,\sigma^2)$ in distribution.\par
\end{theorem}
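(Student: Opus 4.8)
The plan is to prove the statement by the classical method of characteristic functions, adapted to the martingale setting where, in contrast to the i.i.d.\ case, the joint characteristic function of the increments does \emph{not} factor. Write $\mathrm{i}=\sqrt{-1}$, fix $\theta\in\mathbb{R}$, and rescale the increments into a triangular array by setting $X_{h,j}=Z_j/\sqrt{h}$ for $j=1,\dots,h$, so that $S_h/\sqrt{h}=\sum_{j=1}^{h}X_{h,j}=:T_h$ and each row $(X_{h,1},\dots,X_{h,h})$ is a martingale-difference sequence for the filtration $\{\mathscr{F}_j\}$. In this notation hypothesis (1) reads $\sum_{j=1}^{h}E(X_{h,j}^2\mid\mathscr{F}_{j-1})=V_h/h\to\sigma^2$ in probability, and (2) is the Lindeberg condition $\sum_{j=1}^{h}E\big(X_{h,j}^2\,\mathbf{1}_{\{|X_{h,j}|>\epsilon\}}\big)\to0$. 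It suffices to show $E\,e^{\mathrm{i}\theta T_h}\to e^{-\theta^2\sigma^2/2}$ for every $\theta$, which by L\'evy's continuity theorem gives $T_h\Rightarrow N(0,\sigma^2)$.

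First I would reduce to uniformly small increments. Define the truncated, recentered array $\bar X_{h,j}=X_{h,j}\mathbf{1}_{\{|X_{h,j}|\le\epsilon\}}-E\big(X_{h,j}\mathbf{1}_{\{|X_{h,j}|\le\epsilon\}}\mid\mathscr{F}_{j-1}\big)$; this preserves the martingale-difference structure and, by the Lindeberg condition (2), $\sum_j E(X_{h,j}-\bar X_{h,j})^2\to0$ as $h\to\infty$ (after which $\epsilon\to0$), so $T_h$ and $\sum_j\bar X_{h,j}$ share the same Gaussian limit and their conditional variances agree asymptotically. Hence I may assume $|X_{h,j}|\le2\epsilon$ for all $j$. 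To control the random products appearing below, I would also stop the accumulated conditional variance: let $\nu_h=\min\{k:\sum_{j\le k}E(X_{h,j}^2\mid\mathscr{F}_{j-1})>\sigma^2+1\}$ and replace $X_{h,j}$ by $X_{h,j}\mathbf{1}_{\{j\le\nu_h\}}$. By (1), $P(\nu_h\le h)\to0$, so this stopping is asymptotically invisible, but it forces $\sum_{j\le h}E(X_{h,j}^2\mid\mathscr{F}_{j-1})$ to be \emph{deterministically} bounded, which is what makes the dominated-convergence step legitimate.

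The core of the argument is a martingale factorization of $e^{\mathrm{i}\theta T_h}$. Set $A_{h,j}=E\big(e^{\mathrm{i}\theta X_{h,j}}\mid\mathscr{F}_{j-1}\big)$, which is $\mathscr{F}_{j-1}$-measurable, and define
$$U_{h,k}=\exp\Big(\mathrm{i}\theta\sum_{j=1}^{k}X_{h,j}\Big)\prod_{j=1}^{k}A_{h,j}^{-1}.$$
A one-line conditional computation shows $E(U_{h,k}\mid\mathscr{F}_{k-1})=U_{h,k-1}$, so $U_{h,\cdot}$ is a mean-one martingale and $e^{\mathrm{i}\theta T_h}=U_{h,h}\prod_{j=1}^{h}A_{h,j}$. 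A second-order Taylor expansion with the boundedness $|X_{h,j}|\le2\epsilon$ gives $A_{h,j}=1-\tfrac{\theta^2}{2}E(X_{h,j}^2\mid\mathscr{F}_{j-1})+\rho_{h,j}$ with $\sum_j|\rho_{h,j}|\le C\epsilon\sum_j E(X_{h,j}^2\mid\mathscr{F}_{j-1})$ small. Applying the standard complex-product lemma from \cite{durrett} (if $\sum_j|b_{h,j}|$ is bounded, $\max_j|b_{h,j}|\to0$ and $\sum_j b_{h,j}\to\gamma$, then $\prod_j(1+b_{h,j})\to e^{\gamma}$) to $b_{h,j}=A_{h,j}-1$ and invoking (1) yields $\prod_{j=1}^{h}A_{h,j}\to e^{-\theta^2\sigma^2/2}$ in probability. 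Finally, since the stopping keeps $|A_{h,j}|$ bounded away from $0$ uniformly, $|U_{h,h}|$ is bounded; writing $E\,e^{\mathrm{i}\theta T_h}=e^{-\theta^2\sigma^2/2}\,E\,U_{h,h}+E\big[U_{h,h}\big(\prod_j A_{h,j}-e^{-\theta^2\sigma^2/2}\big)\big]$ and using $E\,U_{h,h}=1$ together with bounded convergence on the second term completes the proof.

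The step I expect to be the main obstacle is precisely the last one: because the increments are only conditionally (not unconditionally) controlled, the factor $\prod_j A_{h,j}$ and hence $U_{h,h}$ are random, and $U_{h,h}$ involves the reciprocals $A_{h,j}^{-1}$, which a priori could be large. The whole purpose of the stopping-time truncation in the second paragraph is to render $\sum_j E(X_{h,j}^2\mid\mathscr{F}_{j-1})$ \emph{deterministically} bounded, so that $\prod_j|A_{h,j}|$ is bounded below and $U_{h,h}$ is uniformly integrable; making these bounds genuinely deterministic (rather than merely in-probability) while checking that the stopping is asymptotically negligible is the delicate bookkeeping at the heart of the theorem.
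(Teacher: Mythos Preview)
The paper does not supply its own proof of this theorem: it is quoted verbatim as a tool from \cite{durrett} and used as a black box in the proof of Theorem~\ref{thm1}. Your proposal is a correct outline of the standard characteristic-function proof of the martingale CLT, and indeed it is essentially the argument given in \cite{durrett}: truncate and recenter to get uniformly small martingale differences, stop the predictable quadratic variation so that $\sum_j E(X_{h,j}^2\mid\mathscr{F}_{j-1})$ is deterministically bounded, introduce the compensated exponential martingale $U_{h,k}=e^{\mathrm{i}\theta\sum_{j\le k}X_{h,j}}\big/\prod_{j\le k}A_{h,j}$, Taylor-expand the conditional characteristic functions $A_{h,j}$, and apply the complex product lemma together with bounded convergence. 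The point you flag as the main obstacle---uniform boundedness of $|U_{h,h}|$---is handled exactly as you suggest: after stopping, $\sum_j E(X_{h,j}^2\mid\mathscr{F}_{j-1})$ is bounded by a fixed constant, so $\prod_j|A_{h,j}|$ is bounded away from zero and $|U_{h,h}|$ is uniformly bounded, making the dominated-convergence step legitimate. Since the paper offers no alternative argument, there is nothing further to compare; your sketch would serve as a self-contained proof were one needed.
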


Now we are ready to prove our main theorems.\par
\begin{proof}[\textbf{Proof of Theorem \ref{thm1}}]
Define $Y_h$ in the same way as Theorem \ref{thm1}. Without loss of generality, suppose that $E_{\pi}(y)=0$. Since $m<\lambda_2^{-2}$, $\sqrt{m}P-I$ is invertible. Let $y'=(\sqrt{m}P-I)^{-1}y$. Then $y'$ is also a function on the state space. We will first argue on the new node feature $y'$ and then convert back to $y$. Define $$Y_h'=\frac{1}{\sqrt{m^h}}\sum_{\tau \in \mathbb{T}: |\tau|=h} y'(X_{\tau}).$$ Let $z_{hk}=\sum_{\sigma:|\sigma|=h}1_{\{X_{\sigma}=k\}}$ for $h\geq 0$ and $k=1,..., N$. Define $z_h=\{z_{h1},...,z_{hN}\}$, and $$\mathscr{F}_h=\sigma(X_{\tau}: |\tau|\leq h)$$ for $h\geq 1$. It is obvious that  $\{Y_h\}_{h\geq 1}$ is adapted to the filtration $\{\mathscr{F}_h\}_{h\geq 1}.$ Let
$$Z_h=E(Y_h'|\mathscr{F}_{h-1})-Y_h'.$$Then $\{Z_h,\mathscr{F}_h\}_{h\geq 1}$ is a martingale difference sequence. We will verify that $\{Z_h,\mathscr{F}_h\}_{h\geq 1}$ satisfies (1) and (2) in Theorem \ref{MCLT}.

We have
\begin{align*}
Z_h&=mz_{h-1}Py'/\sqrt{m^h}-Y_h'=\frac{z_{h-1}^T(\sqrt{m}P)y'}{\sqrt{m^{h-1}}}-\frac{z_h^Ty'}{\sqrt{m^h}}.
\end{align*}

For any $\sigma \in \mathbb{T}$, denote by $p(\sigma)$ the parent node of $\sigma$. $Z_h$ can also be expressed as
\begin{align*}
Z_h=\sum_{\sigma:|\sigma|=h}E(\frac{y'(X_{\sigma})}{\sqrt{m^h}}|\mathscr{F}_{h-1})-\frac{y'(X_{\sigma})}{\sqrt{m^h}}=\sum_{\sigma:|\sigma|=h} \frac{1_{p(\sigma)}Py'-y'(X_{\sigma})}{\sqrt{m^h}}=\sum_{\sigma:|\sigma|=h}W_{\sigma},
\end{align*}
where $1_{p(\sigma)}$ is the $1\times N$ vector with $1_{p(\sigma),i}=1$ if $X_{p(\sigma)}=i$ and 0 otherwise.\par
We have 
$$E(W_{\sigma}|\mathscr{F}_{h-1})=0$$
and
$$E(W_{\sigma}^2|\mathscr{F}_{h-1})=\frac{Var_{p_i}(y')}{m^h}$$
for $i=X_{p(\sigma)}$, where $p_i$ is the $i$th row of the transition matrix $P$ and $Var_{p_i}(y')=\sum_{j}p_{ij}(y(j)-\sum_{j}p_{ij}y(j))^2$. From the definition of tree indexed Markov process, if $|\sigma|=|\tau|=h$, then $W_{\sigma},W_{\tau}$ are independent given $\{X_{\sigma}:|\sigma|=h-1\}$. Using $E(W_{\sigma}|\mathscr{F}_{h-1})=0$, we have
\begin{align*}
E(Z_h^2|\mathscr{F}_{h-1})=\sum_{\sigma:|\sigma|=h}E(W_{\sigma}^2|\mathscr{F}_{h-1})=\sum_{i=1}^{N}\frac{z_{hi}}{m^h}Var_{p_i}(y').\\
\end{align*}
From Corollary \ref{converge}, $var(\frac{z_{hi}}{m^h})=O(\lambda_2^{2h})\rightarrow 0$ for every $i$. Thus $var(E(Z_h^2|\mathscr{F}_{h-1}))=O(\lambda_2^{2h})\rightarrow 0 (h\to \infty)$ and $\sum_{i=1}^{h}var(E(Z_i^2|\mathscr{F}_{i-1}))$ converges. It follows from the definition of $V_h$ and the Cauchy-Schwarz inequality that
$$\lim_{h \to \infty}var(V_h/h)=\lim_{h \to \infty}var(\frac{1}{h}\sum_{i=1}^{h}E(Z_i^2|\mathscr{F}_{i-1}))\leq\lim_{h \to \infty}\frac{1}{h}\sum_{i=1}^{h}var(E(Z_i^2|\mathscr{F}_{i-1}))=0.$$ Therefore$$V_h/h\rightarrow \sigma^2$$in probability, where
\begin{align*}
\sigma^2=E(E(Z_h^2|\mathscr{F}_{h-1}))=\sum_{i=1}^{N}\pi_i Var_{p_i}(y')&=\sum_{i=1}^{N}\pi_i(\sum_{j=1}^{N} p_{ij} y'(j)^2-(\sum_{j=1}^{N} p_{ij}y'(j))^2)\\&=var_{\pi}(y')-var_{\pi}(Py'),
\end{align*}
and condition (1) in Theorem \ref{MCLT} is satisfied.

Similarly, we have
\begin{align*}
E(Z_h^4|\mathscr{F}_{h-1})
&=\sum_{\sigma:|\sigma|=h}E(W_{\sigma}^4|\mathscr{F}_{h-1})+\sum_{\sigma,\tau:|\sigma|=|\tau|=h}E(W_{\sigma}^2W_{\tau}^2|\mathscr{F}_{h-1})\\
&\leq \frac{C_0}{m^h}+C_1\frac{m^h(m^h-1)}{m^{2h}}\leq C,\\
\end{align*}
where $C_0,C_1,C$ are constants. Thus $E(Z_h^4)\leq C$ for any $h$, and
$$h^{-1}\sum_{i=1}^{h}E(X_i^2 1_{\{|X_i|>\epsilon\sqrt{h}\}})\leq h^{-1}\sum_{i=1}^{h}E(X_i^2 \frac{X_i^2}{\epsilon^2 h})=\frac{1}{\epsilon^2 h^2}\sum_{i=1}^{h}E(X_i^4)
\leq\frac{C}{h}\rightarrow 0.$$
Condition (2) is also satisfied. From Theorem \ref{MCLT}, we have$$\frac{1}{\sqrt{h}}\sum_{i=1}^{h}Z_i=\frac{1}{\sqrt{h}}\sum_{i=1}^{h}(\frac{z_{i-1}^T(\sqrt{m}P)y'}{\sqrt{m^{i-1}}}-\frac{z_i^Ty'}{\sqrt{m^i}}) \rightarrow N(0,\sigma^2)$$ in distribution. If $m<\lambda_2^{-2}$, then from Lemma \ref{var}, $\frac{z_{h}^T(\sqrt{m}P)y'}{\sqrt{m^{h-1}}} \rightarrow 0$ in probability. From Lemma \ref{slutsky} and the definition of $y'$,
\begin{align*}
\frac{1}{\sqrt{h}}\sum_{i=1}^{h}\frac{z_{i}^T(\sqrt{m}P-I)y'}{\sqrt{m^i}}=\frac{1}{\sqrt{h}}\sum_{i=1}^{h}Y_i\rightarrow N(0,\sigma^2)
\end{align*}
in distribution, where $\sigma^2=var_{\pi}(y')-var_{\pi}(Py')=var_{\pi}((\sqrt{m}P-I)^{-1}y)-var_{\pi}(P(\sqrt{m}P-I)^{-1}y).$ The proof is now complete.\par
\end{proof}


\section{Proof of Theorem \ref{thm2} and Corollary \ref{cor:vh}}\label{thm2proof}

We provide a proof of the central limit theorem using the moment method. It involves a careful study of all the moments of $\hat\mu_h$. The following proposition is essential to our proof.

\begin{proposition}
(Moment continuity theorem) Let ${X_h}$ be a sequence of uniformly subgaussian real random variables, and let $X$ be another subgaussian random variable. Then the following statements are equivalent:

(1)$EX_h^k\rightarrow EX^k$ for all $k$

(2)$X_h \rightarrow X$ in distribution

\end{proposition}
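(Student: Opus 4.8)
The plan is to prove the equivalence by the classical method of moments, with the subgaussian hypotheses doing exactly two jobs: they supply tightness and uniform integrability, and they guarantee (via Carleman's condition) that the limiting moment sequence pins down a unique law. Throughout I would use the quantitative form of uniform subgaussianity: there is a constant $c$, depending only on the common subgaussian parameter, such that $E|X_h|^k \le (ck)^{k/2}$ for every $h$ and every $k$, and the same bound holds for $X$. In particular $\sup_h E|X_h|^{2k} < \infty$ for each fixed $k$, so the family $\{X_h^k\}_h$ is bounded in $L^2$ and hence uniformly integrable.

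For the direction $(2)\Rightarrow(1)$ I would argue as follows. Since $x\mapsto x^k$ is continuous, $X_h\to X$ in distribution gives $X_h^k\to X^k$ in distribution; combined with the uniform integrability of $\{X_h^k\}_h$ established above, this yields $E X_h^k \to E X^k$, and $EX^k$ is finite by the subgaussian moment bound on $X$. This direction is routine.

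For $(1)\Rightarrow(2)$ I would first use the moment bound and Markov's inequality to conclude that $\{X_h\}$ is tight, so by Prokhorov's theorem every subsequence admits a further subsequence $X_{h_j}$ converging in distribution to some random variable $Y$. Applying the uniform-integrability argument along this subsequence gives $E Y^k = \lim_j E X_{h_j}^k = E X^k$ for every $k$, and $Y$ inherits the factorial-type moment bound $E|Y|^{2k}\le (2ck)^{k}$. Since $(E|Y|^{2k})^{-1/(2k)} \ge (2ck)^{-1/2}$ and $\sum_k k^{-1/2}=\infty$, the law of $Y$ satisfies Carleman's condition and is therefore the unique probability measure with moment sequence $(E X^k)_k$; the same applies to $X$, so $Y$ has the law of $X$. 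As every subsequential distributional limit of $(X_h)$ equals the law of $X$, the whole sequence converges in distribution to $X$.

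The only genuinely non-mechanical ingredient — and thus the main point to get right — is the determinacy of the moment problem: one must verify that the subgaussian tail (equivalently, the growth $E|X|^{k}\lesssim k^{k/2}$) implies Carleman's condition, so that matching all moments forces equality in distribution. Tightness, uniform integrability, and the continuous mapping theorem are all standard and carry no difficulty.
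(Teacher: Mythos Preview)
Your argument is correct and is exactly the standard method-of-moments proof: uniform subgaussianity gives tightness and uniform integrability of every power, so $(2)\Rightarrow(1)$ is immediate, and for $(1)\Rightarrow(2)$ one passes to subsequential limits and invokes Carleman's condition (which the growth rate $E|X|^k \lesssim k^{k/2}$ easily satisfies) to conclude that the moment sequence determines the law.

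As for comparison with the paper: the paper does not actually prove this proposition. It is stated as a known tool (``essential to our proof'') and then immediately used to reduce Theorem~\ref{thm2} to two subproblems --- convergence of all moments of $\hat\mu_h$ and uniform subgaussianity of the sequence. So you have supplied what the paper omits. Your identification of Carleman's condition as the only non-mechanical step is accurate; everything else (Prokhorov, continuous mapping, $L^2$-boundedness implying uniform integrability) is routine, as you note.
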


Following this proposition, we can break down our proof to two parts. We will first prove that all the moments of $\hat{\mu}_h$ converge to the moments of some normal distribution. Then we will verify that $\hat{\mu}_h$ is a uniformly subgaussian sequence.

\subsection{Proof of moments convergence}
Let $X_r$ be the root of the 2-tree. Define
$$\gamma_{k,h}(i)=E[\hat{\mu}_h^k|X_{r}=i],$$ and
$$\gamma_{k,h}=E[\hat{\mu}_h^k].$$
Let $\rho=\sqrt{2}|\lambda_2|<1$. We will prove that there exist $\gamma_{k}$, $k\geq 1$ such that $|\gamma_{k,h}(i)-\gamma_{k}|=O(\rho^{h})$ for all  $k,i,h$ and that $\gamma_{k}=E(\xi^{k})$ for $\xi\sim N(0,\gamma_2)$.
Our key observation is that the left and right subtree can be seen as i.i.d copies of the whole tree given the left and right child of the seed, which makes it possible to build a relationship between $\gamma_{k,h}(i)$ and $\gamma_{k,h-1}(i)$. Only condition (c3) is needed throughout the proof.

We need the following Lemma.
\begin{lemma}
\label{sequenceconverge}
Let $\{a_h\}$ be a sequence satisfying
$$a_h=c_h(ca_{h-1}+C+d_h),$$
where $|1-c_h|=O(\rho^h)$, $|d_h|=O(\rho^h)$, $C$ is a constant and $c<\rho<1$. Then $|a_h-C/(1-c)|=O(\rho^h)$.
\end{lemma}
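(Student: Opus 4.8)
The plan is to show that the sequence $a_h$ is bounded, then track the deviation $b_h = a_h - C/(1-c)$ and prove it decays like $O(\rho^h)$. First I would establish boundedness: since $|1-c_h| = O(\rho^h)$, there is a constant $M$ and an index $h_0$ with $|c_h| \le M$ for all $h$ (indeed $c_h \to 1$), and $|d_h| \le D\rho^h \le D$. Then $|a_h| \le M(c|a_{h-1}| + C + D)$, and because $c < 1$ one can pick $h_0$ large enough that $Mc < \theta$ for some $\theta < 1$ (using $c_h \to 1$, so eventually $c_h c < \theta$); a routine induction from $h_0$ onward then gives a uniform bound $|a_h| \le A$ for all $h$. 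This boundedness is what makes the error analysis work.

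Next I would write the recursion for the deviation. Let $L = C/(1-c)$, so that $L = c L + C$. Subtracting this fixed-point identity from $a_h = c_h(c a_{h-1} + C + d_h)$ gives
\begin{align*}
a_h - L &= c_h(c a_{h-1} + C + d_h) - (cL + C) \\
&= c_h c (a_{h-1} - L) + (c_h - 1)(cL + C) + c_h d_h.
\end{align*}
Writing $b_h = a_h - L$, this reads $b_h = c_h c\, b_{h-1} + e_h$ where $e_h := (c_h-1)(cL+C) + c_h d_h$. Since $|c_h - 1| = O(\rho^h)$, $|d_h| = O(\rho^h)$, and $c_h$ is bounded, we get $|e_h| \le E \rho^h$ for some constant $E$. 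Also, for $h$ large enough, $|c_h c| \le \theta < 1$ with $\theta$ any fixed constant strictly between $c$ and $1$ (possible because $c < \rho < 1$ and $c_h \to 1$; in fact one can take $\theta < \rho$ as well, or just $\theta$ slightly bigger than $c$).

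Finally I would unroll the linear recursion $|b_h| \le \theta |b_{h-1}| + E\rho^h$. Iterating from some fixed $h_1$ onward,
\[
|b_h| \le \theta^{h-h_1} |b_{h_1}| + E \sum_{j=h_1+1}^{h} \theta^{h-j} \rho^{j}.
\]
The first term is $O(\theta^h) = O(\rho^h)$ since $\theta < \rho$ (or, if one only arranges $\theta$ slightly above $c$, one still has $\theta < 1$ and can instead choose $\theta < \rho$ directly because $c < \rho$). For the sum, $\sum_{j \le h} \theta^{h-j}\rho^j = \rho^h \sum_{j\le h} (\theta/\rho)^{h-j} \le \rho^h \cdot \frac{1}{1 - \theta/\rho}$ when $\theta < \rho$, which is $O(\rho^h)$. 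Hence $|b_h| = |a_h - C/(1-c)| = O(\rho^h)$, as claimed. The only mildly delicate point — and the main obstacle — is handling the first finitely many terms where $c_h c$ might exceed $1$: these contribute only a constant factor absorbed into the $O(\cdot)$, and the boundedness step above takes care of them, so the argument is really just careful bookkeeping around the geometric-type recursion.
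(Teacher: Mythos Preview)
Your proof is correct and takes essentially the same approach as the paper: both reduce to the $C=0$ situation (you by subtracting the fixed point $L=C/(1-c)$, the paper by a WLOG shift), then unroll the linear recursion and use $c<\rho$ to bound the resulting geometric sum by $O(\rho^h)$. Your preliminary boundedness step is unnecessary, since $|b_{h_1}|$ is automatically a finite constant and the unrolling from $h_1$ already gives the result; the paper's version is slightly leaner in that it bounds all partial products $\prod_k |c_k|$ at once via $\sum_k |1-c_k|<\infty$ and unrolls without splitting at an index $h_1$.
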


\begin{proof}
Without loss of generality, suppose that $c_h\neq 0$ and $C=0$. Since $|1-c_h|=O(\rho^h)$ and $|d_h|=O(\rho^h)$, there exists $M$ such that $\prod_{k=1}^{h}|c_k|\leq M$ and $|d_h|\leq M\rho^h$ for all $h$. Therefore,
$$|a_h|=\sum_{k=0}^{h}c^{h-k}d_k\prod_{i=h-k}^{h}|c_i|\leq \frac{M^2}{\rho-c}\rho^h,$$
and the lemma is proved.
\end{proof}

We use an induction on $k$. First, we will prove that $\gamma_1=0$. In fact, from Lemma \ref{lem:spec}, 
$$E(y(X_{\sigma})|X_{r}=i)=\sum_{j=1}^{N}y(j)P_{ij}^{|\sigma|}=O(|\lambda_2|^{|\sigma|}).$$
Therefore $$E(\hat{\mu}_h|X_{r}=i)=\frac{1}{\sqrt{2}^h}\sum_{k=1}^{h}2^{k}O(|\lambda_2|^k)=O(\rho^{h})$$
for all i, and $|\gamma_{1,h}(i)-\gamma_1|=O(\rho^{h})$ for $\gamma_1=0$.

Now we move from $k-1$ to $k$. Without loss of generality, suppose that $\gamma_{2,h}(i)>1$ for all $h,i$ (or we can multiply $y$ with a large constant). It follows that $\gamma_{2k,h}(i)\geq(\gamma_{2,h}(i))^k>1$ for all $k$.
We can decompose $\gamma_{k,h}(i)$ into
\begin{equation}
\begin{aligned}
\label{decompose}
\gamma_{k,h}(i)&=E[\hat{\mu}_h^k|X_{r}=i]-E[(\frac{1}{\sqrt{2}^h}\sum_{\sigma\in \mathbb{T}, 0<|\sigma|\leq h}X_{\sigma})^k|X_{r}=i]\\
&+E[(\frac{1}{\sqrt{2}^h}\sum_{\sigma\in \mathbb{T}, 0<|\sigma|\leq h}X_{\sigma})^k|X_{r}=i]\\
&:=I_1+I_2
\end{aligned}
\end{equation}
If $k$ is even, then from Holder's inequality we know that 
\begin{equation}
\begin{aligned}
\label{Holdereven}
|I_1|=&|E[\hat{\mu}_h^k|X_{r}=i]-E[(\hat{\mu}_h-\frac{i(s)}{\sqrt{2^{h}}})^k|X_{r}=i]|\\
=&\sum_{m=1}^{k}\binom{k}{m}(-i\sqrt{2}^{-h})^{m}E[\hat{\mu}_h^{k-m}|X_{r}=i]\\
\leq& \sum_{m=1}^{k}\binom{k}{m}(|i|\sqrt{2}^{-h})^{m}E[\hat{\mu}_h^k|X_{r}=i]\\
\leq& [(1+M\sqrt{2}^{-h})^{k}-1]E[\hat{\mu}_h^k|X_{r}=i].\\
\end{aligned}
\end{equation}
Likewise, If $k$ is odd, we have
\begin{equation}
\begin{aligned}
\label{Holderodd}
|I_1|=&E[\hat{\mu}_h^k|X_{r}=i]-E[(\hat{\mu}_h-\frac{i}{\sqrt{2^{h}}})^k|X_{r}=i]\\
\leq& [(1+M\sqrt{2}^{-h})^{k}-1]E[\hat{\mu}_h^{k-1}|X_{r}=i]\\
\end{aligned}
\end{equation}
Since $k$ is fixed, $E[\hat{\mu}_h^{k-1}|X_{r}=i]$ is bounded from our assumption on $\gamma_{k-1,h}(i)$, and 
$(1+M\sqrt{2}^{-h})^{k}-1=O(\sqrt{2}^{-h})=O(\rho^h).$ Hence,
$$
|I_1|\leq[(1+M\sqrt{2}^{-h})^{k}-1]E[\hat{\mu}_h^{k-1}|X_{r}=i]=O(\rho^h).
$$

Let $X_{lc}$ and $X_{rc}$ be the left and right child of the root and $\mathbb{T}_l$ and $\mathbb{T}_r$ the left and right subtree, we have
\begin{equation}
\begin{aligned}
\label{subtree}
&E[(\hat{\mu}_h-\frac{i}{\sqrt{2^{h}}})^k|X_{r}=i]\\
=&E((\frac{1}{\sqrt{2}}(\frac{1}{\sqrt{2}^{h-1}}\sum_{\sigma\in \mathbb{T}_r, |\sigma|\leq h}X_{\sigma}+\frac{1}{\sqrt{2}^{h-1}}\sum_{\sigma\in \mathbb{T}_l, |\sigma|\leq h}X_{\sigma}))^k|X_{r}=i)\\
=&\sum_{u,v}p_{iu}p_{iv}E((\frac{1}{\sqrt{2}}(\frac{1}{\sqrt{2}^{h-1}}\sum_{\sigma\in \mathbb{T}_r, |\sigma|\leq h}X_{\sigma}+\frac{1}{\sqrt{2}^{h-1}}\sum_{\sigma\in \mathbb{T}_l, |\sigma|\leq h}X_{\sigma}))^k|X_r=i, X_{lc}=u, X_{rc}=v)\\
=&\sum_{u,v}p_{iu}p_{iv}E((\frac{1}{\sqrt{2}}(\frac{1}{\sqrt{2}^{h-1}}\sum_{\sigma\in \mathbb{T}_r, |\sigma|\leq h}X_{\sigma}+\frac{1}{\sqrt{2}^{h-1}}\sum_{\sigma\in \mathbb{T}_l, |\sigma|\leq h}X_{\sigma}))^k|X_{lc}=u, X_{rc}=v)\\
=&\sum_{u,v}p_{iu}p_{iv}\frac{1}{\sqrt{2}^{k}}(\gamma_{k,h-1}(u)+\gamma_{k,h-1}(v))+S_1\\
=&\frac{1}{\sqrt{2}^{k-2}}\sum_{u}p_{iu}\gamma_{k,h-1}(u)+S_1,\\
\end{aligned}
\end{equation}
where
$$S_1=\sum_{m=1}^{k-1}\frac{1}{\sqrt{2}^{k}}\binom{k}{m}\sum_{u,v}p_{iu}p_{iv}\gamma_{m,h-1}(u)\gamma_{k-m,h-1}(v).$$

If $k=2$, Equation \ref{Holdereven} and \ref{subtree} reduce to
$$
\gamma_{k,h}(i)=\sum_{u}p_{iu}\gamma_{k,h-1}(u)+\delta_{h}(i),
$$
where
$$\delta_{h}(i)=\frac{y(i)^2}{2^h}-\frac{2y(i)}{\sqrt{2}^h}\gamma_{1,h}(i)+(\sum_{u}p_{iu}\gamma_{1,h}(u))^2=O(\rho^h).
$$
Write $\nu_h=\{\gamma_{h}^2\}'$ and $\delta_{h}=\{\delta_{h}\}'$. For $n\geq 2$,
$$\nu_h=P\nu_{h-1}+\delta_{h}.$$
Thus by setting $\delta_1=0$ we have $\nu_h=P^h\nu_1+\sum_{k=1}^{h}P^k\delta_{h-k}$,and it is not hard to verify that all the components of $\nu_h$ (i.e., every $\gamma_{k,h}(i)$) converge to $\gamma_2=\pi^t\nu_1+\sum_{h=1}^{\infty}\pi^t\delta_h$ with rate $\rho^h$.

Now suppose that $k>2$. There are a fixed number of terms in $S_1$. Since $|\gamma_{l,h}(i)-\gamma_{l}|=O(\rho^{h})$ for all $i\in S$ and $l<k-1$, we have
$$|S_1-\sum_{m=1}^{k-1}\frac{1}{\sqrt{2}^{k}}\binom{k}{m}\gamma_{m}\gamma_{k-m}|=O(\rho^{h}).$$Thus,
\begin{equation}
\label{term2approx}
I_2=\frac{1}{\sqrt{2}^{k-2}}\sum_{u}p_{iu}\gamma_{k,h-1}(u)+\sum_{m=1}^{k-1}\frac{1}{\sqrt{2}^{k}}\binom{k}{m}\gamma_{m}\gamma_{k-m}+O(\rho^{h}).
\end{equation}

Combining Equation \ref{Holdereven}, \ref{Holderodd} and \ref{term2approx} we arrive at the final equation for $\gamma_{k,h}(i)$:
\begin{equation}
\gamma_{k,h}(i)=c_{k,h}I_2=c_{k,h}(\frac{1}{\sqrt{2}^{k-2}}\sum_{u}p_{iu}\gamma_{k,h-1}(u)+\sum_{m=1}^{k-1}\frac{1}{\sqrt{2}^{k}}\binom{k}{m}\gamma_{m}\gamma_{k-m}+O(\rho^{h})),
\end{equation}
where
$c_{k,h}=1$ if $k$ is odd and $c_{k,h}=1+O(\rho^h)\in[2-(1+M\sqrt{2}^{-h})^{k},(1+M\sqrt{2}^{-h})^{k}]$ if $k$ is even.
Since $\frac{1}{\sqrt{2}^{k-2}}<\rho$ and $\prod_{i=1}^{h}c_{k,h}$ converges, we conclude from Lemma  \ref{sequenceconverge} that
$$|\gamma_{k,h}(i)-\gamma_{k}|=O(\rho^{h}),$$ where
$\gamma_{k}=\sum_{m=1}^{k-1}\frac{1}{\sqrt{2}^{k}}\binom{k}{m}\gamma_{m}\gamma_{k-m}/(1-\frac{1}{\sqrt{2}^{k-2}})$.

We have proved that $|\gamma_{k,h}(i)-\gamma_{k}|=O(\rho^{h})$ for all  $k,i,h$. Let $h$ tend to infinity in Equation \ref{term2approx}, we have 
\begin{equation}
\label{relation}
\gamma_{k}=\frac{1}{\sqrt{2}^{k-2}}\gamma_{k}+\sum_{m=1}^{k-1}\frac{1}{\sqrt{2}^{k}}\binom{k}{m}\gamma_{m}\gamma_{k-m}.
\end{equation}

Now suppose that $\xi_i$ is a sequence of i.i.d $N(0, \gamma_2)$ variables. Let
$$\tilde\gamma_{k}=\lim_{h \to \infty}E((\frac{\xi_1+\dots+\xi_h}{\sqrt{h}})^k),$$
then $\{\tilde\gamma_{k}\}, k\in \mathbb{N}$ also follows Equation \ref{relation}. Since $\gamma_1=\tilde\gamma_1=0$ and $\gamma_2=\tilde\gamma_2$ we have $\gamma_{k}=\tilde \gamma_{k}$ for every $k$, and the argument is proved.

\subsection{Proof of uniform subgaussianity}
To prove that $\hat{\mu}_h$ are uniformly subgaussian for all $h$, we need to show that there exists some $\theta$ such that
$$\gamma_{2\ell,h}(i)\leq \theta^{2\ell}\gamma_{2\ell}$$
for all $\ell$. 

Let $c_1$ be a large constant to be defined later and $c_{h+1}=(1+M(2\lambda_2')^{-h})(1+(\sqrt{2}\lambda_2')^{h})c_h$, where $\lambda_2'=max\{|\lambda_2|,2/3\}$ and $M=||y||_{\infty}$. Let $s_{\ell,h}=||\gamma_{\ell,h}(i)||_{\infty}$. Since $0<\sqrt{2}\lambda_2'<1<2\lambda_2'$, $c_{h+1}>c_h$ and $\theta=\lim_{h \to \infty}c_h$ exists. Thus it suffices to prove that
\begin{equation}
\label{supeven}
s_{2\ell,h}\leq c_h ^{2\ell}\gamma_{2\ell}
\end{equation}
and
\begin{equation}
\label{supodd}
s_{2\ell-1,h}\leq c_h ^{2\ell-1}(\sqrt{2}\lambda_2')^h \gamma_{2\ell}
\end{equation}
for all $\ell$ and $h$.

Again we use an induction on $\ell$. Since $\gamma_{1,h}(i)=O(|\lambda_2|^h)$, we can choose $c_1$ large enough such that the inequalities in Equation \ref{supeven} and \ref{supodd} hold for all $(h,\ell)$ with $h=1$ or $\ell=1$. Suppose that \ref{supeven} and \ref{supodd} are verified for all $\ell\leq k$. We will prove that they are also true for $\ell=k+1$. 

From condition (c1) and (c2), we know that $$||\sum_{u}p_{iu}\gamma_{2k+1,h}(u)||_{\infty}\leq\lambda_2 ||\gamma_{2k+1,h}(i)||_{\infty}=\lambda_2 s_{2k+1,h}.$$

From our assumption of induction we have
\begin{equation}
\begin{aligned}
s_{2k+1,h}&\leq[(1+M\sqrt{2}^{-h})^{2k+1}-1]c_{h-1} ^{2k}\gamma_{2k}+c_{h-1}^{2k+1}\frac{(|\sqrt{2}\lambda_2'|)^{h}}{\sqrt{2}^{2k+2}}(\binom{2k+1}{0}\gamma_{2k+2}\\
&+\sum_{m=1}^{k}(\binom{2k+1}{2m-1}+\binom{2k+1}{2m})\gamma_{2k+2-2m}\gamma_{2m}+\binom{2k+1}{2k+1}\gamma_{2k+2})\\
&= [(1+M\sqrt{2}^{-h})^{2k+1}-1]c_{h-1} ^{2k}\gamma_{2k}+c_{h-1} ^{2k+1}\\
&\frac{(|\sqrt{2}\lambda_2'|)^{h}}{\sqrt{2}^{2k+2}}(\binom{2k+2}{0}\gamma_{2k+2}+\sum_{m=1}^{k}\binom{2k+2}{2m}\gamma_{2k+2-2m}\gamma_{2m}+\binom{2k+2}{2k+2}\gamma_{2k+2})\\
&=[(1+M\sqrt{2}^{-h})^{2k+1}-1]c_{h-1} ^{2k}\gamma_{2k}+c_{h-1} ^{2k+1}(|\sqrt{2}\lambda_2'|)^{h}\gamma_{2k+2}\\
&\leq([(1+M\sqrt{2}^{-h})^{2k+1}-1](|\sqrt{2}\lambda_2'|)^{-h}+1)c_{h-1} ^{2k+1}(|\sqrt{2}\lambda_2'|)^{h}\gamma_{2k+2}\\
&\leq(1+M(2\lambda_2')^{-h})^{2k+1}c_{h-1} ^{2k+1}(|\sqrt{2}\lambda_2'|)^{h}\gamma_{2k+2}\\
&\leq c_h^{2k+1}(|\sqrt{2}\lambda_2'|)^{h}\gamma_{2k+2},
\end{aligned}
\end{equation}
and Equation \ref{supodd} is true for $2k+1$.

Now we move from $2k+1$ to $2k+2$. Recall that
\begin{align*}
&\gamma_{2k+2,h}(i)=c_{2k+2,h}(\sum_{m=0}^{2k+2}\frac{1}{\sqrt{2}^{2k+2}}\binom{2k+2}{m}\sum_{u,v}p_{iu}p_{iv}\gamma_{m,h-1}(u)\gamma_{2k+2-m,h-1}(v))\\
\leq&(1+M2^{-h})^{2k+2}(\sum_{m=0}^{2k+2}\frac{1}{\sqrt{2}^{2k+2}}\binom{2k+2}{m}s_{m,h-1}s_{2k+2-m,h-1}).
\end{align*}

Thus,
$$
s_{2k+2,h}\leq (1+M2^{-h})^{2k+2}(\sum_{m=0}^{2k+2}\frac{1}{\sqrt{2}^{2k+2}}\binom{2k+2}{m}s_{m,h-1}s_{2k+2-m,h-1}).
$$
Let
$$I_1=\sum_{m=0}^{k+1}\frac{1}{\sqrt{2}^{2k+2}}\binom{2k+2}{2m}s_{2m,h-1}s_{2k+2-2m,h-1}$$
and
$$I_2=\sum_{m=0}^{k}\frac{1}{\sqrt{2}^{2k+2}}\binom{2k+2}{2m+1}s_{2m+1,h-1}s_{2k+1-2m,h-1}.$$
Then $s_{2k+2,h}\leq (1+M2^{-h})^{2k+2}(I_1+I_2)$. 

We have
\begin{equation}
\begin{aligned}
\label{subgaussterm1}
I_1 &\leq \sum_{m=0}^{k+1}\frac{1}{\sqrt{2}^{2k+2}}\binom{2k+2}{2m}c_{h-1}^{2m}\gamma_{2m}c_{h-1}^{2k+2-2m}\gamma_{2k+2-2m}\\
&=c_{h-1}^{2k+2}\sum_{m=0}^{k+1}\frac{1}{\sqrt{2}^{2k+2}}\binom{2k+2}{2m}\gamma_{2m}\gamma_{2k+2-2m}\\
&=c_{h-1}^{2k+2}\gamma_{2k+2},\\
\end{aligned}
\end{equation}
where the last equality follows from Equation \ref{relation}.
On the other hand
\begin{equation}
\begin{aligned}
I_2&\leq \sum_{m=0}^{k}\frac{1}{\sqrt{2}^{2k+2}}\binom{2k+2}{2m+1}c_{h-1}^{2m+1}(\sqrt{2}\lambda_2')^{h-1}\gamma_{2m+2}c_{h-1}^{2k+1-2m}(\sqrt{2}\lambda_2')^{h-1}\gamma_{2k+2-2m}\\
&=c_{h-1}^{2k+2}(\sqrt{2}\lambda_2')^{2(h-1)}\sum_{m=0}^{k}\frac{1}{\sqrt{2}^{2k+2}}\binom{2k+2}{2m+1}\gamma_{2m+2}\gamma_{2k+2-2m}.\\
\end{aligned}
\end{equation}
It can be directly verified that for all $m$,
$$
\binom{2k+2}{2m+1}\gamma_{2m+2}\gamma_{2k+2-2m}\leq2(k+1)\binom{k}{m}\gamma_{2k+2}.
$$
Thus,
\begin{equation}
\begin{aligned}
\label{subgaussterm2}
I_2&\leq c_{h-1}^{2k+2}(\sqrt{2}\lambda_2')^{h-1}\sum_{m=0}^{k}\frac{1}{\sqrt{2}^{2k+2}}2(k+1)\binom{k}{m}\gamma_{2k+2}\\
&=c_{h-1}^{2k+2}(\sqrt{2}\lambda_2')^{2(h-1)}(k+1)\gamma_{2k+2}\\
&\leq c_{h-1}^{2k+2}(\sqrt{2}\lambda_2')^{h}2k\gamma_{2k+2}.
\end{aligned}
\end{equation}
Combining Equation \ref{subgaussterm1} and \ref{subgaussterm2}, we have
\begin{equation}
\begin{aligned}
I_1+I_2&\leq c_{h-1}^{2k+2}\gamma_{2k+2}(1+2k*(\sqrt{2}\lambda_2')^{h})\\
&\leq c_{h-1}^{2k+2}\gamma_{2k+2}(1+(\sqrt{2}\lambda_2')^{h})^{2k+2}.\\
\end{aligned}
\end{equation}
Therefore,
\begin{equation}
\begin{aligned}
s_{h,2k+2}&\leq c_{2k+2,h}(I_1+I_2)\\
&\leq c_{h-1}^{2k+2}(1+M2^{-h})^{2k+2}(1+(\sqrt{2}\lambda_2')^{h})^{2k+2}\gamma_{2k+2}\\
&\leq c_h^{2k+2}\gamma_{2k+2},\\
\end{aligned}
\end{equation}
and the theorem is proved.

\subsection{Proof of Corollary \ref{cor:vh}}

\begin{proof}
By Theorem \ref{thm2} and Slutsky's lemma, it suffices to prove that $\hat{\bar{d}} \rightarrow \bar d$ in probability. Let $D=max_{1\leq i \leq N}deg(X_i)$. For any $\sigma \in \mathbb{T}$, $E_{\pi}\frac{1}{deg(X_{\sigma})}=\frac{1}{\bar d}$. Thus 
$E\frac{1}{\hat{\bar{d}}}=\frac{1}{\bar d}$, and it follows from Theorem \ref{thm2} that ${\hat {\bar {d}}}\rightarrow {\bar d}$ in probability. Since $\hat{\bar{d}}, \bar d \leq D$, we have $P(|\hat{\bar{d}}-\bar d|>\epsilon)\leq P(|\frac{1}{\hat{\bar{d}}}-\frac{1}{\bar d}|>\epsilon/D^2) \rightarrow 0$ for all $\epsilon>0$, and the corollary is proved.

\end{proof}

\section{Proof of Proposition \ref{variancedifference} and Theorem \ref{variancedifferenceTree}}\label{propproof}
\begin{proof} [\textbf{Proof of Proposition \ref{variancedifference}}]
Since $y(1),...,y(N)$ are uncorrelated, we have
$$
E(y(X_i)y(X_j))=\left \{
 \begin{array}{lcc}
{\mu_2\sum_{i=1}^{N}\pi_i^2} &\text{if} &i\neq j \\
{\mu_2} &\text{if} &i=j \\
\end{array}
\right. .
$$
Therefore,
\begin{align*}
var(\hat{\mu})=\frac{\mu_2}{n}-\frac{\mu_1^2}{n}+\frac{n-1}{n}(\mu_2\sum_{i=1}^{N}\pi_i^2+\mu_1^2(1-\sum_{i=1}^{N}\pi_i^2)-\mu_1^2).
\end{align*}
Similarly, we have
\begin{align*}
var(\hat{\mu}_{IPW})&=\frac{1}{n}var(\frac{y(X_1)}{N\pi_{X_1}})+\frac{n-1}{n}cov(\frac{y(X_1)}{N\pi_{X_1}}\frac{y(X_2)}{N\pi_{X_2}})\\&=\frac{\mu_2}{n}\sum_{i=1}^{N}\frac{1}{N^2\pi_i}-\frac{\mu_1^2}{n}+\frac{n-1}{n}(\frac{\mu_2}{N}+\mu_1^2(1-\frac{1}{N})-\mu_1^2).
\end{align*}Let
\begin{equation}
\begin{aligned}
\label{VD}
VD&=var(\hat{\mu}_{IPW})-var(\hat{\mu})\\
&=\frac{\mu_2}{n}(\sum_{i=1}^{N}\frac{1}{N^2\pi_i}-1)+\frac{n-1}{n}var(y)(\frac{1}{N}-\sum_{i=1}^{N}\pi_i^2).
\end{aligned}
\end{equation}
Since $max_{1\leq i\leq N}N\pi_i=C_1$, $N^2\pi_i\pi_j\leq {C_1^2}$ for all $i,j$. Therefore
\begin{equation}
\label{term1}
\begin{aligned}
\sum_{i=1}^{N}\frac{1}{N^2\pi_i}-1=\frac{1}{2}\sum_{i=1}^{N}\sum_{j=1}^{N}(\frac{\sqrt{\pi_j}}{N\sqrt{\pi_i}}-\frac{\sqrt{\pi_i}}{N\sqrt{\pi_j}})^2&=\frac{1}{2}\sum_{i=1}^{N}\sum_{j=1}^{N}\frac{(\pi_i-\pi_j)^2}{N^2\pi_i\pi_j}\\
&\geq\frac{1}{2C_1^2}\sum_{i=1}^{N}\sum_{j=1}^{N}{(\pi_i-\pi_j)^2}.
\end{aligned}
\end{equation}
We have
\begin{equation}
\begin{aligned}
\label{term2}
\frac{1}{2}\sum_{i=1}^{N}\sum_{j=1}^{N}{(\pi_i-\pi_j)^2}=\frac{1}{2}\sum_{i=1}^{N}\sum_{j=1}^{N}(\pi_i^2+\pi_j^2)-\sum_{i=1}^{N}\sum_{j=1}^{N}\pi_i\pi_j&=N\sum_{i=1}^{N}\pi_i^2-1\\
&=N^2var(\pi).
\end{aligned}
\end{equation}
From Eq.\ref{VD}, \ref{term1} and \ref{term2},
$$
VD\geq(\frac{\mu_2N^2}{C_1^2n}-Nvar(y))var(\pi),
$$
and the proposition is proved.
\end{proof}

\begin{proof} [\textbf{Proof of Theorem \ref{variancedifferenceTree}}]
For the $(\mathbb{T},P)$-walk on $G$, the variance of $\hat{\mu}_{IPW}$ and $\hat{\mu}$ is given by
\begin{align*}
var(\hat{\mu}_{IPW})=\frac{\mu_2}{n}\sum_{i=1}^{N}\frac{1}{N^2\pi_i}-\frac{\mu_1^2}{n}+\frac{1}{n^2}\sum_{\sigma\neq \tau}var(y)\sum_{i=1}^{N}\frac{p_{ii}^{d(\sigma, \tau)}}{N^2\pi_i}
\end{align*}
and
\begin{align*}
var(\hat{\mu})=\frac{\mu_2}{n}-\frac{\mu_1^2}{n}+\frac{1}{n^2}\sum_{\sigma\neq \tau}var(y)\sum_{i=1}^{N}\pi_i{p_{ii}^{d(\sigma, \tau)}}.
\end{align*}
Let
\begin{equation}
\begin{aligned}
\label{VDTree}
VD&=var(\hat{\mu}_{IPW})-var(\hat{\mu})\\
&=\frac{\mu_2}{n}(\sum_{i=1}^{N}\frac{1}{N^2\pi_i}-1)+\frac{1}{n^2}var(y)\sum_{\sigma\neq \tau}\sum_{i=1}^{N}(\frac{p_{ii}^{d(\sigma, \tau)}}{N^2\pi_i}-\pi_i{p_{ii}^{d(\sigma, \tau)}}).
\end{aligned}
\end{equation}
Since $C_1N\leq d_i\leq C_2N$, we have $\pi_i \leq \frac{C_2}{NC_1}$ for all $i$ and $p_{ij}^{(t)}\leq \frac{1}{C_1N}$ for all $i,j,t$. Thus,
$$
\frac{1}{n^2}\sum_{\sigma\neq \tau}\sum_{i=1}^{N}(\frac{p_{ii}^{d(\sigma, \tau)}}{N^2\pi_i}-\pi_i{p_{ii}^{d(\sigma, \tau)}})\geq \frac{1}{n^2}\sum_{\sigma\neq \tau}\sum_{i=1}^{N}\pi_i{p_{ii}^{d(\sigma, \tau)}} \geq \frac{C_2}{NC_1^2}.
$$
From equation \ref{term1} and \ref{term2}, we know that
$$
\sum_{i=1}^{N}\frac{1}{N^2\pi_i}-1\geq \frac{N^2C_1^2}{C_2^2}var(\pi).
$$
Therefore,
$$
VD\geq \mu_2(\frac{N^2C_1^2}{nC_2^2}var(\pi)-\frac{C_2}{NC_1^2}).
$$
We conclude the proof by taking $C=\frac{C_2^3}{C_3C_1^4}$.
\end{proof}

\section{Proofs of Propositions \ref{prop:equality} and \ref{prop:auto}}

Before the proofs, some more notation is necessary.  Let $|\lambda_1| \ge |\lambda_2| \ge \dots \ge |\lambda_N|$ denote the eigenvalues of $P$.  Denote the $f_1, \dots, f_N :V\rightarrow \mathbb{R}$ as the corresponding the eigenfunctions of $P$.

The following is a proof of Proposition \ref{prop:auto}.
\begin{proof}
Proposition 1 in \cite{khabbazian2016novel} says that
\[Cov(\tilde y(X_{p(\tau)}), \tilde y(X_\tau)) = \sum_{\ell=2}^N \langle \tilde y, f_\ell \rangle_\pi^2 \lambda_\ell.\]
Thus, from the lemma and the definition of $R$, $R$ is a convex combination of the eigenvalues
\[R = \sum_{\ell=2}^N \frac{\langle \tilde y, f_\ell \rangle_\pi^2}{var_\pi(y)} \lambda_\ell.\]
\begin{lemma} \label{lem:levin} 
\[\sum_{\ell=2}^{N}  \frac{\langle \tilde y, f_\ell\rangle_\pi^2}{var_\pi(y)} = 1.\]
\end{lemma}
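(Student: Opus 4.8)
The plan is to recognize Lemma \ref{lem:levin} as a Parseval (Plancherel) identity for the $\langle\cdot,\cdot\rangle_\pi$-orthonormal system $\{f_1,\dots,f_N\}$ furnished by Lemma \ref{lem:spec}, together with the fact that $f_1$ can be taken to be the constant function. Throughout I read the statement with $var_\pi(\tilde y)$ in the denominator, identifying it with the $var_\pi(y)$ written there. First I would observe that, since $\{f_\ell\}_{\ell=1}^N$ is orthonormal with respect to $\langle\cdot,\cdot\rangle_\pi$ and $\dim\mathbb{R}^V=N$, it is in fact an orthonormal basis; hence every $g:V\to\mathbb{R}$ satisfies the Bessel/Parseval equality $\langle g,g\rangle_\pi=\sum_{\ell=1}^N\langle g,f_\ell\rangle_\pi^2$. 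Applying this with $g=\tilde y$ gives
\[ \sum_{\ell=1}^N\langle \tilde y,f_\ell\rangle_\pi^2=\langle \tilde y,\tilde y\rangle_\pi=E_\pi(\tilde y^2). \]

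Next I would isolate the $\ell=1$ term. By Lemma \ref{lem:spec} the eigenfunction for the eigenvalue $1$ is the constant vector $\mathbf{1}$, so $\langle \tilde y,f_1\rangle_\pi=\sum_i\pi_i\tilde y(i)=E_\pi(\tilde y)$ and therefore $\langle \tilde y,f_1\rangle_\pi^2=(E_\pi\tilde y)^2$. Subtracting this term from the displayed identity yields
\[ \sum_{\ell=2}^N\langle \tilde y,f_\ell\rangle_\pi^2=E_\pi(\tilde y^2)-(E_\pi\tilde y)^2=var_\pi(\tilde y), \]
and dividing both sides by $var_\pi(\tilde y)$ gives the claim.

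There is essentially no serious obstacle here; the only points requiring care are (i) recording that $f_1$ may be chosen to be the all-ones vector, which is exactly what lets the $\ell=1$ term produce $(E_\pi\tilde y)^2$, (ii) the notational matter of reading the denominator as $var_\pi(\tilde y)$, and (iii) the degenerate case $var_\pi(\tilde y)=0$ (i.e.\ $\tilde y$ constant), in which the ratio is undefined but the surrounding statements about $R$ and $\sigma_{\tilde\mu}^2$ are vacuous. Once the lemma is established, it immediately shows that the weights $w_\ell=\langle \tilde y,f_\ell\rangle_\pi^2/var_\pi(\tilde y)$ for $\ell=2,\dots,N$ are nonnegative and sum to one, so $R=\sum_{\ell\ge2}w_\ell\lambda_\ell$ is a convex combination of $\lambda_2,\dots,\lambda_N$ lying in $[\lambda_{\min},1]$; Jensen's inequality applied to the convex function $\mathbb{G}$ on $[\lambda_{\min},1]$ then gives $\mathbb{G}(R)\le\sum_{\ell\ge2}w_\ell\,\mathbb{G}(\lambda_\ell)$, and multiplying by $var_\pi(\tilde y)$ completes the proof of Proposition \ref{prop:auto}.
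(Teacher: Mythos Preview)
Your argument is correct and is essentially the same as the paper's: expand $E_\pi(\tilde y^2)$ via Parseval for the $\langle\cdot,\cdot\rangle_\pi$-orthonormal basis $\{f_\ell\}$, identify the $\ell=1$ term with $(E_\pi\tilde y)^2$ since $f_1=\mathbf{1}$, and conclude $var_\pi(\tilde y)=\sum_{\ell\ge 2}\langle\tilde y,f_\ell\rangle_\pi^2$. Your remarks on the $var_\pi(\tilde y)$ vs.\ $var_\pi(y)$ typo and the degenerate constant case are apt, and your derivation of Proposition~\ref{prop:auto} from the lemma via Jensen matches the paper as well.
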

The proof of Lemma \ref{lem:levin} is given on page 342 of \cite{levin2009markov} and is repeated below for completeness.

From Theorem 1 in \cite{treevar},
\begin{equation} \label{eq:var}
\sigma_{\tilde \mu}^2 \ = \  \sum_{\ell=2}^{N} \langle \tilde y, f_\ell \rangle_\pi^2 \mathbb{G}(\lambda_\ell).  
\end{equation}
Applying Jensen's inequality yields
\begin{equation}\label{eq:jensen}
\sigma_{\tilde \mu}^2 =  var_\pi(\tilde y)\sum_{\ell=2}^{N} \frac{\langle \tilde  y, f_\ell \rangle_\pi^2}{var_\pi(\tilde y)} \mathbb{G}(\lambda_\ell) \ge 
var_\pi( \tilde y) \mathbb{G}(R)
.
\end{equation}
\end{proof}

The proof of Proposition \ref{prop:equality} is similar. 
\begin{proof}
In the case when $\tilde y(i) = \mu + \sigma f(i)$, this implies that $\tilde y = \mu f_1 + \sigma f_j$.  By the orthonormality of the eigenvectors, $ \frac{\langle \tilde y, f_\ell \rangle_\pi^2}{var_\pi(\tilde y)} = \textbf{1}\{j = \ell\}$ for $\ell>1$.  As such, the inequality in equation \eqref{eq:jensen} holds with equality.
\end{proof}

The following is a proof of Lemma \ref{lem:levin}
\begin{proof} 
\[var_\pi (\tilde y) =  E_\pi(\tilde y^2) - (E_\pi \tilde y)^2 = \sum_{j=1}^{N}  \langle \tilde y, f_j\rangle_\pi^2 - (E_\pi \tilde y)^2 = \sum_{j=2}^{N}  \langle \tilde y, f_j\rangle_\pi^2.\]
\end{proof}

Proposition \ref{prop:auto} presumes that $\mathbb{G}$ is convex.  Figure \ref{fig:G} plots $\mathbb{G}$ for twenty different Galton-Watson trees with offspring distribution $p(0) = .1, p(1) = .1, p(2) = .3, p(3) = .5$.  This offspring  distribution has expected value 2.2.  The construction of each tree was stopped when it reached 5000 nodes; if it failed to reach 5000 nodes, then the process was started over.  In these simulations and in others not shown, $\mathbb{G}$ is often convex.  When it is not convex, its second derivative is positive away from $-1$.  This simulation was selected because it shows that even when the trees are sampled from the same distribution, even when there is nothing strange about the offspring distribution (e.g. all moments are finite), even when it is a very big tree, even under all of these nice conditions some of the trees have a convex $\mathbb{G}$ and some of the trees have a non-convex $\mathbb{G}$.  Similar results hold when the trees have 500 nodes; the only thing that changes is that the red regions extend slightly further away from $-1$.

\begin{figure}[h]
\centering
\includegraphics[width=\textwidth]{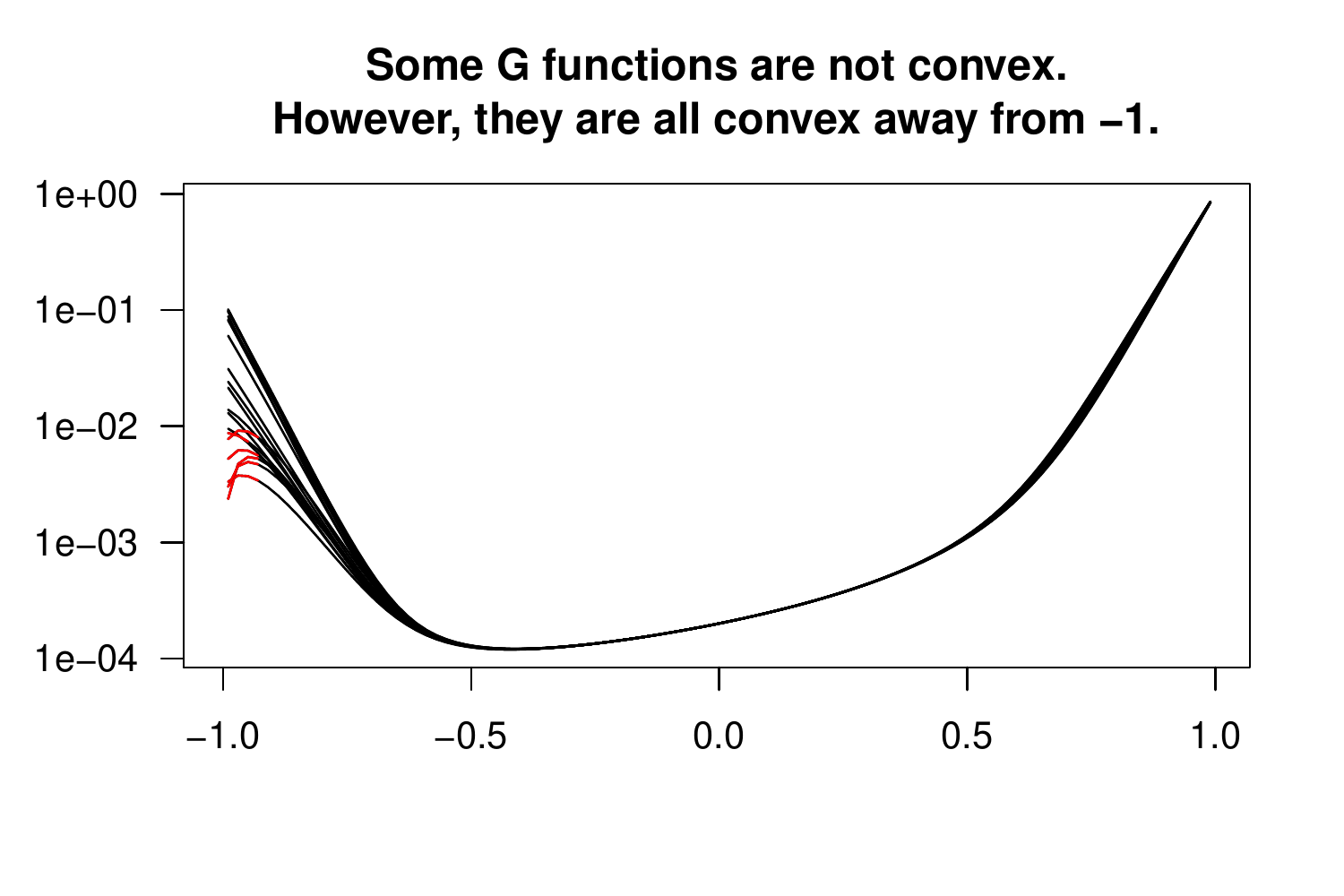}
\caption{Each line corresponds to a function $\mathbb{G}$ for a random Galton-Watson tree.  Some of the curves are not convex; the regions of these functions which have a negative second derivative are highlighted in red. While some of the black lines appear to have a negative second derivative, they do not;  this illusion is due to the log transformation on the vertical axis.}
\label{fig:G}
\end{figure}

%
%
%
%
%
%
%

\bibliographystyle{apalike}
\bibliography{TV-references}

\end{document}